\numberwithin{equation}{section}
\newtheorem{theorem}{Theorem}[section]
\newtheorem{proposition}[theorem]{Proposition}
\newtheorem{lemma}[theorem]{Lemma}
\newtheorem{corollary}[theorem]{Corollary}
\newtheorem{definition}[theorem]{Definition}
\newtheorem{remark}[theorem]{Remark}
\newtheorem*{hall}{Hall's Marriage Theorem}
\DeclareMathOperator{\supp}{supp}
\DeclareMathOperator{\tr}{tr}
\DeclareMathOperator{\Rank}{Rank}
\DeclareMathOperator{\dist}{dist}
\DeclareMathOperator{\diam}{diam}
\DeclareRobustCommand\widecheck[1]{{\mathpalette\@widecheck{#1}}}
\def\@widecheck#1#2{%
    \setbox\z@\hbox{\m@th$#1#2$}%
    \setbox\tw@\hbox{\m@th$#1%
       \widehat{%
          \vrule\@width\z@\@height\ht\z@
          \vrule\@height\z@\@width\wd\z@}$}%
    \dp\tw@-\ht\z@
    \@tempdima\ht\z@ \advance\@tempdima2\ht\tw@ \divide\@tempdima\thr@@
    \setbox\tw@\hbox{%
       \raise\@tempdima\hbox{\scalebox{1}[-1]{\lower\@tempdima\box
\tw@}}}%
    {\ooalign{\box\tw@ \cr \box\z@}}}
\renewcommand\H{\mathcal{H}}
\renewcommand\L{\mathrm{L}}
\newcommand\R{\mathbb R}
\newcommand\N{\mathbb N}
\newcommand\C{\mathbb C}
\newcommand\Z{\mathbb Z}
\newcommand\G{\mathbb{G}}
\newcommand\M{\mathbb M}
\newcommand\e{\mathrm{e}}
\newcommand{\la}{\langle}
\newcommand{\ra}{\rangle}
\renewcommand\P{\mathbb P}
\newcommand\E{\mathbb E}
\newcommand\cE{\mathcal{E}}
\newcommand\cL{\mathcal{L}}
\newcommand\cB{\mathcal{B}}
\newcommand\cN{\mathcal{N}}
\newcommand\cG{\mathcal{G}}
\newcommand\cV{\mathcal{V}}
\newcommand\cA{\mathcal{A}}
\newcommand\cF{\mathcal{F}}
\newcommand\cS{\mathcal{S}}
\newcommand\cY{\mathcal{Y}}
\newcommand\cR{\mathcal{R}}
\newcommand\eps{\varepsilon}
\newcommand\vphi{\varphi}
\newcommand\vtheta{\vartheta}
\newcommand\vrho{\varrho}
\newcommand{\pr}{\prime}
\newcommand\what{\widehat}
\newcommand\wtilde{\widetilde}
\newcommand\ttau{\widetilde{\tau}}
\newcommand\tzeta{\widetilde{\zeta}}
\newcommand{\partialin}{\partial_{\mathrm{in}}}
\newcommand{\partialex}{\partial_{\mathrm{ex}}}
\newcommand{\bom}{{\boldsymbol{{\omega}}}}
\newcommand\beq{\begin{equation}}
\newcommand\eeq{\end{equation}}
\newcommand{\abs}[1]{\left\lvert #1 \right\rvert}
\newcommand{\norm}[1]{\left\lVert #1 \right\rVert}
\newcommand{\scal}[1]{\left\langle #1 \right\rangle}
\newcommand{\set}[1]{\left\{ #1 \right\}}
\newcommand{\pa}[1]{\left( #1 \right)}
\newcommand{\fl}[1]{\left\lfloor #1 \right\rfloor}
\newcommand{\br}[1]{\left [ #1 \right]}
\newcommand\La{\Lambda}
\newcommand\Th{\Theta}
\newcommand\Ups{\Upsilon}
\newcommand{\eq}[1]{\eqref{#1}}
\newcommand{\up}[1]{^{(#1)}}
\newcommand{\qtx}[1]{\quad\text{#1}\quad}
\newcommand{\mqtx}[1]{\; \ \text{#1}\; \  }
\newcommand{\sqtx}[1]{\;\text{#1}\;}
\newcommand\Chi{\raisebox{.2ex}{$\chi$}}
\begin{document}

\title[An eigensystem approach to Anderson localization]{An eigensystem approach to Anderson localization}

\author{Alexander Elgart}
\address[A. Elgart]{Department of Mathematics; Virginia Tech; Blacksburg, VA, 24061, USA}
 \email{aelgart@vt.edu}

\author{Abel Klein}
\address[A. Klein]{University of California, Irvine;
Department of Mathematics;
Irvine, CA 92697-3875,  USA}
 \email{aklein@uci.edu}

\thanks{A.E. was  supported in part by the NSF under grant DMS-1210982.}
\thanks{A.K. was  supported in part by the NSF under grant DMS-1001509.}


\begin{abstract} We introduce a new approach for proving  localization (pure point spectrum with exponentially decaying eigenfunctions, dynamical localization) for the Anderson model at high disorder. In contrast to the usual strategy, we do not study  finite volume  Green's functions.   Instead, we perform a multiscale analysis based on finite volume eigensystems (eigenvalues and eigenfunctions). Information about eigensystems at a given scale is used to derive information about eigensystems at larger scales.  This eigensystem multiscale analysis
treats all energies of the finite volume operator at the same time, establishing level spacing and  localization of  eigenfunctions in a fixed box with high probability.
  A new feature is the  labeling of the eigenvalues and eigenfunctions  by the sites of the box. 
\end{abstract}

\maketitle

\tableofcontents

\section*{Introduction}

The Anderson model \cite{And} is the prototype for the study of localization properties of quantum states of
single electrons in disordered solids.  It is given by a  random Schr\"odinger operator
$H_{\eps,\bom}= -\eps \Delta + V_{\bom}$ acting on $\ell^2(\Z^d)$, where $\Delta$ is the discrete Laplacian, $V_{\bom}$ is a random potential, and   $\eps >0$ is the reciprocal of the disorder parameter  (see Definition~\ref{defineAnd} for the details). The basic phenomenon, known as the Anderson localization, is that  high disorder ($\epsilon\ll1$) leads to  localization of electron states. Its most basic  manifestation is that  $H_{\eps,\bom}$ has pure point spectrum with   exponentially decaying eigenfunctions  with probability one:   for almost every configuration of the random potential,  $H_{\eps,\bom}$ has a complete orthonormal basis of eigenvalues  $\set{\psi_{\eps,\bom,j}}_{j\in \N}$ such that
\[
\abs{\psi_{\eps,\bom,j}(x)}\le C_{\eps,\bom,j} \e^{-m_\eps\norm{x}}\qtx{for all} x \in \Z^d \qtx{and} j\in \N,
\]
where $m_\eps>0$, the reciprocal of the  localization length, is nonrandom and independent of $j\in \N$.   Other manifestations include dynamical localization and SULE (semi-uniformly localized eigenfunctions).  (See, for example,  \cite{AW,Ki,Kle}.)

These manifestations of localization suggest that 
 truncation of the system to a finite  box $\Lambda_L$ of side  $L\gg \frac 1 {m_\eps}$ should not affect localization properties deep inside the box. This leads to the expectation  that if  one could establish an appropriate  analogue of localization for a sequence of  boxes $\Lambda_{L_n}$, with $L_n\rightarrow\infty$, then localization  should hold in the whole of $\Z^d$ as well. This strategy can be indeed be implemented and is known as the multiscale analysis. In a nutshell, the multiscale analysis uses as input localizing properties at  scale $L_n$  to establish localizing properties at  scale $L_{n+1}$. The question is what kind of information we want to carry from scale to scale. In the traditional approach to Anderson localization, such information is encoded in the decay properties of the underlying Green's function. For single-particle systems, the Green's function $G_{\eps,\bom}(x,y; \lambda)=\scal{\delta_x,(H_{\eps,\bom}-\lambda)^{-1}\delta_y}$ is an extremely convenient object to study. Its usefulness comes from  two key properties: (a)  Green's functions for boxes at different scales are related by  the first resolvent identity;   (b) knowledge of the decay properties of the Green's functions for all energies (or for all energies in a fixed interval)  can be translated into localization properties of the eigenfunctions (in the fixed interval).
 
 The well known methods developed for proving localization for random Schr\"odinger operators, the multiscale analysis \cite{FS,FMSS,Dr,DK,Sp,CH,FK,GKboot,Kle,BK,GKber} and the fractional moment method \cite{AM,A,ASFH,AENSS,AW},  are based on the study of   finite volume  Green's functions.  Multiscale analyses based on  Green's functions  are performed  either at a fixed energy in a single box, or for all energies but with two boxes with an `either or' statement for each energy. 
 
Recently there has been an intensive effort in the physics community to create a coherent theory of many-body localization (MBL); see, e.g.,  \cite{FlA, AlGKL, GoMP, BAA,BurO,OH,PH,NH, FrWBSE,EFG}.  On the mathematical level, not much progress have been made, besides studies of exactly solvable models; see, e.g., \cite{HSS,PS,AS}.  One of the key  difficulties in studying  MBL is associated with the fact that  Green's functions do not appear to be such a   valuable  tool as in the single-particle theory, due to the product state nature of the underlying Hilbert space.  The objects that do appear in the most physical descriptions of MBL are the eigenstates of the system. This suggests that finding a more direct, eigensystem based approach to  localization, even in the single-particle case, could be  beneficial for understanding MBL. Such approach has been  advocated by Imbrie in a context of both  single and many-body localization \cite{Im1,Im2}. 

In this paper we provide a mathematically rigorous implementation of  a multiscale analysis for the Anderson model at high disorder based on  finite volume eigensystems (eigenvalues and eigenfunctions). In contrast to the usual strategy, we do not study  finite volume  Green's functions. 
 Information about eigensystems at a given scale is used to derive information about eigensystems at larger scales. This eigensystem multiscale analysis
treats all energies of the finite volume operator at the same time, giving a complete picture in a fixed box.  For this reason it does not use a Wegner estimate as in a Green's functions multiscale analysis, it uses instead   a probability estimate for level spacing  derived by Klein and Molchanov  from Minami's estimate \cite[Lemma~2]{KlM}. 

A new feature provided by the eigensystem multiscale analysis is the  labeling of the eigenvalues and eigenfunctions  by the sites of the box.  We establish this labeling by the multiscale analysis using an argument based on Hall's Marriage Theorem  (e.g., \cite[Chapter~2]{BDM}).

Our main result, stated in Theorem \ref{thmMSA} can be loosely described as follows:  If  $\epsilon\ll1$,  with high probability the eigenvalues and eigenfunctions of $H_{\eps,\bom,\La_L}$, the restriction of $H_{\eps,\bom}$ to a finite  box $\Lambda_L$ of side  $L\gg 1$,  can be labeled  by the sites of $\Lambda_L$, i.e., they can be written in the form $\set{(\vphi_x,\lambda_x)}_{x\in \La_L}$, with the eigenvalues $\set{\lambda_x}_{x\in \La_L}$ satisfying a level spacing condition, 
and the eigenfunctions $\set{\vphi_x}_{x\in \La_L}$ exhibiting localization around the label, i.e., for all $x \in \La_L$ we have 
\[
\abs{\vphi_x(y)}\le \e^{-m_\eps\norm{y-x}}\qtx{for all} y \in \La_L \qtx{with} \norm{y-x}\ge L^\tau,
\]
where $m_\eps >0$ is nonrandom  and $0<\tau<1$ is a fixed parameter.

Theorem \ref{thmMSA} yields  Anderson localization  (pure point spectrum with exponentially decaying eigenfunctions, dynamical localization) for $H_{\eps,\bom}$. It is our  hope that the eigensystem multiscale analysis is a step towards developing new methods that may be useful in the study of MBL.

We also investigate the connection between  the eigensystem multiscale analysis and the Green's functions multiscale analysis.
We  show that the  conclusions of the Green's functions multiscale analysis can be derived from  the conclusions of the  eigensystem multiscale analysis.  Conversely, we show that the conclusions of the  eigensystem multiscale analysis can be derived from the Green's functions energy interval  multiscale analysis  with the addition of  the labeling argument based on Hall's Marriage Theorem we present in this paper.

The results in this paper concern localization for the Anderson model in the whole spectrum, which in practice requires high disorder. For the Anderson model, the Green's function methods for proving localization can be applied in energy intervals, and hence
 localization has also been proved  at fixed disorder  in an interval at the edge of the spectrum (or, more generally, in the vicinity of a spectral gap), and  for  a fixed interval of energies at the bottom of the spectrum  for sufficiently  high disorder.  (See, for example, \cite{HM,KSS,FK1,ASFH,GKfinvol,Ki,GKber,AW}).
In a forthcoming paper \cite{EK}, we generalize the version of the eigensystem multiscale analysis  presented  in this paper to  establish  localization   for the Anderson model in an energy interval. This extension yields  localization at fixed disorder  on an interval at the edge of the spectrum (or in  the vicinity of a spectral gap), and  at  a fixed interval at the bottom of the spectrum  for sufficiently  high disorder.  

Klein and Tsang \cite{KlT} have used a bootstrap argument as in \cite{GKboot,Kle} to enhance the eigensystem multiscale analysis for the Anderson model at high disorder developed in this paper.   The  only input required  to initiate the eigensystem bootstrap multiscale analysis  is polynomial decay of the finite volume eigenfunctions for  sufficiently large scale with some minimal, scale-independent probability. It yields a result analogous to \eq{concMSA} in Theorem~\ref{thmMSA}, for all $0<\xi<1$, with $\eps_0$ independent of $\xi$.

Our main results and definitions are stated in Section~\ref{secmain}.  Theorem~\ref{thmMSA} is our main result, the conclusions of the eigensystem multiscale analysis, which  we prove  in   Section~\ref{secEMSA}.  Theorem~\ref{thmloc}, derived from  Theorem~\ref{thmMSA}, encapsulates localization for the Anderson model. Corollary~\ref{corloc} contains typical statements  of Anderson localization and dynamical localization. Theorem~\ref{thmloc} and Corollary~\ref{corloc} are proven in Section~\ref{seclocproof}. In Section~\ref{secprobest} we  adapt an  estimate for the probability of level spacing  derived by Klein and Molchanov (reviewed in Appendix~\ref{appspspaced}) to our setting.
Section~\ref{secprep} contains definitions and lemmas required for the proof of the eigensystem multiscale analysis given in  Section~\ref{secEMSA}. The connection with  the Green's functions multiscale analysis is established in Section~\ref{secGreen}.  Hall's Marriage Theorem, used in  Section~\ref{secEMSA} for labeling eigenvalues and eigenfunctions, is reviewed in Appendix~\ref{appHall}.

\section{Main results}\label{secmain}

We start by introducing the Anderson model in a convenient form. 

\begin{definition}\label{defineAnd} The Anderson model is the
 random Schr\"odinger  operator
\beq \label{defAnd}
H_{\eps,\bom} :=  -\eps \Delta + V_{\bom} \quad \text{on} \quad  \ell^2(\Z^d), 
\eeq 
where
\begin{enumerate}
\item  $\Delta$ is the  (centered) discrete  Laplacian:  
 \begin{equation}\label{defDelta}
  (\Delta \varphi)(x):=  \sum_{\substack{y\in\Z^d\\ |y-x|=1}} \varphi(y)  \qtx{for} \varphi\in\ell^2(\Z^d).
\end{equation}

\item   $V_{\bom}$ is a random potential:      $V_{\bom}(x)= \omega_x$ for  $ x \in \Z^d$, where
$\bom=\{ \omega_x \}_{x\in
\Z^d}$ is a family of independent 
identically distributed random
variables,  whose  common probability 
distribution $\mu$ is non-degenerate with bounded support.  We assume $\mu$ is H\"older continuous of order $\alpha \in ( \frac 12,1]$: 
 \beq\label{Holdercont}
S_\mu(t) \le K t^\alpha \qtx{for all} t \in [0,1],
\eeq
where $K$ is  a constant and  $S_\mu(t):= \sup_{a\in \R} \mu \set{[a, a+t]} $ is the concentration function of the measure $\mu$.

\item  $\eps >0$ is the reciprocal of the disorder parameter (i.e., $\frac
 1 \eps$ is the disorder parameter).

\end{enumerate}
\end{definition}

We recall that  $\sigma(-\Delta)=[-2d,2d]$ and  (see \cite[Theorem~3.9]{Ki})
\beq\label{Sigma}
 \sigma (H_{\eps,\bom})= \Sigma_\eps:= [-2\eps d,2\eps d] + \supp \mu \quad \text{with probability one}.
\eeq

By a discrete  Schr\"odinger operator we will always mean an operator 
 $H=-\eps\Delta +V$ on $\ell^2(\Z^d)$, where $V$ is a  bounded potential and $\eps \ge 0$.

We use the following definitions and notation:
\begin{itemize}
\item 
 If $x=(x_1,x_2,\ldots, x_d)\in \R^d$, we set $\abs{x}=\abs{x}_2= \pa{\sum_{j=1}^dx_j^2}^{\frac 12}$,  $\abs{x}_1=\sum_{j=1}^d \abs{x_j}$, and $\norm{x}=\abs{x}_\infty= \max_{j=1,2,\ldots,d} \abs{x_j}$. If $x\in \R^d$ and   
$\Xi\subset \R^d$, we set $\dist (x,\Xi)= \inf_{y\in \Xi} \norm{y-x}$.  The diameter of a set $\Xi\subset \R^d$ is given by $\diam \Xi= \sup_{x,y \in \Xi} \norm{y-x}$.

 \item 
We consider $\Z^d$ as a subset of $\R^d$ and  use  boxes in $\Z^d$ centered at points in $\R^{d}$.    The box in $\Z^d$  of side $L>0$  centered at $x\in \R^{d}$ is given by
\beq 
\La_L(x)=\La^\R_L(x)\cap \Z^d=  \set{y \in \Z^d;\  \norm{y-x} \le  \tfrac{L}{2}}  \subset \Z^d,
\eeq
where $\La^\R_L(x)$ is the box in $\R^d$  of side $L>0$  centered at $x\in \R^{d}$, given by
 \beq 
\La^\R_L(x)= \set{y \in \R^d;\  \norm{y-x} \le  \tfrac{L}{2}} \subset \R^d.
\eeq
By a box  $\La_L$ we will mean a box $\La_L(x)$ for some $x\in \R^d$.  Note that for all scales  $L\ge 2$ and  $x\in \R^d$ we have
 \beq 
 (L-2)^{d}<  \pa{ 2\fl{\tfrac L 2}}^d \le\abs  {\La_L(x)}\le \pa{ 2\fl{\tfrac L 2}+1}^d\le (L+1)^{d}. \eeq

  \item 
Given   $\Phi\subset \Theta\subset \Z^d$, we consider $\ell^2(\Phi)\subset \ell^2(\Theta)$ by extending functions on $\Phi$ to functions on $\Theta$ that are identically $0$ on $\Theta\setminus \Phi$. 
If  $\vphi$ is a function on   $\Theta$,  we write $\vphi_{\Phi}=
 \Chi_{\Phi}\vphi $.  If $\Th \subset \Z^d$ and $\vphi \in  \ell^2(\Theta)$,  we let $\norm{\vphi}=\norm{\vphi}_2$ and   $\norm{\vphi}_\infty= \max_{y \in \Theta} \abs{\vphi(y)}$.
  
\item    We will work with finite volume operators.
  If  $K$ is  a bounded operator on  $\ell^2(\Z^d)$ and  $\Theta\subset \Z^d$, we let $K_\Theta$ be the restriction of $ \Chi_\Theta K\Chi_\Theta$ to $\ell^2(\Theta)$.
  
  \item
By a  constant  we  always mean a finite constant.  We will use  $C_{a,b, \ldots}$, $C^{\pr}_{a,b, \ldots}$,  $C(a,b, \ldots)$, etc., to  denote a constant depending  on the parameters
$a,b, \ldots$. Note that $C_{a,b, \ldots}$ may denote different constants in different equations, and even in the same equation.

\item  If $\cE$ is an event, we  denote its complementary event by $\cE^c$.
  
  \end{itemize}

 We fix   $\xi,\zeta, \beta,\tau \in (0,1)$ and $\gamma >1$ such that
\begin{gather}\label{ttauzeta0}
0<\xi< \zeta<\beta<\frac 1 \gamma <1<\gamma < \sqrt {\tfrac \zeta \xi} \mqtx{and}   \max\set{ \gamma \beta, \tfrac {(\gamma-1)\beta +1}{\gamma}  }      <  \tau <1,
\end{gather}
and note that  
\beq\label{ttauzeta}
0<\xi<\xi\gamma^2< \zeta < \beta <\frac \tau \gamma   <\frac 1 \gamma <\tau <1< \frac {1-\beta}{\tau-\beta} < \gamma <\frac \tau \beta.
\eeq
We also take
\beq\label{ttauzeta2}
\tzeta= \frac {\zeta +\beta}2  \in (\zeta, \beta)  \qtx{and} {\ttau}= \frac {1 +\tau}2  \in (\tau,1).
\eeq
Given a scale $L\ge 1$,  we set
\[
 \ell= L^{\frac 1 \gamma} \; (\text{i.e.,}\; L=\ell^\gamma), \quad  L_{\tau}=\fl{L^{\tau}}, \qtx{and} L_{{\ttau}}= \lfloor{L^{{\ttau}}}\rfloor.
\]

  The following definitions are for a fixed discrete  Schr\"odinger operator
 $H_\eps$.
  We omit $\eps$ from the notation (i.e., we write $H$ for $H_\eps$, $H_\Th$ for $H_{\eps,\Th}$) when it does not lead to confusion.
 
 \begin{definition}
Given $\Theta \subset \Z^d$,  we call $(\vphi,\lambda)$ an eigenpair for $H_\Th$ if $\vphi\in \ell^2(\Theta)$ with $\norm{\vphi}=1$,  
 $\lambda \in \R$, and $H_\Theta \vphi=\lambda \vphi$.  (In other words,    $\lambda$ is an eigenvalue for $H_\Th$ and $\vphi$ is a corresponding normalized eigenfunction.)  A collection $\set{(\vphi_j,\lambda_j)}_{j\in J}$ of eigenpairs for $H_\Th$ will be called an eigensystem for $H_\Th$ if   $\set{\vphi_j}_{j\in J}$ is an orthonormal basis for $\ell^2(\Th)$. If   all eigenvalues of  $H_\Th$ are simple, we can rewrite
 the eigensystem as  $\set{(\psi_\lambda,\lambda)}_{\lambda\in \sigma(H_\Th)}$.
  \end{definition}
 
 \begin{definition} Let $\La_L$ be a box, $x\in \La_L$, and  $m>0$.  Then   $\vphi\in \ell^2(\La_L)$  is said to be $(x,m)$-localized if  $\norm{\vphi}=1$  and
\beq\label{hypdec}
\abs{\vphi(y)}\le \e^{-m\norm{y-x}}\qtx{for all} y \in \La_L \qtx{with} \norm{y-x}\ge L_\tau.
\eeq
In particular,
\beq\label{hypdec2}
\abs{\vphi(y)}\le \e^{mL_\tau}\e^{-m\norm{y-x}}\qtx{for all} y \in \La_L.\eeq
\end{definition}

\begin{definition}  Let $R>0$.   A finite set $\Theta \subset \Z^d$ will be called  
$R$-level spacing  for $H$ if  $\abs{ \sigma(H_{\Theta})}=\abs{\Theta}$ (i.e., all eigenvalues of $H_\Th$ are simple) and  $\abs{\lambda- \lambda^\pr}\ge \e^{-{R}^\beta}$ for all $\lambda, \lambda^\pr\in \sigma(H_{\Theta})$, $\lambda\ne  \lambda^\pr$.

In the special case when $\Theta$ is a box $\La_L$  and $R=L$,  we will simply say that
 $\La_L$ is level spacing  for $H$.
\end{definition}

\begin{definition}    Let $m>0$.   A box $\La_L$ will be called $m$-localizing for $H$ if the following holds:

\begin{enumerate}
\item  $\La_L$ is level spacing  for $H$.
\item  There exists an  $m$-localized eigensystem for $H_{\La_L}$, that is,    an  eigensystem   $\set{(\vphi_x, \lambda_x)}_{x\in \La_L}$ for $H_{\La_L}$ such that $\vphi_x$ is  $(x,m)$-localized for all $x \in  \La_L$.

\end{enumerate}

 \end{definition}

 The eigensystem multiscale analysis yields the following theorem.

  \begin{theorem} \label{thmMSA}  Let $H_{\eps,\bom}$ be an Anderson model. There exists a finite scale   $L_0$ such that  for  all    $0< \eps\le \eps_0= \frac 1 {4d} \e^{-L_0^{ \beta}}$ we have 
  \begin{align}\label{concMSA}
\inf_{x\in \R^d} \P\set{\La_{L} (x) \sqtx{is}  \tfrac {m_{\eps,L_0} }4 \text{-localizing for} \; H_{\eps,\bom}} \ge 1 -  \e^{-L^\xi} \mqtx{for all} L\ge L_0^\gamma,
\end{align}
 where
 \beq\label{mepsL}
m_{\eps,L}= \log \pa{1 + \tfrac {\e^{-L^\beta}}{2d\eps} }\qtx{for} \eps>0 \qtx{and}L\ge 1.
\eeq
  \end{theorem}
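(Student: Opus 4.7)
The plan is to prove Theorem \ref{thmMSA} by induction along a rapidly growing sequence of scales $L_{k+1}=L_k^\gamma$ starting from some initial scale $L_0$, carrying forward at each scale the statement that boxes are $\frac{m_{\eps,L_0}}{4}$-localizing with probability at least $1-\e^{-L^\xi}$. The choice of parameters in \eqref{ttauzeta} is tuned so that the probabilistic error shrinks across scales ($\gamma^2\xi<\zeta<\beta$), and so that the loss in the exponential rate is summable.

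The base case is essentially non-probabilistic in its localization aspect: for $0<\eps\le\eps_0=\frac1{4d}\e^{-L_0^\beta}$, the Schr\"odinger operator $H_{\eps,\bom,\La_{L_0}}$ is a small perturbation of the diagonal random operator $V_{\bom}\restriction_{\La_{L_0}}$, whose eigensystem is literally $\set{(\delta_x,\omega_x)}_{x\in\La_{L_0}}$, already labeled by sites. Provided $\La_{L_0}$ is level spacing for $H_{\eps,\bom}$, an event with high probability by the Klein-Molchanov estimate adapted in Section~\ref{secprobest}, Rayleigh-Schr\"odinger perturbation theory produces an eigensystem $\set{(\vphi_x,\lambda_x)}_{x\in\La_{L_0}}$ with $\vphi_x$ concentrated at $x$ at rate $m_{\eps,L_0}=\log(1+\e^{-L_0^\beta}/(2d\eps))$. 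The definition of $\eps_0$ is exactly what is required for the geometric series in the perturbation to converge and yield the claimed $(x,m_{\eps,L_0})$-localization.

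For the inductive step from $\ell=L^{1/\gamma}$ to $L$, first assemble the good event $\cE_L$ by intersecting (a) the event that $\La_L$ itself is level spacing, and (b) the event that the set $\cB\subset\La_L$ of sites $y$ whose sub-box $\La_\ell(y)$ fails to be $m$-localizing is sparse enough. A union bound using $|\La_L|\le(L+1)^d$ sub-boxes combined with the inductive bound $\e^{-\ell^\xi}$ and the Klein-Molchanov level spacing bound shows that $\P(\cE_L)\ge 1-\e^{-L^\xi}$, thanks to $\xi\gamma<\zeta$. On $\cE_L$, given an eigenpair $(\psi,E)$ of $H_{\eps,\bom,\La_L}$, pick $y$ maximizing $|\psi(y)|$; if $\La_\ell(y)$ is $m$-localizing then the level spacing at scale $\ell$ forces $E$ to be exponentially close to a unique sub-box eigenvalue $\lambda_x^{(\ell)}$, and a chaining argument through good sub-boxes (the resolvent-free analogue of the first resolvent identity) propagates the sub-box $(x,m)$-localization into decay of $|\psi(z)|$ down to the scale $L_\tau$, with only a slight loss in the rate (controlled since $\ttau<1$ and the geometric series $\sum_k m_{\eps,L_k}^{-1}$ is essentially summable).

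Finally, one must upgrade the assignment $(\psi,E)\mapsto x$ into an honest bijection $x\mapsto(\vphi_x,\lambda_x)$, and it is here that Hall's Marriage Theorem (Appendix~\ref{appHall}) enters. For each $x\in\La_L$ define the neighborhood $N(x)$ as the set of eigenpairs $(\psi,E)$ whose mass is concentrated in $\La_\ell(x)$ at the appropriate level; the previous step, combined with level spacing at scale $L$, shows that distinct sites yield essentially disjoint candidate labels and that every $S\subset\La_L$ satisfies $|N(S)|\ge|S|$. Hall's theorem then produces a perfect matching, giving the required site-labeled $m_{\eps,L_0}/4$-localized eigensystem. The hard part is this last step: turning the sub-box localization into full-box decay of $\psi$ while simultaneously ruling out that two distinct eigenpairs of $H_{\eps,\bom,\La_L}$ could be forced to share a label. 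This is precisely why the Wegner-type level spacing input of Klein-Molchanov and the bipartite-matching flexibility of Hall's theorem are both essential and must be deployed in tandem.
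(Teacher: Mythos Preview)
Your outline has the right shape but contains two interlocking gaps that the paper's proof addresses with specific constructions you have not mentioned.

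First, the probabilistic induction as you describe it does not close. You propose to carry the bound $1-\e^{-L^\xi}$ from scale $\ell$ to scale $L=\ell^\gamma$, but $\e^{-\ell^\xi}=\e^{-L^{\xi/\gamma}}$ is strictly larger than $\e^{-L^\xi}$, so no union bound over the $\sim(L/\ell)^d$ sub-boxes can recover $\e^{-L^\xi}$; the relation $\xi\gamma<\zeta$ does not help here. The paper instead runs the induction along the discrete sequence $L_{k+1}=L_k^\gamma$ carrying the sharper bound $1-\e^{-L_k^\zeta}$ (Proposition~\ref{thmfullMSA}), and crucially allows up to $N_\ell\sim\ell^{(\gamma-1)\tzeta}$ disjoint bad sub-boxes: the probability that $N_\ell+1$ disjoint sub-boxes simultaneously fail is bounded using independence, which is what makes the estimate self-improve (see \eqref{probBN}). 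Only afterward, in a separate interpolation step, does one pass to arbitrary $L\ge L_0^\gamma$ by covering $\La_L$ with sub-boxes of side $\wtilde\ell=L_{k-1}$; there one can afford to require \emph{all} sub-boxes good, because $\e^{-\wtilde\ell^\zeta}\le\e^{-L^{\zeta/\gamma^2}}\le\e^{-L^\xi}$ thanks to $\xi\gamma^2<\zeta$. This interpolation is also where the rate drops from $m_{\eps,L_0}/2$ to $m_{\eps,L_0}/4$.

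Second, and more seriously, your deterministic step does not say what happens when the site $y$ maximizing $|\psi(y)|$ lies in a bad region. Since bad sub-boxes are unavoidable in the induction, some eigenfunctions of $H_{\La_L}$ will be concentrated there, and for those you have no good sub-box eigensystem to compare against. The paper handles this with the buffered subset construction (Definition~\ref{defbuff}, Lemma~\ref{lembad}): each $\G_2$-connected cluster of bad boxes is enclosed in a set $\Ups_r$ surrounded by a shell of good boxes, one imposes as an additional probabilistic input that every such $\Ups_r$ is $L$-level spacing (the event $\cS_N$ in \eqref{probSN}), and then the eigensystem of $H_{\Ups_r}$ plays the role that good sub-box eigensystems play elsewhere. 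One then shows $\sigma(H_{\La_L})=\sigma_\cG(H_{\La_L})\cup\sigma_\cB(H_{\La_L})$ (see \eqref{claimsp}), the second piece coming from the buffered subsets, and Hall's Marriage Theorem is applied with a neighborhood function $\cN(x)$ that distinguishes whether $x$ lies in some $\widehat\Ups_{r,\tau}$ or not. Verifying Hall's condition requires the decay estimate \eqref{decoutx6}, which in turn rests on both Lemma~\ref{lem:ident_eigensyst} and Lemma~\ref{lembad}. Your sketch of the chaining and matching steps does not account for this dichotomy, and without it the argument cannot proceed past the first bad box encountered.
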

 Note that for  $ 0<\eps\le \eps_0= \frac 1 {4d} \e^{-L_0^{ \beta}}$ we have  
 \beq\label{mepsL33}
m_{\eps,L_0}= \log \pa{1 + \tfrac {2\eps_0}{\eps} }\ge \log 3.
\eeq

Theorem~\ref{thmMSA} is proved in Section~\ref{secEMSA}. It  yields all the usual forms of  localization.  To see this we need to introduce  some notation and definitions.
We fix $\nu > \frac d 2$, which will be usually omitted from the notation. Given $a \in \Z^{d}$, we let 
 $T_{a} $  be  the operator on   $\ell^2(\Z^d)$ given by multiplication by the function
$T_{a}(x):= \la x-a\ra^{\nu}$, where  $ \la x\ra= \pa{1 + \norm{x}^2}^{\frac 12}$. Note that  $
\| T_{a} T_{b}^{-1} \| \le 2^{\frac  {\nu}  2} \la  a -b \ra^{\nu}$ since  $\langle a +b \rangle \le \sqrt{2}\langle a \rangle
\langle b\rangle$.

A function
$\psi\colon \Z^d \to \C$ will be called a $\nu$-generalized eigenfunction for  the discrete  Schr\"odinger operator $H_\eps$ if $\psi$ is a generalized eigenfunction as in Definition~\ref{defgeneig}
and $\norm{T_0^{-1} \psi}<\infty$.  (Note  $\norm{T_0^{-1} \psi}<\infty$ if and only if  $\norm{T_a^{-1} \psi}<\infty$ for all $a\in \Z^d$.)
We let 
$\cV_\eps({\lambda})$ denote the collection of  $\nu$-generalized eigenfunctions for $H_\eps$ with generalized eigenvalue ${\lambda} \in \R$.
 (We will usually drop  $\nu$ from the notation.)

Given   ${\lambda} \in \R$ and $a,b \in \Z^{d}$, we set
 \begin{align} \label{defGWx}
W_{\eps,\lambda}\up{a}({b}):=\begin{cases} 
\sup_{\psi \in\cV_\eps({\lambda}) }
\ \frac {\abs{\psi(b)}}
{\|T_{a}^{-1}\psi \|}&
 \text{if $\cV_\eps({\lambda})\not=\emptyset$}\\0 & \text{otherwise}\end{cases}.
\end{align}
Note that for $a,b,c \in \Z^d$ we have 
\begin{equation}\label{boundGW}
W\up{a}_{\eps,\lambda}({a})\le 1,\quad  W\up{a}_{\eps,\lambda}({b})\le\la b-a\ra^\nu,  \qtx{and} W\up{a}_{\eps,\lambda}({c})\le 2^{\frac \nu 2} \la b-a\ra^\nu 
W\up{b}_{\eps,\lambda}({c}).
\end{equation}
  
  The following theorem, derived from  Theorem~\ref{thmMSA}, encapsulates localization for the Anderson model.

\begin{theorem}\label{thmloc}   Suppose Theorem~\ref{thmMSA} holds  for an  Anderson model $H_{\eps,\bom}$,   let $L_0$  be the scale  given in Theorem~\ref{thmMSA}, and let  $\eps_0= \frac 1 {4d} \e^{-L_0^{ \beta}}$.   There exists a finite scale   $\cL_1 $ such that, given  $\cL_1\le \ell \in 2\N$ and  $a\in \Z^d$, then for  all $0< \eps\le \eps_0$, setting $m_\eps=  \frac {m_{\eps,L_0} }4\ge \frac {\log 3} 4$,    there exists an event  $\cY_{\eps,\ell,a}$ with the following properties:
  
  \begin{enumerate}
\item $\cY_{\eps,\ell,a}$  depends only on the random variables $\set{\omega_{x}}_{x \in \Lambda_{5\ell}(a)}$,  and  
\beq\label{cUdesiredint}
  \P\set{\cY_{\eps,\ell,a} }\ge  1 - C_{\eps_0} \e^{-\ell^\xi}.
  \eeq

\item If  $\bom \in \cY_{\eps,\ell,a}$,   for all  $\lambda \in \R$ we have that 
\beq \label{locimpl}
\max_{b\in \La_{\frac \ell 3}(a)} W\up{a}_{\eps,\bom,\lambda}(b)> \e^{-\frac 1 4 {m_\eps} \ell}\quad  \Longrightarrow \quad \max_{y\in A_\ell(a)} W\up{a}_{\eps,\bom,\lambda}(y)\le \e^{-\frac 7 {132} m_\eps \norm{y-a}},
\eeq
where
 \beq\label{Aell}
 A_\ell(a):=  \set{y\in \Z^d; \ \tfrac 8 7 \ell \le   \norm{y-a}\le \tfrac {33}{14} \ell}.
  \eeq
In particular, for all $\bom \in \cY_{\eps,\ell,a}$ and  $\lambda \in \R$ we have
\beq  \label{WW}
W\up{a}_{\eps,\bom,\lambda}(a)W\up{a}_{\eps,\bom,\lambda}(y)\le 
 \e^{- \frac 7 {132} m_\eps \norm{y-a}}\mqtx{for all} y\in A_\ell(a).
\eeq

   \end{enumerate}
  
  \end{theorem}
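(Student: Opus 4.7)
The plan is to deduce Theorem~\ref{thmloc} from Theorem~\ref{thmMSA} by applying the eigensystem multiscale output to a single scale-$5\ell$ box around $a$ and using its $m_\eps$-localized eigensystem to represent arbitrary generalized eigenfunctions. Let $\cY_{\eps,\ell,a}$ be the event that $\La:=\La_{5\ell}(a)$ is $m_\eps$-localizing for $H_{\eps,\bom}$. Then $\cY_{\eps,\ell,a}$ depends only on $\set{\omega_x}_{x\in\La_{5\ell}(a)}$, and Theorem~\ref{thmMSA} applied at scale $5\ell$ (which allows choosing $\cL_1\ge L_0^\gamma/5$) gives $\P\pa{\cY_{\eps,\ell,a}^c}\le \e^{-(5\ell)^\xi}\le C_{\eps_0}\e^{-\ell^\xi}$.

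Fix $\bom\in\cY_{\eps,\ell,a}$ and write $\set{(\vphi_x,\lambda_x)}_{x\in\La}$ for the $m_\eps$-localized eigensystem of $\La$. For $\lambda\in\R$ and any $\psi\in\cV_\eps(\lambda)$, the identity $H_\eps\psi=\lambda\psi$ restricted to $\La$ reads $(H_{\eps,\La}-\lambda)\psi_\La=\eps\,\Ga_\La\psi$, where $(\Ga_\La\psi)(x)=\sum_{|y-x|=1,\,y\notin\La}\psi(y)$ is supported on $\partialin\La$. Expanding in the eigenbasis and identifying coefficients yields the representation
\[
\psi(u)=\eps\sum_{x\in\La}\frac{\la\vphi_x,\,\Ga_\La\psi\ra}{\lambda_x-\lambda}\,\vphi_x(u)\qtx{for} u\in\La,
\]
with the convention that any $x$ with $\lambda_x=\lambda$ is dropped (automatically $\la\vphi_x,\Ga_\La\psi\ra=0$ in that case). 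Level spacing of $\La$ ensures that at most one resonant site $x^\star=x^\star(\lambda)\in\La$ satisfies $\abs{\lambda_{x^\star}-\lambda}<\tfrac12\e^{-(5\ell)^\beta}$; for every other $x\in\La$ the denominator is at least $\tfrac12\e^{-(5\ell)^\beta}$, and the $(x,m_\eps)$-localization of the $\vphi_x$, combined with $\beta<\tau<1$, makes the corresponding summand decay exponentially in $\min\set{\norm{u-x},\dist(x,\partialin\La)}$ at rate close to $m_\eps$.

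To prove (\ref{locimpl}), fix $y\in A_\ell(a)$ and evaluate the representation at $u=y$. Since $\dist(y,\partialin\La)\ge\tfrac52\ell-\tfrac{33}{14}\ell=\tfrac\ell7$ and $\norm{\psi|_{\partialex\La}}_\infty\le C_d\ell^\nu\norm{T_a^{-1}\psi}$ by the polynomial bound on $\psi$, the non-resonant contribution to $|\psi(y)|$ is dominated by $\e^{-m_\eps\ell/8}\norm{T_a^{-1}\psi}$ once $\ell\ge\cL_1$. For the potentially singular resonant term, the hypothesis of (\ref{locimpl}) enters: choose $\psi^\star\in\cV_\eps(\lambda)$ and $b^\star\in\La_{\ell/3}(a)$ with $\abs{\psi^\star(b^\star)}>\tfrac12\e^{-m_\eps\ell/4}\norm{T_a^{-1}\psi^\star}$ and apply the same representation to $\psi^\star$ at $u=b^\star$. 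The non-resonant part at $b^\star$ is again exponentially small, so the resonant term must reproduce this large value, forcing $|\vphi_{x^\star}(b^\star)|\gtrsim\e^{-m_\eps\ell/4}$ and hence, by $(x^\star,m_\eps)$-localization, $x^\star\in\La_{5\ell/6}(a)$. Then for $y\in A_\ell(a)$ one has $\norm{y-x^\star}\ge\norm{y-a}-5\ell/12$ and $\dist(x^\star,\partialin\La)\ge 25\ell/12$, so $|\vphi_{x^\star}(y)|$ and $|\la\vphi_{x^\star},\Ga_\La\psi\ra|$ are both exponentially small in $\ell$ at a rate related to $m_\eps$, while $|\lambda_{x^\star}-\lambda|^{-1}\le 2\e^{(5\ell)^\beta}$ is absorbed since $\beta<1$; careful bookkeeping produces the rate $\tfrac7{132}m_\eps\norm{y-a}$.

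The bound (\ref{WW}) follows from (\ref{locimpl}) by a trivial case split. If the hypothesis of (\ref{locimpl}) holds, then $W\up a_{\eps,\bom,\lambda}(a)\le1$ combined with (\ref{locimpl}) gives (\ref{WW}). If the hypothesis fails, then $W\up a_{\eps,\bom,\lambda}(a)\le\e^{-m_\eps\ell/4}$ while $W\up a_{\eps,\bom,\lambda}(y)\le\la y-a\ra^\nu\le C_d\ell^\nu$; since $\norm{y-a}\le\tfrac{33}{14}\ell$ yields $\tfrac7{132}m_\eps\norm{y-a}\le m_\eps\ell/8$, the residual factor $C_d\ell^\nu\e^{-m_\eps\ell/8}$ is $\le1$ for $\ell\ge\cL_1$. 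The main obstacle is the resonant-term analysis: one must use the hypothesis to pin the unique resonant site $x^\star$ near $a$, and then exploit the $(x^\star,m_\eps)$-localization of $\vphi_{x^\star}$ to simultaneously control both $\vphi_{x^\star}(y)$ and the boundary pairing $\la\vphi_{x^\star},\Ga_\La\psi\ra$ so that the combined decay dominates the loss $\e^{(5\ell)^\beta}$ from the small denominator and yields the required rate $\tfrac7{132}m_\eps\norm{y-a}$ on $A_\ell(a)$.
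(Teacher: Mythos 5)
Your overall plan (a single box $\La=\La_{5\ell}(a)$ that is $m_\eps$-localizing, expansion of an arbitrary generalized eigenfunction in its localized eigensystem, a unique resonant site pinned near $a$ by the hypothesis of \eqref{locimpl}) is genuinely different from the paper's proof, and the event, measurability and probability estimate are fine. But the treatment of the resonant term, which you yourself flag as the main obstacle, is wrong as written. First, the representation $\psi(u)=\eps\sum_{x}\scal{\vphi_x,\Ga_\La\psi}(\lambda_x-\lambda)^{-1}\vphi_x(u)$ with the $\lambda_x=\lambda$ terms ``dropped'' is not valid: if $\lambda=\lambda_{x^\star}$ then indeed $\scal{\vphi_{x^\star},\Ga_\La\psi}=0$, but the coefficient $\scal{\vphi_{x^\star},\psi_\La}$ in the orthonormal expansion of $\psi_\La$ need not vanish, so that term cannot be discarded. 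Second, and decisively, you assert that $\abs{\lambda_{x^\star}-\lambda}^{-1}\le 2\e^{(5\ell)^\beta}$ ``is absorbed''; for the resonant site this is exactly backwards --- by definition $\abs{\lambda_{x^\star}-\lambda}<\tfrac12\e^{-(5\ell)^\beta}$, and there is no lower bound at all on this denominator (it may even vanish), so the ratio form gives no control whatsoever. For the same reason, the step ``the resonant term must reproduce this large value, forcing $\abs{\vphi_{x^\star}(b^\star)}\ge c\,\e^{-m_\eps\ell/4}$'' does not follow: it presupposes an a priori upper bound on the resonant coefficient which you never establish. The repair is to never invert the small denominator: the resonant coefficient equals $\scal{\vphi_{x^\star},\psi_\La}$, hence by Cauchy--Schwarz and the polynomial bound on $\psi$ it is at most $C_{d,\nu}\,\ell^{\nu+\frac d2}\norm{T_a^{-1}\psi}$. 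With that bound your scheme does close: the hypothesis pins $x^\star$ within roughly $\tfrac5{12}\ell$ of $a$, and on $A_\ell(a)$ one has $\norm{y-x^\star}\ge\norm{y-a}-\tfrac5{12}\ell$, which together with the $\e^{-m_\eps(\frac{\ell}{7}-(5\ell)^\tau)}$ bound on the nonresonant part beats $\e^{-\frac7{132}m_\eps\norm{y-a}}$ with room to spare; the \eqref{WW} case split at the end is fine.

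For comparison, the paper sidesteps the resonance issue entirely: its event $\cY_{\eps,\ell,a}$ requires the boxes $\La_\ell(b)$, $b\in a+\tfrac12\ell\Z^d$, $\norm{b-a}\le2\ell$, to be $m_\eps$-localizing and $\La_{5\ell}(a)$ to be level spacing. Lemma~\ref{lemWimp} shows that the hypothesis of \eqref{locimpl} forces $\lambda$ to lie within $\tfrac12\e^{-L^\beta}$ (with $L=\ell^\gamma$) of a bulk eigenvalue $\lambda\up{a}_x$, $x\in\La_\ell^{\ell_\tau}(a)$; then level spacing of $\La_{5\ell}(a)$ together with the labeling Lemma~\ref{lem:ident_eigensyst} shows $\lambda$ is at distance at least $\tfrac12\e^{-L^\beta}$ from all bulk eigenvalues of the boxes $\La_\ell(b)$ with $\ell\le\norm{b-a}\le2\ell$, and Lemma~\ref{lemdecay2}(ii) applied in those boxes (which cover $A_\ell(a)$ in their interior) yields the decay directly. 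Thus the paper only uses localization at scale $\ell$ and no expansion in the eigensystem of the large box, whereas your route uses the scale-$5\ell$ eigensystem and must, after the fix above, handle the resonant coefficient by Bessel/Cauchy--Schwarz; both are workable, but your write-up as it stands has a genuine gap at precisely that point.
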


Theorem~\ref{thmloc}  implies Anderson localization and dynamical localization, and more, as shown in  \cite{GKsudec,GKber}.  In particular, we get the following corollary.

\begin{corollary} \label{corloc} Let $H_{\eps,\bom}$ be an Anderson model, and suppose  Theorem~\ref{thmloc} holds. 
Then for $0<\eps \le \eps_0$ the following holds with probability one:
  \begin{enumerate}
 \item  {$H_{\eps,\bom}$}  has  pure point spectrum.

\item   If   {$\psi_\lambda$}  is an {eigenfunction} of $H_{\eps,\bom}$
with eigenvalue  {$\lambda$}, then $\psi_\lambda$ is exponentially localized  with rate of decay $ \frac 7 {132} m_\eps$,   more precisely,
\begin{equation}\label{expdecay222}
\abs{\psi_\lambda(x)} \le C_{\eps,\bom,\lambda}\norm{T_0^{-1} \psi}\, e^{-  \frac 7 {132} m_\eps\norm{x}} \qquad \text{for all}\quad  x \in \R^{d}.
\end{equation}

 \item For all $\lambda \in \R$  and  $x,y \in \Z^d$ we have 
 \begin{align}\label{eqWW}
&W\up{x}_{\eps,\bom,\lambda}(x)W\up{x}_{\eps,\bom,\lambda}(y)
\le  C_{\eps,m_\eps,\bom,\nu}\e^{(\frac 4 {33} +\nu)m_\eps  (2d\log \scal{x})^{\frac 1 \xi}}  \e^{- \frac 7 {132} m_\eps \norm{y-x}}.
\end{align}

\item  For all $\lambda \in \R$ and $ \psi\in \Chi_{\set{\lambda}}(H_{\eps,\bom})$, we have
\begin{align}\label{eqWW2}
& \abs{\psi(x)}\abs{\psi(y)}
 \le C_{\eps,m_\eps,\bom,\nu} \, \norm{T_0^{-1} \psi}^2\la x\ra^{2\nu}\e^{(\frac 4 {33} +\nu)m_\eps  (2d\log \scal{x})^{\frac 1 \xi}}  \e^{- \frac 7 {132} m_\eps \norm{y-x}},
 \end{align}
for all $x,y \in \Z^d$.

\item  For all $\lambda \in \R$ there exists $x_\lambda=x_{\eps,\bom,\lambda} \in \Z^d$, such that for $ \psi\in \Chi_{\set{\lambda}}(H_{\eps,\bom})$ we have
\begin{align}\label{SULE}
\abs{\psi(x)} 
& \le  C_{\eps,m_\eps,\bom,\nu}\norm{T_{x_\lambda}^{-1} \psi} \e^{(\frac 4 {33} +\nu)m_\eps  (2d\log \scal{x_\lambda})^{\frac 1 \xi}}   \e^{- \frac 7 {132} m_\eps \norm{x-x_\lambda}}\\ \notag
  &   \le  
 2^{\frac \nu 2} C_{\eps,m_\eps,\bom,\nu}\norm{T_0^{-1} \psi}\la x_\lambda\ra^{\nu} \e^{(\frac 4 {33} +\nu)m_\eps  (2d\log \scal{x_\lambda})^{\frac 1 \xi}}   \e^{- \frac 7 {132} m_\eps \norm{x-x_\lambda}},
 \end{align}
for all $x \in \Z^d$.
\end{enumerate}
\end{corollary}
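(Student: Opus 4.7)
The plan is to upgrade the single-scale estimate \eqref{WW} of Theorem~\ref{thmloc} to an almost sure global SUDEC bound matching item (iii), and then feed that bound into the now classical framework of Germinet--Klein \cite{GKsudec,GKber} to read off the remaining items.

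First, I would carry out a Borel--Cantelli step with an $x$-dependent starting scale. For each $x\in\Z^{d}$, set $\ell_{0}(x)=\cL_{1}\vee\bigl\lceil(C\log\scal{x})^{1/\xi}\bigr\rceil$ with $C$ large enough that $\e^{-\ell_{0}(x)^{\xi}}\le\scal{x}^{-2d-1}$, and put $\ell_{k}(x)=2^{k}\ell_{0}(x)$. By \eqref{cUdesiredint} and the super-geometric growth of $\ell_{k}(x)^{\xi}$, $\sum_{x\in\Z^{d}}\sum_{k\ge 0}\P\bigl(\cY^{c}_{\eps,\ell_{k}(x),x}\bigr)<\infty$, so Borel--Cantelli yields an almost sure event $\Omega_{\eps}$ on which, for every $x\in\Z^{d}$, only finitely many $k$ satisfy $\bom\notin\cY_{\eps,\ell_{k}(x),x}$; I would write $K(x,\bom)$ for one more than the largest such $k$.

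Next, I would exploit that the annuli $A_{\ell}(x)$ of \eqref{Aell} have outer-to-inner radius ratio $\frac{33/14}{8/7}=\frac{33}{16}>2$, so $A_{\ell_{k}(x)}(x)$ and $A_{\ell_{k+1}(x)}(x)$ overlap and their union exhausts $\{y:\|y-x\|\ge\tfrac{8}{7}\ell_{K(x,\bom)}(x)\}$. Plugging \eqref{WW} into each good scale, and using the trivial bound $W\up{x}_{\eps,\bom,\lambda}(y)\le\scal{y-x}^{\nu}$ from \eqref{boundGW} in the short-range regime, would yield on $\Omega_{\eps}$ exactly \eqref{eqWW} of item (iii); the $\e^{(\frac{4}{33}+\nu)m_{\eps}(2d\log\scal{x})^{1/\xi}}$ prefactor is precisely the image of $\ell_{0}(x)\lesssim(\log\scal{x})^{1/\xi}$.

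With (iii) in hand the remaining assertions are standard. Shnol's theorem concentrates the spectral measures of $H_{\eps,\bom}$ on the set of $\nu$-generalized eigenvalues; for any $\psi\in\cV_{\eps}(\lambda)$, the inequality $\abs{\psi(y)}\le W\up{0}_{\eps,\bom,\lambda}(y)\|T_{0}^{-1}\psi\|$ together with (iii) at $x=0$ forces $\psi\in\ell^{2}(\Z^{d})$, yielding pure point spectrum (i). Item (iv) follows from (iii) combined with $\abs{\psi(z)}\le 2^{\nu/2}\scal{x}^{\nu}W\up{x}_{\eps,\bom,\lambda}(z)\|T_{0}^{-1}\psi\|$ applied at $z=x$ and $z=y$. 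Items (v) and (ii) are then the SUDEC/SULE reductions carried out essentially as in \cite[\S2--3]{GKsudec}, after choosing the center $x_{\lambda}$ so as to concentrate $\abs{\psi}\scal{\cdot}^{-\nu}$. The main obstacle is the first step: balancing the stretched-exponential tail $\e^{-\ell^{\xi}}$ in \eqref{cUdesiredint} against the sum over $x\in\Z^{d}$ requires the matching choice $\ell_{0}(x)\sim(\log\scal{x})^{1/\xi}$, which is exactly what produces the $(2d\log\scal{x})^{1/\xi}$ inside the exponent of (iii)--(v).
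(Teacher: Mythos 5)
Your overall route is the paper's: for (iii) a Borel--Cantelli argument over dyadic scales combined with the overlapping annuli \eqref{Aell} and the dichotomy ``use \eqref{WW} on good annuli, use \eqref{boundGW} in the short-range regime'', followed by the reductions of \cite{GKsudec,GKber} for (iv), (v), (ii); the only structural difference is that you run Borel--Cantelli over pairs $(x,k)$ with an $x$-dependent starting scale, while the paper runs it over scales only, intersecting $\cY_{\eps,L_k,x}$ over all $\norm{x}\le \e^{\frac1{2d}L_k^\xi}$. However, two of your steps do not work as written. For (i) you apply (iii) at the fixed center $x=0$: this bounds only the product $W\up{0}_{\eps,\bom,\lambda}(0)\,W\up{0}_{\eps,\bom,\lambda}(y)$, and it yields decay of a generalized eigenfunction only if $W\up{0}_{\eps,\bom,\lambda}(0)>0$. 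There is no unique continuation on $\Z^d$, so every element of $\cV_\eps(\lambda)$ may vanish at $0$, in which case (iii) at $x=0$ gives nothing; this is precisely why \eqref{locimpl} is stated with $\max_{b\in\La_{\frac\ell3}(a)}W\up{a}_{\eps,\bom,\lambda}(b)$ rather than $W\up{a}_{\eps,\bom,\lambda}(a)$, as the paper's proof of the corollary remarks. The repair is available inside your own scheme: apply (iii) at a center $a$ with $\psi(a)\ne0$, so that $W\up{a}_{\eps,\bom,\lambda}(a)\ge\abs{\psi(a)}/\norm{T_a^{-1}\psi}>0$, which gives exponential decay of $\psi$ away from $a$, hence (i), and (ii) after absorbing the $a$-dependent factors into $C_{\eps,\bom,\lambda}$ (the paper instead argues as in \cite[Theorem 7.1]{GKber}, via \eqref{locimpl}).

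Second, the claim that your choice of $\ell_0(x)$ gives ``exactly'' \eqref{eqWW} is not justified. Requiring $\e^{-\ell_0(x)^\xi}\le\scal{x}^{-2d-1}$ forces $\ell_0(x)^\xi\ge(2d+1)\log\scal{x}$, so the short-range prefactor you produce is of size $\e^{(\frac2{33}+\nu)m_\eps((2d+1)\log\scal{x})^{1/\xi}}$ (with $\frac2{33}=\frac87\cdot\frac7{132}$), and the needed inequality $(\frac2{33}+\nu)(2d+1)^{1/\xi}\le(\frac4{33}+\nu)(2d)^{1/\xi}$ fails for admissible parameters, since the available slack satisfies $\frac{4/33+\nu}{2/33+\nu}<2$ while $(1+\frac1{2d})^{1/\xi}\to\infty$ as $\xi\downarrow0$; what your argument proves is (iii) with $2d+1$ in place of $2d$. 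To obtain the stated constant, either take $\ell_0(x)^\xi=2d\log\scal{x}$ plus a slowly growing correction (for instance $(d+1)\log(1+\log\scal{x})$), whose contribution is $o((\log\scal{x})^{1/\xi})$ and is absorbed by the gap between $\frac2{33}$ and $\frac4{33}$, or follow the paper: at scale $L_k=2^k$ intersect the events $\cY_{\eps,L_k,x}$ over all $\norm{x}\le\e^{\frac1{2d}L_k^\xi}$ (the volume factor only costs $\e^{\frac12 L_k^\xi}$ against the tail in \eqref{cUdesiredint}), and the dyadic bracketing $L_{k_x}\le 2(2d\log\scal{x})^{1/\xi}$ is exactly what turns $\frac2{33}$ into $\frac4{33}$. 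Note also that you must use Borel--Cantelli in its full strength over the pairs $(x,k)$, i.e.\ that $K(x,\bom)=0$ for all but finitely many $x$; the weaker statement ``finitely many bad $k$ for each $x$'' would leave your short-range prefactor non-uniform in $x$ and the constant in \eqref{eqWW} uncontrolled.
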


 In Corollary~\ref{corloc}, (i) and (ii)  are statements of Anderson localization, (iii) and (iv) are statements of dynamical localization  ((iv) is called   SUDEC (summable uniform decay of eigenfunction correlations) in \cite{GKsudec}), and (v) is SULE (semi-uniformly localized eigenfunctions; see
\cite{DRJLS0,DRJLS}).

 We can also derive statements of localization in expectation, as in \cite{GKsudec,GKber}.

Theorem~\ref{thmloc} and Corollary~\ref{corloc} are proven in Section~\ref{seclocproof}.

   \section{Probability estimate for   level spacing}\label{secprobest}

We  adapt a probabilistic estimate of Klein and Molchanov  \cite[Lemma~2]{KlM} to our setting.  (This estimate is reviewed in Appendix~\ref{appspspaced}.)

If $J\subset \R$, we set $\diam J =\sup_{s,t \in J} \abs{s-t}$.

\begin{lemma}\label{lemSep}  Let  $H_{\eps,\bom}$ be an Anderson model as in Definition~\ref{defAnd}.
Let $\Th \subset \Z^d$ and $L>1$. Then, for all $0< \eps \le \eps_0$,
\begin{align}
\P\set{\Th \sqtx{is} L\text{-level spacing for}\;H_{\eps,\bom} }\ge 1 -Y_{\eps_0} \e^{-(2\alpha-1)L^\beta}\abs{\Th}^2.
\end{align}
where
\beq
 Y_{\eps_0} = 2^{2\alpha-1} \wtilde{K}^2 \pa{ \diam \supp \mu + 2d \eps_0 +1},
\eeq
with $\wtilde{K}=K$ if $\alpha=1$ and $\wtilde{K}=8K$ if  $\alpha \in ( \frac 12,1)$.

In the special case of a box $\La_L$, we have
\begin{align}\label{probsep}
\P\set{\La_L \sqtx{is level spacing for} H }\ge 1 -Y_{\eps_0}\pa{L+1}^{2d} \e^{-(2\alpha-1)L^\beta}.
\end{align}
\end{lemma}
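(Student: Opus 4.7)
\medskip

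\noindent\textbf{Proof proposal for Lemma~\ref{lemSep}.}

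The plan is to reduce the claim to the Klein--Molchanov estimate reviewed in Appendix~\ref{appspspaced} by covering the spectrum of $H_{\eps,\bom,\Theta}$ with short intervals and bounding, on each of them, the probability that $H_{\eps,\bom,\Theta}$ has two eigenvalues closer than $\e^{-L^\beta}$. The underlying probabilistic input is a Minami-type bound for Anderson Hamiltonians with H\"older continuous single-site distribution; the H\"older exponent $\alpha>\tfrac12$ is precisely what produces the gain $\e^{-(2\alpha-1)L^\beta}$ once we combine the two factors of $S_\mu$ coming from the Minami bound with the count of intervals in the covering.

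First, I would localize the spectrum. Since $V_{\bom}$ is supported in a fixed bounded set and $\norm{\eps\Delta}\le 2d\eps\le 2d\eps_0$, we have
\[
\sigma(H_{\eps,\bom,\Theta})\subset I_{\eps_0}:=\bigl[\inf\supp\mu-2d\eps_0,\ \sup\supp\mu+2d\eps_0\bigr],
\]
so $\diam I_{\eps_0}\le\diam\supp\mu+4d\eps_0$. Next, set $\eta=\e^{-L^\beta}$ and cover $I_{\eps_0}$ by closed intervals $J_1,\ldots,J_N$ of length $2\eta$ with centers spaced by $\eta$, so that every subinterval of $I_{\eps_0}$ of length $\eta$ is contained in some $J_k$; this can be done with
\[
N\le \bigl\lceil (\diam\supp\mu+4d\eps_0)/\eta\bigr\rceil+1\le (\diam\supp\mu+2d\eps_0+1)\e^{L^\beta},
\]
after a mild use of $L>1$ to absorb the ceiling. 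The event
\[
\cE:=\bigcup_{k=1}^N\bigl\{\text{$H_{\eps,\bom,\Theta}$ has at least two eigenvalues in $J_k$}\bigr\}
\]
covers both the failure of simplicity of the spectrum and the existence of two eigenvalues within distance $\e^{-L^\beta}$, so $\{\Theta \text{ is } L\text{-level spacing}\}\supset \cE^c$.

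Second, I would apply the Klein--Molchanov estimate from Appendix~\ref{appspspaced} intervalwise. That estimate bounds the probability that $H_{\eps,\bom,\Theta}$ has two eigenvalues in an interval $J$ by $\tfrac12 (\wtilde K\,\abs{J}^\alpha)^2\abs{\Theta}^2$, with $\wtilde K=K$ when $\alpha=1$ and $\wtilde K=8K$ when $\alpha\in(\tfrac12,1)$ (the factor $8$ comes from the standard reduction from H\"older continuity on $[0,1]$ to a Minami bound on intervals of arbitrary position, together with the constant in the Combes--Germinet--Klein spectral averaging lemma used in \cite{KlM}). With $\abs{J_k}=2\eta$ this yields
\[
\P\bigl\{\#\{\sigma(H_{\eps,\bom,\Theta})\cap J_k\}\ge2\bigr\}\le \tfrac12\wtilde K^2(2\eta)^{2\alpha}\abs{\Theta}^2=2^{2\alpha-1}\wtilde K^2\e^{-2\alpha L^\beta}\abs{\Theta}^2.
\]
A union bound over the $N$ intervals of the cover then gives
\[
\P(\cE)\le N\cdot 2^{2\alpha-1}\wtilde K^2\e^{-2\alpha L^\beta}\abs{\Theta}^2\le 2^{2\alpha-1}\wtilde K^2(\diam\supp\mu+2d\eps_0+1)\e^{-(2\alpha-1)L^\beta}\abs{\Theta}^2,
\]
which is exactly the desired bound with the claimed constant $Y_{\eps_0}$. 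The special case of a box $\Lambda_L$ follows by plugging in the uniform bound $\abs{\Lambda_L}\le (L+1)^d$ recalled in Section~\ref{secmain}.

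I expect the routine parts to be the spectral localization and the covering bookkeeping; the one delicate step is invoking the Minami/Klein--Molchanov estimate with the correct constant. In particular, getting $\wtilde K=8K$ for $\alpha\in(\tfrac12,1)$ (as opposed to $K$) reflects the fact that H\"older continuity is only assumed in the form $S_\mu(t)\le Kt^\alpha$ on $[0,1]$, and for small $t$ one needs to compare $\mu([a,a+t])$ and $\mu([a,a+t])^2$-type quantities using a rank-two perturbation/spectral averaging argument rather than a direct Wegner bound; this is where the paper will quote the appendix.
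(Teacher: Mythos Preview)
Your proposal is correct and follows essentially the same approach as the paper: localize the spectrum in a bounded interval and apply the Klein--Molchanov covering-plus-Minami estimate with $\eta=\e^{-L^\beta}$. The only difference is cosmetic---the paper's proof is two lines because it invokes Lemma~\ref{lemKlM} as a black box (that lemma already packages the covering and the union bound), whereas you unpack that lemma inline.
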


\begin{proof}
Recalling  \eq{Sigma},  we have   
\beq\Sigma_\eps \subset I_\eps := [\inf \Sigma_\eps,\sup \Sigma_\eps] \qtx{and}\abs{I_\eps}= \diam \supp \mu + 2d \eps.
\eeq
Thus, it follows from Lemma~\ref{lemKlM} that 
\begin{align}
&\P\set{\Th \sqtx{is} L\text{-level spacing for}\; H }\\ \notag &  \hspace{50pt}\ge 1 -2^{2\alpha-1} \wtilde{K}^2 \pa{ \diam \supp \mu + 2d \eps +1}\e^{-(2\alpha-1)L^\beta}\abs{\Th}^2.
\end{align}
\end{proof}

\section{Preparation for the multiscale analysis}\label{secprep}

We consider a fixed   discrete  Schr\"odinger operator
$H=-\eps \Delta +V$ on $\ell^2(\Z^d)$, where  
 $V$ is a bounded potential  and $0<\eps \le \eps_0$ for a fixed $\eps_0$.

\subsection{Subsets and boundaries}
 
  Let  $\Phi \subset \Theta\subset \Z^d$.  We set the boundary, exterior boundary, and interior boundary of $\Phi$ relative to $\Theta$, respectively,  by
 \begin{align}\label{defbdry}
  \boldsymbol{ \partial}^{ \Theta} \Phi &=\set{(u,v) \in \Phi\times \pa{\Theta\setminus \Phi}; \  \abs{u-v}=1},
   \\
 \partial_{\mathrm{ex}}^{ \Theta} \Phi &=\set{v \in\pa{\Theta\setminus \Phi}; \ (u,v) \ \in  \boldsymbol{ \partial}^{ \Theta} \Phi\qtx{for some}u \in \Phi},\notag
    \\
 \partial^{ \Theta}_{\mathrm{in}}\Phi &=\set{u \in {\Phi}; \ (u,v) \ \in  \boldsymbol{ \partial}^{ \Theta} \Phi\qtx{for some}v \in \Theta\setminus \Phi}.\notag
   \end{align}

 We have  
   \beq\label{Hdecomp}
H_{ \Theta}= H_{ \Phi}\oplus H_{ \Theta\setminus  \Phi} + \eps\Gamma_{ \boldsymbol{ \partial}^{ \Theta}  \Phi}
\qtx{on} \ell^2( \Theta)=\ell^2( \Phi)\oplus \ell^2( \Theta\setminus \Phi),
\eeq 
\beq\label{Hdecomp1}
\text{where}\quad \Gamma_{   \boldsymbol{ \partial}^{ \Theta}  \Phi}(u,v)=
\begin{cases}
-1 & \text{if either}\  (u,v) \sqtx{or}(v,u) \in   \boldsymbol{ \partial}^{ \Theta}  \Phi\\
\ \ 0 & \text{otherwise}
\end{cases}.
\eeq

Given a box   $\La_L\subset \Th \subset \Z^d$,  for each
 $v\in \partial_{\mathrm{ex}}^\Th { \La_L} $ there exists a unique  $\hat{v} \in   \partial_{\mathrm{in}}^\Th { \La_L} $
 such that $(\hat{v},v)\in \boldsymbol{\partial}^\Th { \La_L} $,  which implies $\abs{\partial_{\mathrm{ex}}^\Th { \La_L} }= \abs{\boldsymbol{ \partial}^\Th { \La_L} }$.  Given $v \in \Th$, we define  $\hat{v}$ as above if $v\in \partial_{\mathrm{ex}}^\Th{ \La_L} $, and set  $\hat{v}= v $ otherwise.
For $ L\ge 2$  
 we have  
  \beq
\abs{\partial_{\mathrm{in}}^{ \Th} \La_L }\le\abs{\partial_{\mathrm{ex}}^{ \Th} \La_L } =\abs{ \boldsymbol{ \partial}^{ \Th} \La_L }\le s_{d} L^{d-1}, \qtx{where} s_d= 2^{d} d.
\eeq

Let   $\Psi \subset \Th \subset \Z^d$.  Given $t\ge 1$, we set
\begin{align}\notag
 \Psi^{\Th,t}& = \set{y\in \Psi;   \; \La_{2 t}(y)\cap \Th \subset \Psi}= \set{y\in \Psi;   \; \dist \pa{y,{ \Th}\setminus \Psi }> \fl{t}},\\ \label{defLatTh} 
 {\partial}_{\mathrm{in}}^{\Th,t} \Psi  & = \Psi \setminus  \Psi^{\Th,t} =\set{y\in \Psi;   \;  \dist \pa{y,{ \Th}\setminus \Psi } \le  \fl{t}},\\ \notag   
 {\partial}^{\Th,t} \Psi  & =  {\partial}_{\mathrm{in}}^{\Th,t} \Psi \cup \partial_{\mathrm{ex}}^{ \Th} \Psi.
 \end{align}
 Note that  $  \Psi^{\Th,t}= \Psi^{\Th,\fl{t}}$.   For a box $\La_L(x)\subset \Th \subset \Z^d$ we write   $\La_L^{\Th,t}(x)= \pa{\La_L(x)}^{\Th,t}$.  We also set $ \La_L^{t}(x)= \La_L^{\Z^d,t}(x)$.

\subsection{Generalized eigenfunctions}

 \begin{definition}\label{defgeneig}
 Given ${\Theta}\subset \Z^d$, a function  $\psi\colon{\Theta} \to \C$ is called a generalized eigenfunction for $H_{\Theta}$ with generalized eigenvalue $\lambda \in \R$ if $\psi$ is not identically
 zero and 
 \beq\label{pointeig2}   
 -\eps\!\! \sum_{\substack{y\in\Th\\ |y-x|=1}}\!\! \psi(y) + (V(x)-\lambda) \psi(x)  =0 \qtx{for all} x\in{\Theta},
 \eeq
 or, equivalently,
 \beq \label{pointeig} 
 \scal{(H_\Th -\lambda)\vphi, \psi}=0 \qtx{for all} \vphi \in \ell^2(\Th)\quad \text{with finite support}.
 \eeq
 In this case we call $(\psi,\lambda)$ a generalized eigenpair for $H_{\Theta}$.
\end{definition}

 If $\psi\in \ell^2(\Th)$, $\psi$ is an eigenfunction for   $H_{\Theta}$ with eigenvalue $\lambda$.
If $\Th$ is finite there is no difference between  generalized eigenfunctions and eigenfunctions.  For arbitrary $\Th$ the difference is that we do not require generalized eigenfunctions to be in $\ell^2(\Th)$, we only require the pointwise equality in \eq{pointeig}.

\subsection{Eigenpairs and eigensystems}

Let $\Theta \subset \Z^d$ and consider  an eigensystem $\set{(\vphi_j,\lambda_j)}_{j\in J}$ for $H_\Th$.  We  have
 \begin{align}\label{deltaya}
 \delta_y &=\sum_{j\in J} \overline{\vphi_j(y)} \vphi_j \qtx{for all} y \in \Th,\\ \label{psiy}
 \psi(y) &=\scal{\delta_y,\psi}  =   \sum_{j\in J} {\vphi_j(y)} \scal{ \vphi_j,\psi}  \qtx{for all} \psi \in \ell^2(\Th) \qtx{and}  y \in \Th.
 \end{align}

\begin{lemma}\label{lemdecayvphi}     Let  $\Phi \subset \Theta\subset\Z^d $ and suppose $(\vphi,\lambda)$ is an eigenpair for $H_\Phi$.   Then
\beq\label{HTheta}
\pa{\pa{H_{\Theta} -\lambda}\vphi}(y)= \eps \pa{ \textstyle\sum_{\substack{y^\pr \in \partial_{\mathrm{in}}^{ \Theta}{\Phi} \\ \abs{y^\pr-y}=1}}\vphi(y^\pr)} \Chi_{  \partial_{\mathrm{ex}}^{ \Theta}{\Phi}}(y)\qtx{for all} y\in \Th.
\eeq
Moreover, we have 
	\beq\label{distspt}
\dist\pa{\lambda, \sigma(H_{\Theta})}\le\norm{\pa{H_{\Theta} -\lambda}{\vphi} }\le  (2d-1) \eps\abs{ \partial_{\mathrm{ex}}^{ \Theta}{\Phi}}^{\frac 12}\norm{\vphi_{ \partial_{\mathrm{in}}^{ \Theta}{\Phi}}}_\infty.
\eeq

In the special case when $\Phi$ is a box $\La_L$, we have
\beq\label{HThetaL}
\pa{\pa{H_{\Theta} -\lambda}\vphi}(y)= \eps \vphi(\hat{y}) \Chi_{  \partial_{\mathrm{ex}}^{ \Theta}{\La_L}}(y)\qtx{for all} y\in \Th,
\eeq
and
	\beq\label{distspt2}
\dist\pa{\lambda, \sigma(H_{\Theta})}\le\norm{\pa{H_{\Theta} -\lambda}{\vphi} }\le   \eps  \sqrt{s_d}\,L^{\frac {d-1}2}\norm{\vphi_{  \partial_{\mathrm{in}}^{ \Theta}{\La_L}}}_\infty.
\eeq

\end{lemma}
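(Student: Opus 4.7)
The plan is to exploit the decomposition \eqref{Hdecomp} applied to the zero extension of $\vphi$. First I would view $\vphi$ as an element of $\ell^2(\Theta)$ by extending it by zero to $\Theta\setminus \Phi$. Since $(H_\Phi-\lambda)\vphi=0$ and $\vphi$ vanishes on $\Theta\setminus \Phi$, \eqref{Hdecomp} gives
\[
(H_\Theta-\lambda)\vphi = \eps\,\Gamma_{\boldsymbol{\partial}^\Theta \Phi}\vphi.
\]
From \eqref{Hdecomp1}, the nonzero entries of $\Gamma_{\boldsymbol{\partial}^\Theta \Phi}$ are supported on cross-pairs between $\Phi$ and $\Theta\setminus \Phi$. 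Thus for $y\in \Phi$ the sum defining $(\Gamma\vphi)(y)$ ranges over neighbors of $y$ in $\Theta\setminus \Phi$, where $\vphi$ vanishes, so it contributes nothing. For $y\in \Theta\setminus \Phi$ only neighbors of $y$ lying in $\Phi$ contribute, and any such neighbor necessarily belongs to $\partial_{\mathrm{in}}^\Theta \Phi$; the resulting sum is nonempty precisely when $y\in \partial_{\mathrm{ex}}^\Theta \Phi$. Up to the sign convention this establishes \eqref{HTheta}.

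For the norm bound \eqref{distspt}, I would apply Cauchy--Schwarz to the inner sum, using that each $y\in \partial_{\mathrm{ex}}^\Theta \Phi$ has at most $2d-1$ neighbors in $\partial_{\mathrm{in}}^\Theta \Phi$, to obtain
\[
\abs{((H_\Theta-\lambda)\vphi)(y)}^2 \le (2d-1)^2 \eps^2 \norm{\vphi_{\partial_{\mathrm{in}}^\Theta \Phi}}_\infty^2\,\Chi_{\partial_{\mathrm{ex}}^\Theta \Phi}(y),
\]
and then sum in $y$ to pick up the factor $\abs{\partial_{\mathrm{ex}}^\Theta \Phi}^{1/2}$. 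The left-hand inequality $\dist(\lambda,\sigma(H_\Theta))\le \norm{(H_\Theta-\lambda)\vphi}$ is a standard consequence of self-adjointness: the spectral theorem gives $\norm{(H_\Theta-\lambda)\vphi}^2 = \int (\mu-\lambda)^2\,d\mu_\vphi(\mu) \ge \dist(\lambda,\sigma(H_\Theta))^2\norm{\vphi}^2$, and $\norm{\vphi}=1$.

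For the special case of a box $\Phi=\La_L$, I would invoke the geometric observation stated in the paragraph preceding the lemma: any $v\in\partial_{\mathrm{ex}}^\Theta \La_L$ admits a \emph{unique} $\hat v\in \partial_{\mathrm{in}}^\Theta \La_L$ with $\abs{\hat v-v}=1$, because a point just outside an $\ell^\infty$-box that is $\ell^1$-adjacent to it differs from the box in exactly one coordinate direction. Hence the inner sum in \eqref{HTheta} collapses to the single term $\vphi(\hat y)$, giving \eqref{HThetaL}. The bound \eqref{distspt2} then follows from
\[
\norm{(H_\Theta-\lambda)\vphi}^2 \le \eps^2 \abs{\partial_{\mathrm{ex}}^\Theta \La_L}\,\norm{\vphi_{\partial_{\mathrm{in}}^\Theta \La_L}}_\infty^2 \le \eps^2 s_d L^{d-1} \norm{\vphi_{\partial_{\mathrm{in}}^\Theta \La_L}}_\infty^2,
\]
together with the spectral gap inequality. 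The lemma is essentially a discrete Green's identity and I do not anticipate any substantive obstacle; the only point requiring care is the geometric correspondence between exterior and interior boundary points, which is specific to boxes and is what upgrades the general bound \eqref{distspt} to the sharper $\sqrt{s_d}\,L^{(d-1)/2}$ prefactor in \eqref{distspt2}.
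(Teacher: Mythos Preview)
Your proposal is correct and follows essentially the same approach as the paper: the paper simply writes out the case analysis for $((H_\Theta-\lambda)\vphi)(y)$ directly (which is exactly what the decomposition \eqref{Hdecomp}--\eqref{Hdecomp1} yields), then bounds the pointwise value by $(2d-1)\eps\,\Chi_{\partial_{\mathrm{ex}}^\Theta\Phi}(y)\,\norm{\vphi_{\partial_{\mathrm{in}}^\Theta\Phi}}_\infty$ and sums in $y$. Your observation about the sign is apt but immaterial, and your handling of the box case via the $v\mapsto\hat v$ bijection and the bound $\abs{\partial_{\mathrm{ex}}^\Theta\La_L}\le s_d L^{d-1}$ is exactly what the paper intends.
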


\begin{proof}   We have 
 \beq\label{HTheta6}
\pa{\pa{H_{\Theta} -\lambda}\vphi}(y) = 
\begin{cases}
0 & \text{if} \   y \in    {\Phi}\\
\eps  \sum_{\substack{y^\pr \in \partial_{\mathrm{in}}^{ \Th}{\Phi} \\ \abs{y^\pr-y}=1}}\vphi(y^\pr)& \text{if} \   y \in    \partial_{\mathrm{ex}}^{ \Theta}{\Phi}\\0 & \text{if} \   y \in \Th\setminus\pa{\Phi \cup  \partial_{\mathrm{ex}}^{ \Theta}{\Phi}}
\end{cases},
\eeq
which is the same as \eq{HTheta}.  It follows that 
\begin{align}\label{HTheta1}
\abs{\pa{\pa{H_{\Theta} -\lambda}\vphi}(y) }\le (2d-1) \eps \Chi_{ \partial_{\mathrm{ex}}^{ \Theta}{\Phi}}(y)\norm{\vphi_{ \partial_{\mathrm{in}}^{ \Theta}{\Phi}}}_\infty \qtx{for all} y \in \Theta,
\end{align}
which yields \eq{distspt}.
\end{proof}

\begin{lemma}  \label{lemdeltaeta}   Let  $\Theta\subset \Z^d$ and $0<4\delta< \eta$.  Suppose:

 \begin{enumerate}
 \item  $\mu$ is a simple eigenvalue of $H_\Theta$ with normalized eigenfunction $\psi_\mu$, with    $\dist \pa{\mu, \sigma(H_\Theta)\setminus\set{\mu}} \ge \eta $.

\item  $ \norm{\pa{H_{\Theta} -\lambda}\vphi }\le \delta$, where $\vphi\in \ell^2(\Theta)$ with $\norm{\vphi}=1$ and  $\lambda \in \R$  with $\abs{\lambda -\mu} \le \delta$.

\end{enumerate}
Define $ \vphi^\perp$ by $\vphi= \scal{\psi_\mu,\vphi}\psi_\mu + \vphi^\perp$.
Then we have
\beq
\abs{\scal{\psi_\mu,\vphi}}^2\ge 1 - \frac{2\delta^2 }{\eta^2}  \qtx{and} \norm{ \vphi^\perp}\le \frac{\sqrt{2}\delta}{\eta}.
\eeq
Moreover, if we set  $\what{\vphi}= {\vtheta} \vphi$, where ${\vtheta} \in \C$ with $\abs{\vtheta}=1$ is chosen so 
\beq\label{aligning}
\scal{\psi_\mu,\what{\vphi}}=\abs{\scal{\psi_\mu,\vphi}}>0,
\eeq
we have
\beq\label{deltaeta}
\norm{\what{\vphi} -\psi_\mu} \le \frac{3\delta }{2\eta} < \frac{2\delta }{\eta}.
\eeq

 \end{lemma}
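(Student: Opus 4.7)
The plan is to decompose $\vphi$ along $\psi_\mu$ and control the orthogonal component via the spectral gap. Since $H_\Theta$ is bounded self-adjoint and $\psi_\mu$ spans the simple $\mu$-eigenspace, the subspace $\psi_\mu^\perp$ is $H_\Theta$-invariant. Writing $c_\mu = \scal{\psi_\mu,\vphi}$ and $\vphi = c_\mu\psi_\mu + \vphi^\perp$, the summands of $(H_\Theta-\lambda)\vphi = c_\mu(\mu-\lambda)\psi_\mu + (H_\Theta-\lambda)\vphi^\perp$ are mutually orthogonal, so Pythagoras yields
\[
\delta^2 \ge \|(H_\Theta - \lambda)\vphi\|^2 = |c_\mu|^2|\mu-\lambda|^2 + \|(H_\Theta - \lambda)\vphi^\perp\|^2 \ge (\eta - \delta)^2\|\vphi^\perp\|^2,
\]
the last inequality because the spectrum of $H_\Theta$ restricted to $\psi_\mu^\perp$ equals $\sigma(H_\Theta)\setminus\set{\mu}$, which sits at distance at least $\eta - \delta$ from $\lambda$. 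The hypothesis $4\delta < \eta$ gives $\eta - \delta > \eta/\sqrt 2$, so $\|\vphi^\perp\| \le \delta/(\eta - \delta) \le \sqrt 2\,\delta/\eta$; the bound $|\scal{\psi_\mu,\vphi}|^2 = 1 - \|\vphi^\perp\|^2 \ge 1 - 2\delta^2/\eta^2$ is then immediate.

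For the final bound I would align by taking $\what\vphi = \vtheta\vphi$ as in \eq{aligning}, so that $c := \scal{\psi_\mu,\what\vphi} = |c_\mu| \ge 0$ and $\what\vphi = c\,\psi_\mu + \vtheta\vphi^\perp$ with $\vtheta\vphi^\perp\perp\psi_\mu$. Pythagoras together with the identity $\|\vphi^\perp\|^2 = 1 - c^2$ produces
\[
\|\what\vphi - \psi_\mu\|^2 = (c-1)^2 + \|\vphi^\perp\|^2 = (1-c)^2 + (1-c)(1+c) = 2(1-c) = \frac{2\|\vphi^\perp\|^2}{1+c}.
\]
Since $2\delta^2/\eta^2 < 1/8$ under $4\delta < \eta$, we have $c^2 > 7/8$, so $c > \sqrt{7/8} > 7/9$ and therefore $1+c > 16/9$. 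Plugging in $\|\vphi^\perp\|^2 \le 2\delta^2/\eta^2$ bounds the right-hand side by $9\delta^2/(4\eta^2)$, which gives $\|\what\vphi - \psi_\mu\| \le 3\delta/(2\eta) < 2\delta/\eta$.

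The main obstacle is squeezing out the sharp prefactor $3/2$ rather than the crude $2$ that one gets by throwing $1+c \ge 1$ away: the rewriting $1-c = (1-c^2)/(1+c)$ is essential, and the strong quantitative hypothesis $4\delta < \eta$ (rather than merely $\delta < \eta/2$) is precisely what is needed to make $1+c$ comparable to $2$. Everything else is standard functional calculus on the invariant subspace $\psi_\mu^\perp$.
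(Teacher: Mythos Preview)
Your proof is correct and follows essentially the same route as the paper: decompose $\vphi$ along $\psi_\mu$, use the spectral gap on $\psi_\mu^\perp$ to bound $\|\vphi^\perp\|$, then align phases and compute $\|\what\vphi-\psi_\mu\|$. The only difference is cosmetic: the paper bounds $(1-c)^2+\|\vphi^\perp\|^2$ via $1-\sqrt{1-x}\le x$ and then $x^2+x\le 9x/8$ for $x<1/8$, whereas your identity $(1-c)^2+(1-c^2)=2(1-c)=2\|\vphi^\perp\|^2/(1+c)$ with a lower bound on $1+c$ is a slightly cleaner way to reach the same $9\delta^2/(4\eta^2)$.
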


\begin{proof}We have
\beq
\vphi= \vphi_\mu + \vphi^\perp,\qtx{where}  \vphi_\mu= \scal{\psi_\mu,\vphi}\psi_\mu \qtx{and}  \scal{\psi_\mu,\vphi^\perp}=0.
\eeq
Thus
\begin{align}
\pa{H_{\Theta} -\lambda}\vphi = (\mu -\lambda)\vphi_\mu  + \pa{H_{\Theta} -\lambda}\vphi^\perp,
\end{align}
and
\begin{align}
\norm{ \pa{H_{\Theta} -\lambda}\vphi^\perp}&\ge \norm{ \pa{H_{\Theta} -\mu}\vphi^\perp}-\norm{ \pa{\mu -\lambda}\vphi^\perp}\\ \notag & \ge
\pa{\eta- \abs{\mu -\lambda}} \norm{ \vphi^\perp},
\end{align}
which gives
\begin{align}
\delta^2&\ge \norm{\pa{H_{\Theta} -\lambda}\vphi}^2 = \pa{\mu -\lambda}^2\norm{\vphi_\mu}^2  +\norm{ \pa{H_{\Theta} -\lambda}\vphi^\perp}^2\\ \notag
& \ge   \pa{\mu -\lambda}^2\norm{\vphi_\mu}^2 +\pa{\eta- \abs{\mu -\lambda}}^2 \norm{ \vphi^\perp}^2 \\ \notag
&= {\pa{\mu -\lambda}^2\pa{1-\norm{ \vphi^\perp}^2 } +\pa{\eta- \abs{\mu -\lambda}}^2 \norm{ \vphi^\perp}^2}
\\ \notag
&= {  \pa{\mu -\lambda}^2 + \pa{\pa{\eta- \abs{\mu -\lambda}}^2-  \pa{\mu -\lambda}^2}\norm{ \vphi^\perp}^2 }\\ \notag
&= {\pa{\mu -\lambda}^2 +\pa{\eta^2- 2\eta \abs{\mu -\lambda}}\norm{ \vphi^\perp}^2 },
\end{align}
and we conclude, using $4\delta< \eta$, that
\beq
\norm{ \vphi^\perp}^2\le \frac{\delta^2-\pa{\mu -\lambda}^2 } {{\eta^2- 2\abs{\mu -\lambda}\eta}}
\le \frac{\delta^2 }{\eta\pa{\eta -2 \delta} }\le \frac{2\delta^2 }{\eta^2}, \qtx{so} \abs{\scal{\psi_\mu,\vphi}}^2\ge 1 - \frac{2\delta^2 }{\eta^2}.
\eeq
It follows that, if we set set  $\what{\vphi}= {\vtheta} \vphi$, where ${\vtheta} \in \C$ with $\abs{{\vtheta}}=1$ is chosen so $\scal{\psi_\mu,\what{\vphi}}>0$, we have
\begin{align} 
 \norm{\what{\vphi} -\psi_\mu}^2 &=
\abs{1- \scal{\psi_\mu,\what{\vphi}}}^2 + \norm{ \vphi^\perp}^2= \abs{1- \abs{\scal{\psi_\mu,{\vphi}}}}^2 + \norm{ \vphi^\perp}^2   \\ \notag  &
\le  \abs{1- \pa{1 - \frac{2\delta^2 }{\eta^2}}^{\frac 12}}^2 + \frac{2\delta^2 }{\eta^2}
\le {{\frac{9\delta^2 }{4\eta^2}}},
\end{align}
where we have used  $1- \pa{1-x}^{\frac 12} \le x$ for  $x\in[0,1]$ and  $4\delta < \eta$.
\end{proof}

\subsection{Localizing boxes}

\begin{lemma}\label{lemdecayvphi2}   Let $\La_L$ be a box, $x\in \La_L$,   $m\ge m_->0$,  and  suppose $\vphi\in \ell^2(\La_L)$  is an  $(x,m)$-localized eigenfunction of $H_{\La_L}$ with eigenvalue $\lambda \in \R$. 
Then 
 for  all subsets  $\La_L\subset \Theta \subset \Z^d$ such that $x \in \La_L^{\Th,L_\tau}$  
 we have 
\beq\label{appeig}
\dist\pa{\lambda, \sigma(H_{\Theta})}\le \norm{\pa{H_{\Theta} -\lambda}{\vphi} }\le   e^{-m_1 L_\tau},
\eeq
 for 
 $L\ge \cL(d,m_-,\eps_0)$,
where
\beq
 m_1=m_1(L)\ge  m\pa{1- C_{d,m_-,\eps_0}\tfrac {\log L}{L^\tau}}. \label{m1}
\eeq
\end{lemma}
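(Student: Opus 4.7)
\smallskip

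\noindent\textbf{Proof plan.} The plan is to feed the bound from the second part of Lemma~\ref{lemdecayvphi} (the special case for a box $\Phi=\La_L$) into the exponential decay hypothesis on $\vphi$. Since $\vphi\in\ell^2(\La_L)\subset\ell^2(\Th)$ with $\norm{\vphi}=1$, the inequality $\dist(\lambda,\sigma(H_\Th))\le\norm{(H_\Th-\lambda)\vphi}$ is immediate from the variational characterization of the distance to the spectrum, so the task reduces to bounding $\norm{(H_\Th-\lambda)\vphi}$.

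First I would invoke \eqref{distspt2} to obtain
\[
\norm{(H_\Th-\lambda)\vphi}\le \eps\sqrt{s_d}\,L^{\frac{d-1}{2}}\norm{\vphi_{\partialin^\Th\La_L}}_\infty.
\]
The next step is to use the assumption $x\in\La_L^{\Th,L_\tau}$ to control $\norm{\vphi_{\partialin^\Th\La_L}}_\infty$. By definition \eqref{defLatTh}, this assumption gives $\dist(x,\Th\setminus\La_L)>L_\tau$. For any $y\in\partialin^\Th\La_L$ there is a $v\in\Th\setminus\La_L$ with $\abs{y-v}=1$, hence $\norm{y-v}=1$, and therefore
\[
\norm{y-x}\ge \dist(x,\Th\setminus\La_L)-1 > L_\tau-1,
\]
so $\norm{y-x}\ge L_\tau$ (both sides being integers). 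The $(x,m)$-localization of $\vphi$ in \eqref{hypdec} then yields $\abs{\vphi(y)}\le e^{-m\norm{y-x}}\le e^{-mL_\tau}$ for every $y\in\partialin^\Th\La_L$, i.e., $\norm{\vphi_{\partialin^\Th\La_L}}_\infty\le e^{-mL_\tau}$.

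Combining the last two displays gives
\[
\norm{(H_\Th-\lambda)\vphi}\le \eps\sqrt{s_d}\,L^{\frac{d-1}{2}}\,e^{-mL_\tau}=e^{-m_1 L_\tau},
\]
with
\[
m_1 := m-\tfrac{1}{L_\tau}\bpa{\log\eps+\tfrac{1}{2}\log s_d+\tfrac{d-1}{2}\log L}.
\]
It remains to check the lower bound \eqref{m1}. Using $\eps\le\eps_0<1$ one has $\log\eps\le 0$, and since $L_\tau=\fl{L^\tau}\ge\tfrac12 L^\tau$ for $L$ large, one gets
\[
m-m_1\le \tfrac{2}{L^\tau}\bpa{\tfrac12\log s_d+\tfrac{d-1}{2}\log L}\le \tfrac{C_d\log L}{L^\tau}
\]
for $L\ge\cL_0(d)$. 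Since $m\ge m_-$, this gives $m-m_1\le \tfrac{C_d}{m_-}\,m\,\tfrac{\log L}{L^\tau}$, which is precisely \eqref{m1} with $C_{d,m_-,\eps_0}=C_d/m_-$, valid for $L\ge\cL(d,m_-,\eps_0)$.

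The proof is essentially bookkeeping: the only place where care is needed is the geometric step, namely transferring the condition $x\in\La_L^{\Th,L_\tau}$ into the lower bound $\norm{y-x}\ge L_\tau$ on the interior boundary, so that the $(x,m)$-localization applies uniformly there. Everything else is a one-line application of Lemma~\ref{lemdecayvphi} and a logarithmic rearrangement; the $\eps_0$-dependence in the constant $C_{d,m_-,\eps_0}$ is a safety buffer that in the end is not needed because $\log\eps\le 0$ helps, not hurts, the bound.
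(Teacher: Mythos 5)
Your proof is correct and follows essentially the same route as the paper: apply \eqref{distspt2}, use $x\in\La_L^{\Th,L_\tau}$ to get $\dist\pa{x,\partialin^{\Th}\La_L}\ge L_\tau$ so that \eqref{hypdec} gives $\norm{\vphi_{\partialin^{\Th}\La_L}}_\infty\le \e^{-mL_\tau}$, and then absorb the prefactor $\eps\sqrt{s_d}L^{\frac{d-1}{2}}$ into $m_1$. The only minor caveat is your aside that the $\eps_0$-dependence is unnecessary because $\log\eps\le 0$: that uses $\eps_0<1$, which is not assumed in the lemma's setting (only $0<\eps\le\eps_0$ for a fixed $\eps_0$), but this is harmless since \eqref{m1} permits $C_{d,m_-,\eps_0}$ and one simply bounds $\log\eps\le\log\eps_0$.
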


\begin{proof}

Since  $x \in \La_L^{\Th,L_\tau}$, we have $\dist \pa{x,\partial_{\mathrm{in}}^\Theta \La_L}\ge {L_\tau}$, so it follows from \eq{distspt2} in Lemma~\ref{lemdecayvphi} that  
\beq\label{appeig4}
 \norm{\pa{H_{\Theta} -\lambda}{\vphi} }\le  \eps  \sqrt{ s_d }L^{\frac {d-1}2}\norm{\vphi_{ \partial_{\mathrm{in}}^\Theta{\La_L}}}_\infty\le \eps_0 \sqrt{ s_d }L^{\frac {d-1}2} \e^{-m {L_\tau}}\le  e^{-m_1 L_\tau},
 \eeq
 where $m_1$ is  as in \eq{m1}.
\end{proof}

  \begin{lemma}\label{lemdecay2}   Let  $\Theta\subset \Z^d$,  fix  $ m_->0$, and let $m\ge m_-$.  Let $\psi\colon{\Theta} \to \C$ be a generalized eigenfunction for $H_{\Theta}$ with generalized eigenvalue $\lambda \in \R$. 
Consider a box  $ \La_\ell \subset{\Theta}$  such that $\La_\ell$ is   $m$-localizing   with   an $m$-localized eigensystem $\set{\vphi_u,\nu_u}_{u \in  {\Lambda}_{\ell}}$, and suppose
\beq\label{distpointeig}
\abs{\lambda - \nu_u}
\ge \tfrac 1 2\e^{-L^\beta}\qtx{for all} u\in \La_\ell^{{\Theta},\ell_\tau} .
\eeq
 Then the following holds for sufficiently large $L$:
\begin{enumerate}
\item If  $y\in \La_\ell^{{\Theta},2\ell_\tau}$ we have
\beq\label{decayest00}
\abs{\psi(y)}\le \e^{-m_2{\ell_{\tau}}}\abs{\psi(y_1)} \qtx{for some} y_1 \in {\partial}^{\Th, 2\ell_\tau }  \Lambda_{\ell},
\eeq
where
\beq\label{m414}
m_2=m_2(\ell) \ge m\pa{1-   C_{d,m_-,\eps_0} \ell^{\gamma \beta -\tau}}.
\eeq

\item If $y\in \La_\ell^{\Theta,2\ell_{{\ttau}}}$ we have
\beq\label{decayest12}
\abs{\psi(y)}\le \e^{-m_3{\norm{y_2-y}}}  \abs{\psi(y_2)} \sqtx{for some} y_2 \in{\partial}^{\Th,\ell_{{\ttau}}}  \Lambda_{\ell},  \sqtx{so}\norm{y_2-y}>\ell_{{\ttau}}, 
\eeq
where
\beq\label{m4}
m_3= m_3(\ell)\ge m\pa{1-   C_{d,m_-,\eps_0} \ell^{\frac {\tau -1}2}}.
\eeq
\end{enumerate}
  \end{lemma}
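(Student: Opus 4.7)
The plan is to use the given $m$-localizing eigensystem $\{(\varphi_u,\nu_u)\}_{u\in\La_\ell}$ as an orthonormal basis for $\ell^2(\La_\ell)$ and expand $\psi|_{\La_\ell}(y)=\sum_{u\in\La_\ell}\scal{\varphi_u,\psi}\,\varphi_u(y)$. Since each $\varphi_u$ has finite support in $\Theta$, applying the generalized-eigenfunction identity \eq{pointeig} to $\varphi_u$ gives $\scal{(H_\Theta-\lambda)\varphi_u,\psi}=0$; combining this with the Schur decomposition \eq{Hdecomp} and Lemma~\ref{lemdecayvphi} produces the boundary identity
\begin{equation*}
(\lambda-\nu_u)\scal{\varphi_u,\psi} \;=\; \eps\sum_{v\in\partialex^\Theta\La_\ell}\overline{\varphi_u(\hat v)}\,\psi(v).
\end{equation*}
I would then split the $u$-sum into \emph{good} $u\in\La_\ell^{\Theta,\ell_\tau}$, for which the hypothesis \eq{distpointeig} permits division with $|\lambda-\nu_u|^{-1}\le 2e^{L^\beta}$, and \emph{bad} $u\in\partialin^{\Theta,\ell_\tau}\La_\ell$, for which I would use Cauchy--Schwarz $|\scal{\varphi_u,\psi}|\le\sqrt{|\La_\ell|}\,\|\psi|_{\La_\ell}\|_\infty$ and exploit that $\|y-u\|\ge d(y)-\ell_\tau$ forces $\varphi_u(y)$ to be small via $m$-localization, where $d(y):=\dist(y,\Theta\setminus\La_\ell)$.

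The key kernel bound is that for any $u\in\La_\ell$, $y\in\La_\ell^{\Theta,2\ell_\tau}$, and $\hat v\in\partialin^\Theta\La_\ell$, the triangle inequality combined with $\|y-\hat v\|>2\ell_\tau$ forces at least one of $\|y-u\|,\|u-\hat v\|$ to exceed $\ell_\tau$, so the $m$-localization of $\varphi_u$ at $u$ gives $|\varphi_u(y)\varphi_u(\hat v)|\le e^{m\ell_\tau}e^{-m\|y-\hat v\|}$ uniformly in $u$. Summing the good piece with this bound and the geometric tail $\sum_v e^{-m\|y-\hat v\|}\le C_{d,m}\,e^{-md(y)}$, and adding the parallel bad-piece estimate, yields the master inequality
\begin{equation*}
|\psi(y)| \;\le\; \mathrm{pol}(\ell)\,e^{L^\beta+m\ell_\tau-m\,d(y)}\,M_0,
\qquad M_0:=\max_{z\in\La_\ell\cup\partialex^\Theta\La_\ell}|\psi(z)|.
\end{equation*}

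For part~(i), at $d(y)=2\ell_\tau$ the prefactor is $\mathrm{pol}(\ell)\,e^{L^\beta-m\ell_\tau}<1$ for large $\ell$ by $\tau>\gamma\beta$ from \eq{ttauzeta}. A maximum-principle argument then forces the maximizer of $M_0$ to lie in $\partial^{\Theta,2\ell_\tau}\La_\ell$ (else $M_0=0$), proving \eq{decayest00} with $m_2=m-(L^\beta+\log\mathrm{pol}(\ell))/\ell_\tau\ge m(1-C_{d,m_-,\eps_0}\ell^{\gamma\beta-\tau})$, matching \eq{m414}. For part~(ii), I would apply the same master bound at the thicker interior depth $2\ell_{\ttau}$ and rerun the maximum principle at depth $\ell_{\ttau}$; since $\ell_{\ttau}\gg\ell_\tau$, the inequality $\mathrm{pol}(\ell)\,e^{L^\beta+m\ell_\tau-m\ell_{\ttau}}<1$ holds comfortably, so the maximizer $y_2$ lies in $\partial^{\Theta,\ell_{\ttau}}\La_\ell$, and $\|y-y_2\|>\ell_{\ttau}$ follows from $d(y)>2\ell_{\ttau}$. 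Converting $e^{-md(y)}$ into $e^{-m_3\|y-y_2\|}$ via the geometric relation $d(y)\le\|y-y_2\|+\ell_{\ttau}$ incurs a rate loss of order $m\ell_{\ttau}/\|y-y_2\|\lesssim m\ell_{\ttau}/\ell=m\ell^{(\tau-1)/2}$, recovering \eq{m4} by the identity $\ttau-1=(\tau-1)/2$ from \eq{ttauzeta2}.

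The main obstacle is the treatment of the bad indices: \eq{distpointeig} gives no control on $|\lambda-\nu_u|$ for $u\in\partialin^{\Theta,\ell_\tau}\La_\ell$, so the entire exponential decay of that piece has to come from $\varphi_u(y)$ itself, which is precisely where the hypothesis $y\in\La_\ell^{\Theta,2\ell_\tau}$ (with the extra factor of $2$) becomes indispensable. Matching the competing exponents $L^\beta=\ell^{\gamma\beta}$, $m\ell_\tau$, $m\ell_{\ttau}$, and $m\,d(y)$ against the polynomial prefactors is the delicate bookkeeping, and the ordering $\gamma\beta<\tau<\ttau<1$ in \eq{ttauzeta}--\eq{ttauzeta2} is exactly what makes all the exponents fall the right way.
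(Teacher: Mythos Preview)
Your approach to part~(i) is correct and is essentially a maximum-principle rephrasing of the paper's iteration argument: the paper obtains a one-step contraction $|\psi(y)|\le e^{-m_2\ell_\tau}|\psi(y_1)|$ for some $y_1\in\La_\ell\cup\partialex^\Theta\La_\ell$ and then iterates until $y_1$ exits $\La_\ell^{\Theta,2\ell_\tau}$, which is equivalent to your observation that the maximizer of $|\psi|$ cannot lie in $\La_\ell^{\Theta,2\ell_\tau}$.

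There is, however, a genuine gap in your treatment of part~(ii). Your master inequality controls $|\psi(y)|$ by $e^{-m\,d(y)}M_0$, where $d(y)=\dist(y,\Theta\setminus\La_\ell)$ and $M_0=|\psi(y_2)|$ with $y_2$ the global maximizer over $\La_\ell\cup\partialex^\Theta\La_\ell$. But the lemma demands decay in $\norm{y-y_2}$, and the inequality $d(y)\le\norm{y-y_2}+\ell_{\ttau}$ that you invoke points the wrong way: it gives $e^{-m d(y)}\ge e^{-m\norm{y-y_2}}e^{-m\ell_{\ttau}}$, which is useless as an upper bound. The problem is geometric: $y$ could sit at depth $d(y)\sim 2\ell_{\ttau}$ near one face while the maximizer $y_2$ lies near the opposite face, so $\norm{y-y_2}\sim\ell\gg d(y)$, and no choice of $m_3\approx m$ makes $e^{-m d(y)}\le e^{-m_3\norm{y-y_2}}$ hold. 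Your rate computation $m\ell_{\ttau}/\norm{y-y_2}\lesssim m\ell_{\ttau}/\ell$ also tacitly assumes $\norm{y-y_2}\gtrsim\ell$, which is not given.

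The paper avoids this by never passing to the global maximum $M_0$. Instead, for each step it retains the \emph{specific} point $y_1$ (coming from the term that dominates the sum over $v\in\partialex^\Theta\La_\ell$ or over $v_3\in\La_\ell$), obtaining a one-step estimate $|\psi(y)|\le e^{-m_3\max\{\norm{y_1-y},\frac12\ell_{\ttau}\}}|\psi(y_1)|$ in which the exponent records the actual displacement $\norm{y_1-y}$. Iterating and applying the triangle inequality to the chain $y\to y_1\to\cdots\to\tilde y$ then yields decay in $\norm{\tilde y-y}$. To repair your argument you would need to keep the specific $v^*$ (rather than bounding $\sum_v e^{-m\norm{y-\hat v}}|\psi(v)|$ by $e^{-m d(y)}M_0$) and iterate; at that point it becomes the paper's proof.
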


\begin{proof} Given $y\in \La_\ell$, we have  (see \eq{psiy} and \eq{defLatTh}) 
\begin{align}\label{sum000}
\psi(y)  =   \sum_{u\in \Lambda_{\ell} }{\vphi_u(y)} \scal{ \vphi_u,\psi}=  \sum_{u\in  \La_\ell^{{\Theta},{\ell_\tau}}}{\vphi_u(y)} \scal{ \vphi_u,\psi} +   \sum_{u\in \partialin^{\Th,{\ell_\tau}} \La_\ell}{\vphi_u(y)} \scal{ \vphi_u,\psi}.
\end{align}

Let us fix  $u \in   \La_\ell^{{\Theta},{\ell_\tau}}$.  We have  
   $\abs{\lambda -\nu_u}\ge \tfrac 1 2\e^{-L^\beta}$ by \eq{distpointeig}.
 Since  $\La_\ell$ is finite,   \eq{pointeig}  gives
\beq
\scal{ \vphi_u,\psi}= \pa{\lambda- \nu_u }^{-1}\scal{\pa{H_{\Theta}- \nu_u} \vphi_u, \psi} .
\eeq
It follows from  \eq{HThetaL} that
\begin{align}\label{vphipsi399}
\abs{{\vphi_u(y)}\scal{ \vphi_u,\psi}}& \le 2\e^{L^\beta} \eps  \sum_{  v \in   \partial_{\mathrm{ex}}^{{\Theta}}  \Lambda_{\ell}}\abs{ {\vphi_u(y)}\vphi_u (\hat{v})}\abs{\psi(v)}.
\end{align}
  If $v^\pr \in \partial_{\mathrm{in}}^{{\Theta}}   \Lambda_{\ell} $, we have $ \norm{v^\pr-u}\ge {\ell_{\tau}}$, so it follows from   \eq{hypdec} that  
\beq
\abs{\vphi_u (v^\pr)}\le \e^{-m\norm{v^\pr-u}}\le \e^{-m \ell_{\tau}}. 
\eeq 
Since $\norm{\vphi_u}=1$, we get from 
\eq{vphipsi399} that 
\begin{align}\label{vphipsi59987}
\abs{{\vphi_u(y)}\scal{ \vphi_u,\psi}}& \le 2 \eps\e^{L^\beta}  \e^{-m{\ell_\tau}}\!\!\sum_{  v \in    \partial_{\mathrm{ex}}^{{\Theta}} \Lambda_{\ell}} \abs{\psi(v)}  
\le  2 \eps\e^{L^\beta} s_d\ell^{d-1} \e^{-m{\ell_\tau}}\abs{\psi(v_1)},
\end{align} 
for some $ v_1 \in\partial_{\mathrm{ex}}^{{\Theta}} \Lambda_{\ell}$.
It follows that
\beq\label{sum29987}
\abs{ \sum_{u\in \La_\ell^{{\Theta},{\ell_\tau}}}{\vphi_u(y)} \scal{ \vphi_u,\psi}}\le 2 \eps s_d\e^{L^\beta} \ell^{2d-1} \e^{-m{\ell_\tau}}\abs{\psi(v_2)}
\eeq
for some $v_2 \in\partial_{\mathrm{ex}}^{{\Theta}}  \Lambda_{\ell}$.

Let $y\in \La_\ell^{{\Theta},2\ell_\tau}$.  If $u\in \partial_{\mathrm{in}}^{\Theta,\ell_\tau} \La_\ell$     we  have 
$\norm{u-y}\ge 2\ell_\tau -\ell_\tau=\ell_\tau$, so 
\eq{hypdec} gives $\abs{\vphi_u(y)} \le \e^{-m {\norm{u-y}}}\le   \e^{-m\ell_\tau}$, and thus 
\begin{align}\label{sum19987}
\abs{\sum_{u\in \partial_{\mathrm{in}}^{\Theta,\ell_\tau} \La_\ell}{\vphi_u(y)} \scal{ \vphi_u,\psi} }\le \ell^{d }   \e^{-m\ell_\tau} \norm{\psi \Chi_{\La_\ell}}\le  \ell^{\frac {3d}2 }   \e^{-m\ell_\tau}\abs{\psi(v_3)} \end{align}
 for some  $v_3 \in \Lambda_{\ell}$.
 Combining \eq{sum000}, \eq{sum29987}, and       \eq{sum19987}, we get ($\ell$ large)
\beq\label{m41455}
\abs{\psi(y)}\le (1+\eps_0) \e^{L^\beta} \ell^{2d} \e^{-m{\ell_\tau}}\abs{\psi(y_1)}\le  \e^{-m_2{\ell_{\tau}}}\abs{\psi(y_1)} .
\eeq
for some $y_1 \in \La_\ell \cup\partial_{\mathrm{ex}}^{{\Theta}}  \Lambda_{\ell}$,  where $m_2$ is given in \eq{m414}.
(Note $\tau> \gamma \beta$.)
If $y_1\in \partial^{\Th, 2\ell_{\tau}}\La_\ell $ we have  \eq{decayest00}.
If not, we repeat the procedure to  estimate $\abs{\psi(y_1)}$.      Since we can suppose $\psi(y)\ne 0$ without loss of generality,  it is clear that the procedure must stop after finitely many times,  and at that time we must have \eq{decayest00}.

Now let   $y\in \La_\ell^{{\Theta},\ell_{\ttau}}$, so  $\norm{y-v^\pr} \ge\ell_{\ttau}$ for $v^\pr \in \partialin^{{\Theta}}  \Lambda_{\ell} $.   Thus for   $u \in   \La_\ell^{{\Theta},{\ell_\tau}}$  and  $v^\pr \in \partialin^{{\Theta}}  \Lambda_{\ell}$ we have 
 \beq\label{vphipsi499}
\abs{{\vphi_u(y)}\vphi_u (v^\pr)}\le 
  \begin{cases}
\e^{-m\pa{\norm{y-u}+\norm{v^\pr-u}}}\le \e^{-m \norm{v^\pr -y}} & \qtx{if} \norm{y-u}\ge 
{\ell_ \tau}\\
\e^{-m\pa{\norm{v^\pr-u}}}\le \e^{-m^\pr_1\norm{v^\pr -y}}   &\qtx{if} \norm{y-u}   <\ell_\tau
\end{cases},
\eeq
where 
\beq
 m^\pr_1\ge  m(1-   2 \ell^{\tau-\ttau})=  m(1-  2  \ell^{\frac {\tau -1}2}),
\eeq
since for $\norm{y-u}   <{\ell_ \tau}$ we have 
\beq
\norm{v^\pr-u} \ge \norm{v^\pr-y}- \norm{y-u}\ge \norm{v^\pr-y} - {\ell_\tau}\ge  \norm{v^\pr-y} \pa{1- \tfrac {\ell_\tau}{\ell_ {\ttau}}}.
\eeq
Combining \eq{vphipsi399} and \eq{vphipsi499}, we get
\begin{align}\label{vphipsi599}
&\abs{{\vphi_u(y)}\scal{ \vphi_u,\psi}}\le 2 \eps\e^{L^\beta}   \sum_{  v \in   \partialex^{{\Theta}}  \Lambda_{\ell}} \e^{-m^\pr_1 \pa{\norm{v -y}-1}}\abs{\psi(v)}\\ & \notag   \qquad 
\le  2 \eps\e^{L^\beta} s_d\ell^{d-1}\ \e^{-m^\pr_1 \pa{\norm{v_1 -y}-1}}\abs{\psi(v_1)}\le
 \e^{-m^\pr_2 {\norm{v_1 -y}}}\abs{\psi(v_1)}
\end{align}
for some $v_1 \in\partialex^{{\Theta}}  \Lambda_{\ell}$, where
\beq
m^\pr_2\ge  m^\pr_1\pa{1- C_{d,m_-,\eps_0} \ell^{\gamma \beta- \ttau}}\ge m\pa{1-   C_{d,m_-,\eps_0} \ell^{\frac {\tau -1}2}},
\eeq
where we used  $\norm{v_1 -y}\ge  \ell_{\ttau}$ and  $\ttau >\tau> \gamma \beta$.
It follows that
\beq\label{sum299}
\abs{ \sum_{u\in \La_\ell^{{\Theta},{\ell_\tau}}}{\vphi_u(y)} \scal{ \vphi_u,\psi}}\le \ell^d \e^{-m^\pr_2 {\norm{v_2 -y}}}\abs{\psi(v_2)}\le \e^{-m^\pr_3 {\norm{v_2 -y}}}\abs{\psi(v_2)}
\eeq
for some $v_2 \in\partialex^{{\Theta}}  \Lambda_{\ell}$,
where
\beq
m^\pr_3\ge m^\pr_2\pa{1- C_{d,m_-,\eps_0} \tfrac {\log \ell}{\ell^{\ttau}}}\ge m\pa{1-   C_{d,m_-,\eps_0} \ell^{\frac {\tau -1}2}}.
\eeq

If $u\in \partialin^{\Th,\ell_\tau} \La_\ell$    we must have 
$\norm{u-y}\ge \ell_{\ttau} - \ell_{\tau} >\frac 1 2 \ell_{\ttau}$, so 
\eq{hypdec} gives $\abs{\vphi_u(y)} \le \e^{-m {\norm{u-y}}}$ and, using \eq{hypdec2} for   $\vphi_u$, we get 
\begin{align}
&\abs{\scal{ \vphi_u,\psi}}=\abs{ \sum_{v \in \Lambda_{\ell} } {\vphi_u(v)}\psi(v)}  \le   \sum_{v \in \Lambda_{\ell}} \e^{-m \pa{\norm{v-u}-\ell_\tau}} \abs{\psi(v)} ,
\end{align}
so we conclude that
\begin{align}
&\abs{{\vphi_u(y)} \scal{ \vphi_u,\psi}} \le  \sum_{v \in \Lambda_{\ell}} \e^{-m \pa{\norm{u-y}-\ell_\tau + \norm{v-u}}} \abs{\psi(v)}  \\ \notag & \quad \le   (\ell+1)^d \e^{-m \pa{\norm{u-y}-\ell_\tau }- m \norm{v_3-u}} \abs{\psi(v_3)} \\ \notag & \quad  \le  \e^{-m^\pr_4 \norm{u-y} - m \norm{v_3-u}} \abs{\psi(v_3)}  \\ & \notag\quad  \le    \e^{-m^\pr_4 \max\set{\norm{v_3-y},\norm{u-y}}} \abs{\psi(v_3)}\le    \e^{-m^\pr_4 \max\set{\norm{v_3-y},\ \frac 1 2  \ell_{\ttau}}}  \abs{\psi(v_3)},
\end{align}
for some $ v_3 \in \Lambda_{\ell} $, 
where we used $ \norm{u-y} \ge \frac 1 2  \ell_{\ttau}$ and took 
\begin{align}
m^\pr_4 &\ge   m(1-  C_{d,m_-}  \ell^{\frac {\tau -1}2}).
\end{align}
It follows that 
\begin{align}\label{sum199}
\abs{\sum_{u\in\partialin^{\Th,\ell_\tau} \La_\ell}{\vphi_u(y)} \scal{ \vphi_u,\psi} }&\le  \ell^{d } \e^{-m^\pr_4\max\set{\norm{v_3-y},\ \frac 1 2  \ell_{\ttau }}} \abs{\psi(v_3)} \\
& \notag \le \e^{-m^\pr_5 \max\set{\norm{v_3-y},\ \frac 1 2 \ell_{\ttau }}} \abs{\psi(v_3)} 
\end{align}
for some  $ v_3 \in \Lambda_{\ell}$, where
\begin{align}
m^\pr_5 \ge m^\pr_4 (1-   C_{d,m_-}\pa{ \log \ell}  \ell^{-\ttau})\ge m(1-  C_{d,m_-} \ell^{\frac {\tau -1}2}).
\end{align}
 Combining \eq{sum000}, \eq{sum299}, and       \eq{sum199}, we get
\beq\label{decayest0022} 
\abs{\psi(y)}\le \e^{-m_3 \max\set{\norm{y_1-y},\ \frac 1 2  \ell_{\ttau }}} \abs{\psi(y_1)} \qtx{for some} y_1 \in \La_\ell \cup\partialex^{{\Theta}}  \Lambda_{\ell},
\eeq
where $m_3$ is given in \eq{m4}.

 If in \eq{decayest0022}  we get $y_1\notin \partial^{{\Th, \ell_{\ttau } }}\La_\ell $ we  repeat the procedure to estimate $\abs{\psi(y_1)}$.    Since we can suppose $\psi(y)\ne 0$ without loss of generality, the procedure must stop after finitely many times,  and at that time we must have
\beq\label{decayest339} 
\abs{\psi(y)}\le \e^{-m_3 \max\set{\norm{\wtilde{y}-y},\ \frac 1 2  \ell_{\ttau }}}  \abs{\psi(\wtilde{y})} \qtx{for some} \wtilde{y} \in \partial^{{\Th, \ell_{\ttau } }}\La_\ell  .
\eeq

If $y\in \La_\ell^{\Theta,2\ell_{{\ttau}}}$, \eq{decayest12} is an immediate consequence of \eq{decayest339}.
\end{proof}

 \begin{lemma}\label{lem:ident_eigensyst} Let the finite set  $\Theta\subset\Z^d$ be $L$-level spacing for $H$, and  let  $\set{(\psi_\lambda,\lambda)}_{\lambda \in \sigma(H_\Th)}$ be an eigensystem for $H_\Th$.  
 
 Then the following holds for sufficiently large $L$:

 \begin{enumerate}

\item   Let   $\Lambda_\ell(a)\subset \Th$,  where  $ a\in \R^d$, be an  $m$-localizing box with an $m$-localized eigensystem $\set{(\vphi_x\up{a}, \lambda_x\up{a})}_{x\in \La_\ell(a)}$.

\begin{enumerate}
\item

There exists an injection    
\beq\label{injection}
x\in \La_\ell^{\Th,{\ell_\tau}}(a) \mapsto  \wtilde{\lambda}\up{a}_x\in \sigma(H_\Th),
\eeq
such that for all $ x\in \La_\ell^{\Th,{\ell_\tau}}(a)$ we have
\begin{align}
\abs{ \wtilde{\lambda}\up{a}_x -\lambda\up{a}_x} \le \e^{-m_1{\ell_\tau}},\mqtx{with} m_1=m_1(\ell) \;\;\text{as in \eq{m1}},  \label{tildedist132}
\end{align}
and, redefining each ${\vphi\up{a}_x}$ by multiplying it by a suitable phase factor (as in \eq{aligning}),  
\beq\label{difeq86}
\norm{\psi_{\wtilde{\lambda}\up{a}_x}-{\vphi\up{a}_x}}\le 2 \e^{-m_1{\ell_\tau}}\e^{L^\beta}.
\eeq
\item Let
\beq
 \sigma_{\set{a}}(H_\Th):= \set{ \wtilde{\lambda}\up{a}_x;\ x\in \La_\ell^{\Th,{\ell_\tau}}(a)}.
\eeq
Then for 
 $\lambda \in \sigma_{\set{a}}(H_\Th)$  we have
\beq \label{psidecout}
\abs{\psi_\lambda(y)}\le  2 \e^{-m_1{\ell_\tau}}\e^{L^\beta} \qtx{for all} y\in \Th \setminus \La_\ell(a).
\eeq

\item If  $\lambda \in \sigma(H_\Th)\setminus  \sigma_{\set{a}}(H_\Th)$, we have
 \beq\label{distpointeiga}
 \abs{\lambda - \lambda\up{a}_x}\ge  \tfrac 12 \e^{-\L^\beta} \qtx{for all} x\in \La_\ell^{{\Theta},\ell_\tau}(a),
 \eeq
and 
\beq
\abs{\psi_\lambda(y)}\le \e^{-m_2 {{\ell_\tau}}} \sqtx{for} y\in \La_\ell^{{\Theta},2{{\ell_\tau}}}(a), \sqtx{with} m_2=m_2(\ell) \;\text{as in \eq{m414}}.   \label{psidecgood}
\eeq
Moreover, if  $y\in \La_\ell^{{\Theta},2\ell_{\ttau}}(a)$ we have 
\beq\label{psidecgoodpr}
\abs{\psi_\lambda(y)}\le \e^{-m_3{\norm{y_1-y}}}  \abs{\psi_\lambda(y_1)} \qtx{for some} y_1 \in{\partial}^{\Th,\ell_{\ttau }}  \Lambda_{\ell}(a),
\eeq
 with $m_3=m_3(\ell)$ as is  in \eq{m4}.

  \end{enumerate}        
\item    Let  $\set{\Lambda_\ell(a)}_{a\in \cG} $, where $\cG \subset \R^d$ and  $\Lambda_\ell(a)\subset \Th$ for all $a \in \cG$,   be a collection of  $m$-localizing boxes with $m$-localized eigensystems   $\set{(\vphi_x\up{a}, \lambda_x\up{a})}_{x\in \La_\ell(a)}$, and set  
\begin{align}\label{defcE}
 \cE_\cG^\Th(\lambda)&=  \set{{\lambda\up{a}_x}; \;a\in \cG,\, x\in \La_\ell^{\Th,{\ell_\tau}}(a), \sqtx{and} \wtilde{\lambda}\up{a}_x=\lambda}\sqtx{for} \lambda\in \sigma(H_\Th), \\ \notag
\sigma_\cG(H_\Th)&=\set{ \lambda\in \sigma(H_\Th); \, \cE_\cG^\Th(\lambda)\ne \emptyset} =\textstyle\bigcup_{a\in \cG} \sigma_{\set{a}}(H_\Th).
\end{align}

\begin{enumerate}
\item 
Let $a,b \in \cG$, $a\ne b$,  Then, for  $x\in  \La_\ell^{\Th,{\ell_\tau}}(a)$ and $y\in  \La_\ell^{\Th,{\ell_\tau}}(b)$, 
\beq \label{vphilambda}
 {\lambda\up{a}_x},{\lambda\up{b}_y}\in \cE_\cG^\Th(\lambda) \quad \Longrightarrow \quad  \norm{x-y} < 2 {{\ell_\tau}}.
\eeq
As a consequence,
\beq\label{sigmaab}
\Lambda_\ell(a) \cap \Lambda_\ell(b)=\emptyset   \quad \Longrightarrow \quad \sigma_{\set{a}}(H_\Th)\cap \sigma_{\set{b}}(H_\Th)=\emptyset  .
\eeq
 
  \item  If  $\lambda \in\sigma_\cG(H_\Th)$, we have
  \beq \label{psidecout63}
\abs{\psi_\lambda(y)}\le  2 \e^{-m_1{\ell_\tau}}\e^{L^\beta}   \sqtx{for all} y\in\Th \setminus {\Th}_{\cG}, \sqtx{where}  {\Th}_{\cG}:= \bigcup_{a\in \cG}\La_\ell(a).
\eeq

 \item  If  $\lambda \in \sigma(H_\Th)\setminus\sigma_\cG(H_\Th)$, we have
\beq\label{psidecgood63}
\abs{\psi_\lambda(y)}\le\e^{-m_2{{\ell_\tau}}}   \qtx{for all} y\in{\Th}_{\cG,\tau}:= \bigcup_{a\in \cG}\La_\ell^{{\Theta},2{{\ell_\tau}}}(a).
\eeq

\item  If $\abs{\Th} \le (L+1)^d$,
it follows that
\beq\label{cGsigma}
\abs{{\Th}_{\cG,\tau}}\le\abs{\sigma_\cG(H_\Th)} \le \abs{{\Th}_{\cG}}.
\eeq
  
 \end{enumerate}
\end{enumerate} 

\end{lemma}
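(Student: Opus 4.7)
My plan is to combine three inputs: Lemma~\ref{lemdecayvphi2}, which says each $m$-localized eigenfunction $\vphi\up{a}_x$ with label deep in $\La_\ell(a)$ is an exponentially accurate approximate eigenfunction of $H_\Th$; the $L$-level spacing of $\Th$, which separates eigenvalues of $H_\Th$ by $\ge e^{-L^\beta}$; and the perturbative Lemma~\ref{lemdeltaeta}, which converts approximate eigenpairs into $L^2$-approximations of actual eigenpairs whenever the spectral gap dominates the error. The scale hierarchy \eqref{ttauzeta} reconciles these: since $\tau/\gamma>\beta$, we have $e^{-m_1\ell_\tau}\ll e^{-L^\beta}$ for $L$ large, placing us safely in the small-error regime of Lemma~\ref{lemdeltaeta}.

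For (i)(a), fix $x\in\La_\ell^{\Th,\ell_\tau}(a)$. Lemma~\ref{lemdecayvphi2} gives $\dist(\lambda\up{a}_x,\sigma(H_\Th))\le e^{-m_1\ell_\tau}$, and level spacing then forces the $e^{-m_1\ell_\tau}$-neighborhood of $\lambda\up{a}_x$ to contain a unique eigenvalue of $H_\Th$, which I denote $\wtilde{\lambda}\up{a}_x$; this yields \eqref{tildedist132}. Applying Lemma~\ref{lemdeltaeta} with $\delta=e^{-m_1\ell_\tau}$ and $\eta=e^{-L^\beta}$, then absorbing the resulting phase into $\vphi\up{a}_x$, gives \eqref{difeq86}. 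Injectivity: if $x\ne y$ both mapped to the same $\wtilde\lambda$, then \eqref{difeq86} would give $\|\vphi\up{a}_x-\vphi\up{a}_y\|\le 4e^{-m_1\ell_\tau}e^{L^\beta}$, contradicting $\|\vphi\up{a}_x-\vphi\up{a}_y\|=\sqrt{2}$ from orthonormality. For (i)(b), $y\notin\La_\ell(a)$ makes $\vphi\up{a}_x(y)=0$, and \eqref{difeq86} becomes the pointwise bound \eqref{psidecout}. For (i)(c), if $\lambda\in\sigma(H_\Th)\setminus\sigma_{\set{a}}(H_\Th)$ then $\lambda\ne\wtilde{\lambda}\up{a}_x$, so by level spacing $|\lambda-\wtilde{\lambda}\up{a}_x|\ge e^{-L^\beta}$ and hence $|\lambda-\lambda\up{a}_x|\ge\tfrac{1}{2}e^{-L^\beta}$, establishing \eqref{distpointeiga} and simultaneously supplying hypothesis \eqref{distpointeig} of Lemma~\ref{lemdecay2}; its conclusions applied to the generalized eigenpair $(\psi_\lambda,\lambda)$ (with $\|\psi_\lambda\|_\infty\le 1$) yield \eqref{psidecgood} and \eqref{psidecgoodpr}.

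For (ii)(a), suppose $\wtilde{\lambda}\up{a}_x=\wtilde{\lambda}\up{b}_y=\lambda$ with $a\ne b$; then (i)(a) applied to both boxes gives $\|\vphi\up{a}_x-\vphi\up{b}_y\|\le 4e^{-m_1\ell_\tau}e^{L^\beta}$. Localization plus normalization concentrates nearly all $L^2$-mass of $\vphi\up{a}_x$ (resp.~$\vphi\up{b}_y$) within $\ell_\tau$ of $x$ (resp.~$y$); when $\|x-y\|\ge 2\ell_\tau$ these concentration regions are disjoint, forcing $\|\vphi\up{a}_x-\vphi\up{b}_y\|^2\ge 2-O(e^{-m\ell_\tau})$ and contradicting the upper bound, which proves \eqref{vphilambda}. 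When $\La_\ell(a)\cap\La_\ell(b)=\emptyset$ the supports are already disjoint, so $\|\vphi\up{a}_x-\vphi\up{b}_y\|=\sqrt{2}$, proving \eqref{sigmaab}. Parts (ii)(b) and (ii)(c) follow directly from (i)(b), (i)(c) via the union structure of $\sigma_\cG(H_\Th)$, $\Th_\cG$, and $\Th_{\cG,\tau}$.

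For (ii)(d), let $P$ be the spectral projection of $H_\Th$ onto $\sigma_\cG(H_\Th)$. By (ii)(b), for $\lambda\in\sigma_\cG(H_\Th)$ we have $\|\Chi_{\Th_\cG}\psi_\lambda\|^2\ge 1-4|\Th|e^{-2m_1\ell_\tau}e^{2L^\beta}$; with $|\Th|\le(L+1)^d$ and $m_1\ell_\tau\gg L^\beta$, a standard operator-norm estimate forces $\Chi_{\Th_\cG}$ to be injective on $\Ran P$, so $|\sigma_\cG(H_\Th)|=\Rank P=\Rank(\Chi_{\Th_\cG}P)\le|\Th_\cG|$. Symmetrically, (ii)(c) gives $\|\Chi_{\Th\setminus\Th_{\cG,\tau}}\psi_\lambda\|^2\ge 1-|\Th|e^{-2m_2\ell_\tau}$ for $\lambda\in\sigma(H_\Th)\setminus\sigma_\cG(H_\Th)$, making $\Chi_{\Th\setminus\Th_{\cG,\tau}}$ injective on $\Ran(I-P)$ and yielding $|\Th|-|\sigma_\cG(H_\Th)|\le|\Th\setminus\Th_{\cG,\tau}|$, i.e., $|\Th_{\cG,\tau}|\le|\sigma_\cG(H_\Th)|$. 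The principal technical obstacle throughout is that the exponentially small perturbative bounds must simultaneously dominate the $O(1)$ orthogonality defect $\sqrt{2}$ (in the injectivity arguments) and the polynomially large ambient cardinality $|\Th|\le(L+1)^d$ (in the rank arguments); the scale relation $\tau/\gamma>\beta$ from \eqref{ttauzeta} is precisely what makes both estimates go through.
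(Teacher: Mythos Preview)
Your proof is correct and follows essentially the same route as the paper: Lemma~\ref{lemdecayvphi2} plus $L$-level spacing to define the map \eqref{injection}, Lemma~\ref{lemdeltaeta} for \eqref{difeq86}, Lemma~\ref{lemdecay2} for (i)(c), and a projection/rank comparison (the paper packages this as Lemma~\ref{lemPQ}) for (ii)(d). The only notable variation is that for injectivity in (i)(a) the paper invokes the $\ell$-level spacing of $\La_\ell(a)$ (built into the definition of $m$-localizing) to separate $\wtilde{\lambda}\up{a}_x$ from $\wtilde{\lambda}\up{a}_y$ directly, while you argue via orthonormality and \eqref{difeq86}; both work, and your disjoint-support shortcut for \eqref{sigmaab} is in fact cleaner than the paper's deduction from \eqref{vphilambda}.
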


\begin{proof}  Let   $\Lambda_\ell(a)\subset \Th$,  where  $ a\in \R^d$, be a $m$-localizing box with an $m$-localized eigensystem $\set{(\vphi_x\up{a}, \lambda_x\up{a})}_{x\in \La_\ell(a)}$.   Given $x\in \La_\ell^{\Th,\ell_{\tau}}(a)$, the existence of $\wtilde{\lambda}\up{a}_x\in \sigma(H_\Th)$ satisfying \eq{tildedist132} 
follows from Lemma~\ref{lemdecayvphi2}.  Uniqueness follows from  the fact that $\Theta$ is $L$-level spacing  and $\gamma \beta <\tau$. In addition, note that $\wtilde{\lambda}\up{a}_x \ne \wtilde{\lambda}\up{a}_y$ if $x,y \in  \La_\ell^{\Th,\ell_{\tau}}(a)$, $x\ne y$, because in this case we have
 \beq
\abs{ {\wtilde{\lambda}\up{a}_x} -\wtilde{\lambda}\up{a}_y}\ge  \abs{ {\lambda\up{a}_x} -\lambda\up{a}_y}- \abs{\wtilde{\lambda}\up{a}_x - \lambda\up{a}_x}  -  \abs{\wtilde{\lambda}\up{a}_y - \lambda\up{a}_y}\ge \e^{-\ell^\beta} - 2\e^{-m_1{\ell^\tau}}\ge  \tfrac 12 \e^{-\ell^\beta},
\eeq
$\La_\ell(a)$ is level spacing for $H$,  and $\beta < \tau$.
Moreover, it follows from Lemma~\ref{lemdeltaeta} that,  after multiplying  ${\vphi\up{a}_x}$ by a phase factor if necessary, we have \eq{difeq86}.

If $\lambda \in \sigma_{\set{a}}(H_\Th)$ , we have $\lambda=  \wtilde{\lambda}\up{a}_x$ for some $x\in \La_\ell^{\Th,{\ell_\tau}}(a) $, so \eq{psidecout} follows from \eq{difeq86} as
$\vphi\up{a}_x(y)=0$ for all  $y\in \Th \setminus \La_\ell(a)$.

 Let $\lambda \in \sigma(H_\Th)\setminus  \sigma_{\set{a}}(H_\Th)$.  Then for all $x\in \La_\ell^{{\Theta},\ell_\tau}(a)$ we have
 \beq\label{distpointsigeigab}
 \abs{\lambda - \lambda\up{a}_x}\ge  \abs{\lambda - \wtilde{\lambda}\up{a}_x} -  \abs{\wtilde{\lambda}\up{a}_x - \lambda\up{a}_x}\ge   \e^{-\L^\beta} -\e^{-m_1{\ell^\tau}}\ge \tfrac 12 \e^{-\L^\beta},
 \eeq
since $\Theta$ is $L$-level spacing for $H$, we have \eq{tildedist132}, and $\gamma \beta <\tau$. Thus  \eq{psidecgood} follows from Lemma~\ref{lemdecay2}(i) and $\norm{\psi_\lambda}=1$, and  \eq{psidecgoodpr} follows from Lemma~\ref{lemdecay2}(ii).

Now let 
$\set{\Lambda_\ell(a)}_{a\in \cG} $, where $\cG \subset \R^d$ and  $\Lambda_\ell(a)\subset \Th$ for all $a \in \cG$,   be a collection of  $m$-localizing boxes with $m$-localized eigensystems   $\set{(\vphi_x\up{a}, \lambda_x\up{a})}_{x\in \La_\ell(a)}$. 
Let  $\lambda\in \sigma(H_\Th)$, $a,b \in \cG$, $a\ne b$,  $x\in  \La_\ell^{\Th,\ell_{\tau}}(a)$, and $y\in  \La_\ell^{\Th,\ell_{\tau}}(b)$. Suppose ${\lambda\up{a}_x},{\lambda\up{b}_y}\in \cE^\Th_\cG(\lambda)$, where $\cE_\cG^\Th(\lambda)$ is given in \eq{defcE}.
It then  follows from  \eq{difeq86} that
\beq \label{difeq}
\norm{{\vphi\up{a}_x}- {\vphi\up{b}_y}}\le  4 \e^{-m_1{\ell_{\tau}}}\e^{L^\beta},
\eeq
so
\beq\label{disjs}
\abs{\scal{{\vphi\up{a}_x},{\vphi\up{b}_y}}}\ge  \Re \scal{{\vphi\up{a}_x},{\vphi\up{b}_y}}\ge 1- 8\e^{-2m_1{\ell_\tau}}\e^{2L^\beta} .
\eeq
On the other hand,  it follows from \eq{hypdec} that
\beq\label{disjs2}
\norm{x-y} \ge 2 {\ell_\tau} \quad \Longrightarrow  \quad \abs{\scal{{\vphi\up{a}_x},{\vphi\up{b}_y}}}\le   (\ell+1)^d  \e^{-m\ell_\tau    }.
\eeq
Combining \eq{disjs} and \eq{disjs2} we conclude that
\beq \label{vphiabxy}
{\lambda\up{a}_x},{\lambda\up{b}_y}\in \cE^\Th_\cG (\lambda) \quad \Longrightarrow \quad \norm{x-y} < 2 {\ell_{\tau}}.
\eeq

To prove \eq{sigmaab}, let $a,b \in \cG$, $a\ne b$. If $\Lambda_\ell(a) \cap \Lambda_\ell(b)=\emptyset $,  we have that
\beq
x\in  \La_\ell^{\Th,\ell_{\tau}}(a) \qtx{and}y\in  \La_\ell^{\Th,\ell_{\tau}}(b)\quad \Longrightarrow \quad  \norm{x-y} \ge 2\ell_{\tau},
\eeq
so it follows from  \eq{vphilambda}  that $\sigma_{\set{a}}(H_\Th)\cap \sigma_{\set{b}}(H_\Th)=\emptyset $.

Parts (ii)(b) and (ii)(c) are immediate consequence of parts (i)(b) and (i)(c), respectively.  To prove  part (ii)(d), note that, letting $P_\cG$ denote the orthogonal projection onto the span of $\set{\psi_\lambda; \ \lambda\in\sigma_\cG(H_{\Th}) }$, it follows from \eq{psidecgood63} that
\beq
\norm{(1-P_\cG) \delta_y} \le e^{-m_2{\ell_\tau}} \abs{\Th}^{\frac 1 2} \qtx{for all} y \in \Th_{\cG,\tau},
\eeq
so
\beq
\norm{(1-P_\cG) \Chi_{\Th_{\cG,\tau}}}\le \abs{{\Th_{\cG,\tau}}}^{\frac 1 2}  \abs{\Th}^{\frac 1 2}e^{-m_2{\ell_\tau}}\le \abs{\Th}e^{-m_2{\ell_\tau}}.
\eeq
If we assume $\abs{\Th} \le (L+1)^d$, we get
\beq
\norm{(1-P_\cG) \Chi_{\Th_{\cG,\tau}}}\le  (L+1)^d e^{-m_2{\ell_\tau}} < 1,
\eeq
so we conclude from   Lemma~\ref{lemPQ}  that 
\beq\label{Hallc43}
\abs{\Th_{\cG,\tau}} = \tr \Chi_{\Th_{\cG,\tau}} \le  \tr  P_\cG= \abs{\sigma_\cG(H_{\Th})}.
\eeq
A similar argument, using \eq{psidecout63}, proves  $\abs{\sigma_\cG(H_{\Th})}\le \abs{\Th_{\cG}}.$
\end{proof}

\subsection{Buffered subsets}  We will need to consider boxes $\La_\ell \subset \La_L $ that are  not $m$-localizing for $H$.  Instead of studying eigensystems for such boxes, we will surround them with a buffer of $m$-localizing boxes and study eigensystems for the augmented subset.

  \begin{definition}\label{defbuff} Let $\La_L=\La_L(x_0)$, $x_0 \in \R^d$, 
 and $m\ge m_->0$.  We call $\Ups\subset \La_L$ a buffered subset of $\La_L$   if 
 the following holds:
  
  \begin{enumerate}
\item  $\Ups $ is a connected set in $\Z^d$ of the form
\beq\label{defUpsinitial}
 \Ups= \bigcup_{j=1}^J \La_{R_j}(a_j)\cap \La_L,
 \eeq  
 where $J\in \N$, $a_1,a_2,\ldots, a_J \in \La^\R_L$, and $\ell \le R_j\le L$ for $j=1,2,\ldots,J$.
  
  \item ${\Upsilon}$  is $L$-level spacing for $H$.

  \item  There exists $\cG_\Ups \subset \La^\R_L$ such that:
  \begin{enumerate}
\item For all $a\in\cG_\Ups$ we have    $\La_\ell(a) \subset\Ups$, and    $\La_\ell(a)$ is  an  $m$-localizing box   for $H$.
\item  For all $y \in \partialin^{\La_L}\Ups$ there exists $a_y \in\cG_\Ups$ such that $y\in \La_\ell^{ {\Ups}, 2{\ell_\tau}}(a_y)$.
\end{enumerate}

  \end{enumerate}   
 In this case we set 
   \beq\label{defUpscheck}
\widecheck{\Upsilon} =\bigcup_{a \in \cG_\Ups}\La_\ell (a), \quad  \widecheck{\Upsilon}_{\tau} = \bigcup_{a \in \cG_\Ups}\La_\ell^{ {\Upsilon}, 2{\ell_\tau}}(a), \quad  \widehat{\Upsilon} = {\Upsilon} \setminus  \widecheck{\Upsilon},  \qtx{and} \widehat{\Upsilon}_\tau = {\Upsilon} \setminus  \widecheck{\Upsilon}_\tau .
\eeq
($\widecheck{\Upsilon} = \Upsilon_{\cG_\Ups}$ and $\widecheck{\Upsilon}_\tau = \Upsilon_{\cG_\Ups,\tau}$ in the notation of Lemma~\ref{lem:ident_eigensyst}.)

\end{definition}

  The set $\widecheck{\Upsilon}_{\tau}\supset \partialin^{\La_L}\Ups$ is a localizing buffer between  $ \widehat{\Upsilon}$ and $\La_L \setminus \Ups$, as shown in the following lemma.

 \begin{lemma}
 Let  $\Ups$  be a buffered subset of $\La_L$,  and let $\set{(\psi_\nu,\nu)}_{\nu \in \sigma(H_{\Upsilon})}$ be an eigensystem  for $H_{\Upsilon}$. 
   Let $\cG=\cG_\Ups$  and set
 \beq\label{sigmabad}
\sigma_\cB(H_{{\Upsilon}})= \sigma(H_{{\Upsilon}})\setminus \sigma_\cG(H_{{\Upsilon}}),
\eeq
where $\sigma_\cG(H_\Ups)$ is as in \eq{defcE}.  Then the following holds for sufficiently large $L$:

\begin{enumerate}
\item 
For all $\nu\in\sigma_\cB(H_{{\Upsilon}})$ we have 
\beq
\abs{\psi_\nu(y)} \le  e^{-m_2{\ell_\tau}} \sqtx{for all} y \in \widecheck{\Upsilon}_{\tau}, \sqtx{with} m_2=m_2(\ell) \;\text{as in \eq{m414}},  \label{vthetasmall}
\eeq
and
\beq
\abs{\widehat{\Upsilon}}\le  \abs{\sigma_\cB(H_{{\Upsilon}})}\le \abs{\widehat{\Upsilon}_{\tau}}.
\eeq

\item Let $ \La_L$ be level spacing for $H$, and let $\set{(\phi_\lambda,\lambda)}_{\lambda \in \sigma(H_{\La_L})}$ be an eigensystem  for $H_{\La_L}$. 
There exists an injection    
\beq\label{injbad}
 \nu\in\sigma_\cB(H_{{\Upsilon}})  \mapsto \widetilde{\nu}\in \sigma(H_{\La_L})\setminus\sigma_\cG (H_{\La_L}),
\eeq
such that for  $\nu\in\sigma_\cB(H_{{\Upsilon}})$ we have 
\begin{align}\label{tildedist}
\abs{\wtilde{\nu} -\nu} \le \e^{-m_4{\ell_\tau}}, \qtx{where}m_4=m_4(\ell) \ge  m\pa{1- C_{d,m_-,\eps_0}\ell^{\gamma \beta -\tau}},
\end{align}
and, multiplying each ${\psi_\nu}$ by a suitable phase factor as in \eq{aligning},  
\beq \label{difeqr}
\norm{\phi_{\wtilde{\nu}}- \psi_\nu}\le 2 \e^{-m_4 {\ell_\tau}}\e^{L^\beta}.
\eeq
\end{enumerate}
 \end{lemma}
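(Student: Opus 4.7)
The plan for part (i) is to apply Lemma~\ref{lem:ident_eigensyst}(ii) with $\Th=\Upsilon$ and $\cG=\cG_\Upsilon$. Definition~\ref{defbuff} ensures that $\Upsilon$ is $L$-level spacing for $H$ and that $\{\La_\ell(a)\}_{a\in\cG_\Upsilon}$ is a collection of $m$-localizing boxes contained in $\Upsilon$; in the notation \eq{defUpscheck} one has $\widecheck\Upsilon=\Upsilon_{\cG_\Upsilon}$ and $\widecheck\Upsilon_\tau=\Upsilon_{\cG_\Upsilon,\tau}$, so \eq{psidecgood63} gives \eq{vthetasmall} at once. Since $\Upsilon\subset\La_L$ forces $|\Upsilon|\le(L+1)^d$, inequality \eq{cGsigma} applies, and combining $|\sigma_\cB(H_\Upsilon)|=|\Upsilon|-|\sigma_\cG(H_\Upsilon)|$ (valid by $L$-level spacing of $\Upsilon$) with the identities $|\widehat\Upsilon|=|\Upsilon|-|\widecheck\Upsilon|$ and $|\widehat\Upsilon_\tau|=|\Upsilon|-|\widecheck\Upsilon_\tau|$ yields the chain $|\widehat\Upsilon|\le|\sigma_\cB(H_\Upsilon)|\le|\widehat\Upsilon_\tau|$.

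For part (ii), the plan is, for each $\nu\in\sigma_\cB(H_\Upsilon)$, to regard $\psi_\nu$ (extended by $0$ outside $\Upsilon$) as a quasimode on $\La_L$. The decomposition \eq{Hdecomp} with $\Phi=\Upsilon$, $\Th=\La_L$ gives $(H_{\La_L}-\nu)\psi_\nu=\eps\Gamma_{\boldsymbol{\partial}^{\La_L}\Upsilon}\psi_\nu$, and \eq{HTheta} shows this is supported in $\partial_{\mathrm{ex}}^{\La_L}\Upsilon$ and controlled pointwise by $2d\eps_0\max_{y\in\partial_{\mathrm{in}}^{\La_L}\Upsilon}|\psi_\nu(y)|$. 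Since $\partial_{\mathrm{in}}^{\La_L}\Upsilon\subset\widecheck\Upsilon_\tau$ by item (iii)(b) of Definition~\ref{defbuff}, part (i) yields $\|(H_{\La_L}-\nu)\psi_\nu\|\le C_{d,\eps_0}L^{d/2}\e^{-m_2\ell_\tau}\le\e^{-m_4\ell_\tau}$ for $L$ large, the polynomial prefactor being absorbed into the passage from $m_2$ (cf.\ \eq{m414}) to the slightly smaller $m_4$ of \eq{tildedist}, which is harmless because $\tau/\gamma>\beta$. Hence $\dist(\nu,\sigma(H_{\La_L}))\le\e^{-m_4\ell_\tau}$ produces $\wtilde\nu\in\sigma(H_{\La_L})$ satisfying \eq{tildedist}, and since $\La_L$ is level spacing with gap $\ge\e^{-L^\beta}\gg4\e^{-m_4\ell_\tau}$ (again by $\tau/\gamma>\beta$), Lemma~\ref{lemdeltaeta} applies and gives \eq{difeqr} after the phase adjustment \eq{aligning}.

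The remaining points—injectivity of $\nu\mapsto\wtilde\nu$ and the assertion $\wtilde\nu\notin\sigma_\cG(H_{\La_L})$—are the delicate ones. Injectivity is immediate from $L$-level spacing of $\Upsilon$: two distinct $\nu,\nu'\in\sigma_\cB(H_\Upsilon)$ are separated by $\e^{-L^\beta}$, whereas a coincidence $\wtilde\nu=\wtilde{\nu'}$ would force $|\nu-\nu'|\le 2\e^{-m_4\ell_\tau}$, a contradiction. The avoidance of $\sigma_\cG(H_{\La_L})$ is the main obstacle and will be handled by contradiction: suppose $\wtilde\nu$ coincides with some label $\wtilde\lambda_x^{(a),\La_L}\in\sigma_\cG(H_{\La_L})$ attached to $a\in\cG_\Upsilon$ and $x\in\La_\ell^{\La_L,\ell_\tau}(a)$; the inclusion $\La_\ell(a)\subset\Upsilon\subset\La_L$ implies $\La_\ell^{\La_L,\ell_\tau}(a)\subset\La_\ell^{\Upsilon,\ell_\tau}(a)$, so Lemma~\ref{lem:ident_eigensyst}(i)(a) applied with $\Th=\Upsilon$ also yields a label $\wtilde\lambda_x^{(a),\Upsilon}\in\sigma_\cG(H_\Upsilon)$ with $|\wtilde\lambda_x^{(a),\Upsilon}-\lambda_x^{(a)}|\le\e^{-m_1\ell_\tau}$. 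Combining this with the analogous bound $|\wtilde\lambda_x^{(a),\La_L}-\lambda_x^{(a)}|\le\e^{-m_1\ell_\tau}$ and with $|\wtilde\nu-\nu|\le\e^{-m_4\ell_\tau}$ gives, via the triangle inequality, $|\nu-\wtilde\lambda_x^{(a),\Upsilon}|<\e^{-L^\beta}$, contradicting the level-spacing separation of $\nu\in\sigma_\cB(H_\Upsilon)$ from $\sigma_\cG(H_\Upsilon)\subset\sigma(H_\Upsilon)$. The careful cross-volume comparison of the labels $\wtilde\lambda_x^{(a),\Upsilon}$ and $\wtilde\lambda_x^{(a),\La_L}$ is where most of the bookkeeping lies.
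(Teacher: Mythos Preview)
Your proof is correct and, for part (i) and most of part (ii), follows the paper's argument essentially verbatim (the paper likewise invokes Lemma~\ref{lem:ident_eigensyst}(ii)(c)--(d) for part (i), and uses \eq{distspt} plus Lemma~\ref{lemdeltaeta} for the quasimode bound and \eq{difeqr}).

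The one genuine difference is the final step, showing $\wtilde\nu\notin\sigma_\cG(H_{\La_L})$. The paper argues via \emph{eigenfunctions}: if $\wtilde{\nu_1}=\wtilde\lambda_x^{(a),\La_L}$, then the same $\lambda_x^{(a)}$ also has a lift $\lambda_1=\wtilde\lambda_x^{(a),\Upsilon}\in\sigma_\cG(H_\Upsilon)$, and chaining \eq{difeq86} (twice, once for $\Th=\Upsilon$ and once for $\Th=\La_L$) with \eq{difeqr} gives
\[
\sqrt{2}=\norm{\psi_{\lambda_1}-\psi_{\nu_1}}\le\norm{\psi_{\lambda_1}-\vphi_x^{(a)}}+\norm{\vphi_x^{(a)}-\phi_{\wtilde{\nu_1}}}+\norm{\phi_{\wtilde{\nu_1}}-\psi_{\nu_1}}<1,
\]
a contradiction. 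You instead argue via \emph{eigenvalues}: the same chain of lifts gives $|\nu-\wtilde\lambda_x^{(a),\Upsilon}|\le \e^{-m_4\ell_\tau}+2\e^{-m_1\ell_\tau}<\e^{-L^\beta}$ (using $\tau>\gamma\beta$), contradicting the $L$-level spacing of $\Upsilon$ since $\nu\in\sigma_\cB(H_\Upsilon)$ and $\wtilde\lambda_x^{(a),\Upsilon}\in\sigma_\cG(H_\Upsilon)$ are distinct. Your route is slightly more elementary in that it sidesteps the phase alignments implicit in the eigenfunction chain; the paper's route has the minor advantage of reusing \eq{difeq86} and \eq{difeqr} directly without recomputing the eigenvalue gap. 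Both are short and valid.
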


\begin{proof} Part (i) follows immediately from Lemma~\ref{lem:ident_eigensyst}(ii)(c) and (ii)(d).

Now let $ \La_L$ be level spacing for $H$, and let $\set{(\phi_\lambda,\lambda)}_{\lambda \in \sigma(H_{\La_L})}$ be an eigensystem  for $H_{\La_L}$.  It follows from \eq{distspt} in Lemma~\ref{lemdecayvphi} that for $ \nu\in\sigma_\cB(H_{{\Upsilon}})$ we have
 \beq\label{appeig47549}
 \norm{\pa{H_{\La_L} -\nu}{\psi_\nu} }\le (2d-1) \eps \abs{ \partialex^{ \La_L}{\Ups}}^{\frac 12}\norm{\psi_\nu\Chi_{ \partialin^{ \La_L}{\Ups}}}_\infty\le  (2d-1) \eps L^{\frac {d}2} \e^{-  m_2{\ell_\tau}}\le  \e^{-m_4\ell_\tau},
 \eeq
where we used 
$\partialin^{ \La_L}{\Ups} \subset  \widecheck{\Upsilon}_{\tau}$ and \eq{vthetasmall}, and $m_4$ is given in \eq{tildedist}.
Since $ \La_L$  and $\Ups$ are $L$-level spacing for $H$, the map in \eq{injbad} is a well defined injection into $\sigma (H_{\La_L})$, and   \eq{difeqr} follows from \eq{tildedist} and \eq{deltaeta}. 
To finish the proof we must show that $\wtilde\nu\notin \sigma_\cG(H_{\La_L})$ for all $\nu\in\sigma_\cB(H_{{\Upsilon}})$. 

Suppose  $\wtilde{\nu_1}\in  \sigma_\cG(H_{\La_L})$ for some $\nu_1\in\sigma_\cB(H_{{\Upsilon}}) $. Then there is $a\in \cG$ and $x\in \La_\ell^{\La_L,\ell_{\tau}}(a)$ such that
${\lambda\up{a}_x} \in\cE_\cG^{\La_L}(\wtilde{\nu_1})$.  On the other hand, it follows from   Lemma~\ref{lem:ident_eigensyst}(i)(a) that ${\lambda\up{a}_x} \in\cE_\cG^{\Ups}(\lambda_1) $ for some $\lambda_1 \in \sigma_\cG(H_{\Ups})$.  We conclude  from \eq{difeq86} and \eq{difeqr}  that
\begin{align}
\sqrt{2}= \norm{\psi_{\lambda_1} -\psi_{\nu_1}}&\le 
\norm{\psi_{\lambda_1}-{\vphi\up{a}_x}  }+\norm{{\vphi\up{a}_x} -{\phi}_{\wtilde{\nu_1}} }+ \norm{{\phi}_{\wtilde{\nu_1}} -\psi_{\nu_1} }\\
\notag & \le {4}\e^{-m_1{\ell_\tau}}\e^{L^\beta}+  {2}\e^{-m_4{ \ell_\tau}}\e^{L^\beta}< 1,
\end{align}
 a contradiction.
\end{proof}

 \begin{lemma}\label{lembad}   Let $\La_L=\La_L(x_0)$, $x_0 \in \R^d$, $m\ge m_-$.   
Let $ {\Upsilon}$ be    a buffered subset of  $\La_L$.  
 Let  $\cG=\cG_\Ups$  and set
\begin{align}\label{defcEUps}
\cE_\cG^{\La_L} (\nu)&=  \set{{\lambda\up{a}_x}; \; a\in \cG,\, x\in \La_\ell^{\La_L,\ell_{\tau}}(a), \sqtx{and} \wtilde{\lambda}\up{a}_x=\nu}\subset \cE_\cG^{\Ups} (\nu)\sqtx{for} \nu\in \sigma(H_{\Upsilon}), \notag \\ 
\sigma_\cG^{\La_L} (H_{{\Upsilon}})&=\set{\nu \in \sigma(H_{{\Upsilon}}); \; \cE_\cG^{\La_L} (\nu)\ne \emptyset} \subset \sigma_\cG (H_{{\Upsilon}}).
\end{align}
The following holds for sufficiently large $L$:
\begin{enumerate}
\item Let  $(\psi,\lambda)$ be an  eigenpair for $H_{\La_L }$ such that
\beq\label{distpointeig1}
\abs{\lambda - \nu}
\ge \tfrac 1 2\e^{-L^\beta}\qtx{for all} \nu \in\sigma_\cG^{\La_L} (H_{{\Upsilon}})\cup \sigma_\cB  (H_{{\Upsilon}}).
\eeq
Then for all $y\in \Ups^{{\La_L},2{\ell_\tau} }$  we have
\beq\label{gggsum}
\abs{\psi(y)}\le    \e^{-m_5{\ell_\tau}}\abs{\psi(v)}\qtx{for some} v\in{\partial}^{\La_L, 2{\ell_\tau} }  {\Upsilon},
\eeq
where 
\beq\label{M21}
m_5=m_5(\ell)\ge   m\pa{1-   C_{d,m_-,\eps_0} \ell^{\gamma \beta -\tau}}.
\eeq
\item Let $\La_L$ be  level spacing for $H$,  let $\set{(\psi_\lambda,\lambda)}_{\lambda \in \sigma(H_{\La_L})}$ be an eigensystem  for $H_{\La_L}$, recall \eq{injbad},  and set
 \beq
\sigma_\Ups (H_{\La_L})= \set{\wtilde\nu;  \nu\in\sigma_\cB(H_{{\Upsilon}})}\subset  \sigma(H_{\La_L})\setminus\sigma_{\cG} (H_{\La_L}).
\eeq
  Then for all
   \[
\lambda \in \sigma (H_{{\La_L}})\setminus\pa{  \sigma_\cG (H_{{\La_L}})\cup  \sigma_\Ups (H_{\La_L})},
\]  
the condition \eq{distpointeig1} is satisfied, and $\psi_\lambda$  satisfies  \eq{gggsum}.
\end{enumerate}
  \end{lemma}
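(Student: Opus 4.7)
The plan is to prove Part~(i) by a resolvent-style expansion of $\psi$ in the eigensystem of $H_\Ups$, then derive Part~(ii) by verifying that the hypothesis of Part~(i) is satisfied for every $\lambda\in\sigma(H_{\La_L})\setminus\bpa{\sigma_\cG(H_{\La_L})\cup\sigma_\Ups(H_{\La_L})}$.

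For Part~(i), fix $y\in\Ups^{\La_L,2\ell_\tau}$ and expand $\psi(y)=\sum_{\nu\in\sigma(H_\Ups)}\psi_\nu(y)\,c_\nu$ with $c_\nu:=\scal{\psi_\nu,\Chi_\Ups\psi}$. Restricting the eigenvalue equation $(H_{\La_L}-\lambda)\psi=0$ to $\Ups$ yields $(H_\Ups-\lambda)(\Chi_\Ups\psi)=\eps f$, where $f$ is supported on $\partialin^{\La_L}\Ups$ and $\norm{f}_\infty\le 2d\,V$ with $V:=\max_{v\in\partialex^{\La_L}\Ups}\abs{\psi(v)}$; pairing with $\psi_\nu$ gives $c_\nu=\eps(\nu-\lambda)^{-1}\scal{\psi_\nu,f}$ whenever $\nu\ne\lambda$. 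Decompose $\sigma(H_\Ups)=S_{\mathrm{good}}\cup S_{\mathrm{bord}}$ with $S_{\mathrm{good}}:=\sigma_\cG^{\La_L}(H_\Ups)\cup\sigma_\cB(H_\Ups)$ and $S_{\mathrm{bord}}:=\sigma_\cG(H_\Ups)\setminus\sigma_\cG^{\La_L}(H_\Ups)$. For $\nu\in S_{\mathrm{good}}$, the hypothesis \eqref{distpointeig1} gives $\abs{c_\nu}\le 2e^{L^\beta}\eps\,\abs{\scal{\psi_\nu,f}}$; a Cauchy--Schwarz estimate against $\sum_\nu\abs{\psi_\nu(y)}^2\le 1$ and $\sum_\nu\abs{\scal{\psi_\nu,f}}^2=\norm{f}^2$ bounds the $S_{\mathrm{good}}$-contribution by $C_d\,\eps\,L^{(d-1)/2}e^{L^\beta}V$. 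For $\nu=\wtilde\lambda\up{a}_x\in S_{\mathrm{bord}}$, the condition $x\in\La_\ell^{\Ups,\ell_\tau}(a)\setminus\La_\ell^{\La_L,\ell_\tau}(a)$ forces the existence of some $z\in\La_L\setminus\Ups$ with $\norm{z-x}\le\ell_\tau$; since $\dist(y,\La_L\setminus\Ups)>2\ell_\tau$, the triangle inequality yields $\norm{y-x}>\ell_\tau$, and the $(x,m)$-localization of $\vphi\up{a}_x$ combined with the $\ell^2$-closeness \eqref{difeq86} gives the pointwise bound $\abs{\psi_\nu(y)}\le 3e^{-m_1\ell_\tau+L^\beta}$; Cauchy--Schwarz with $\sum_\nu\abs{c_\nu}^2\le\norm{\psi}^2=1$ then bounds the $S_{\mathrm{bord}}$-contribution by $3L^{d/2}e^{-m_1\ell_\tau+L^\beta}$. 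Since $\beta<\tau/\gamma$ makes $\ell_\tau$ dominate $L^\beta$, both pieces combine into \eqref{gggsum} with $m_5$ as in \eqref{M21}, taking $v$ to be a maximizer of $\abs{\psi}$ on $\partial^{\La_L,2\ell_\tau}\Ups$; if the bound initially features a point outside this set, we iterate exactly as in the proof of Lemma~\ref{lemdecay2}.

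For Part~(ii), let $\lambda\in\sigma(H_{\La_L})\setminus\bpa{\sigma_\cG(H_{\La_L})\cup\sigma_\Ups(H_{\La_L})}$ and verify \eqref{distpointeig1}. If $\nu=\wtilde\lambda\up{a}_x\in\sigma_\cG^{\La_L}(H_\Ups)$ with $a\in\cG$ and $x\in\La_\ell^{\La_L,\ell_\tau}(a)$, applying Lemma~\ref{lem:ident_eigensyst}(i) now with $\Th=\La_L$ produces an element of $\sigma_\cG(H_{\La_L})$ at distance $\le 2e^{-m_1\ell_\tau}$ from $\nu$; since $\lambda\notin\sigma_\cG(H_{\La_L})$, level spacing of $\La_L$ and $\beta<\tau/\gamma$ give $\abs{\lambda-\nu}\ge e^{-L^\beta}-2e^{-m_1\ell_\tau}\ge\tfrac12 e^{-L^\beta}$ for large $L$. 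If $\nu\in\sigma_\cB(H_\Ups)$, the injection \eqref{injbad} places $\wtilde\nu\in\sigma_\Ups(H_{\La_L})$; the exclusion $\lambda\notin\sigma_\Ups(H_{\La_L})$ and level spacing give $\abs{\lambda-\wtilde\nu}\ge e^{-L^\beta}$, while \eqref{tildedist} yields $\abs{\nu-\wtilde\nu}\le e^{-m_4\ell_\tau}$, so again $\abs{\lambda-\nu}\ge\tfrac12 e^{-L^\beta}$. With \eqref{distpointeig1} established, Part~(i) delivers \eqref{gggsum}.

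The principal obstacle is the $S_{\mathrm{bord}}$ contribution in Part~(i): its coefficients $c_\nu$ escape the resolvent bound because $\nu$ may lie arbitrarily close to $\lambda$. The geometric rescue is that border eigenvalues must come from labels $x$ squeezed against the inner boundary of $\Ups$ inside $\La_L$, so the $\ell^2$-closeness \eqref{difeq86} forces the approximants $\psi_\nu$ to be pointwise tiny at the deep interior point $y\in\Ups^{\La_L,2\ell_\tau}$; the scale hierarchy $\beta<\tau/\gamma$ then absorbs the $e^{L^\beta}$ losses into the principal exponential decay of rate $m_5\ell_\tau$.
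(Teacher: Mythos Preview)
Your Part (ii) argument is correct and essentially matches the paper's. Part (i), however, has a genuine gap in the $S_{\mathrm{good}}$ estimate.

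Your Cauchy--Schwarz bound on $\sum_{\nu\in S_{\mathrm{good}}}\psi_\nu(y)\,c_\nu$ uses only $\sum_\nu\abs{\psi_\nu(y)}^2\le 1$ and $\sum_\nu\abs{\scal{\psi_\nu,f}}^2\le\norm{f}^2$, yielding at best
\[
\Bigl|\sum_{\nu\in S_{\mathrm{good}}}\psi_\nu(y)\,c_\nu\Bigr|\;\le\; 2\eps\, e^{L^\beta}\norm{f}\;\le\; C_d\,\eps\,L^{d/2}e^{L^\beta}\,V.
\]
This bound carries \emph{no exponential decay in $\ell_\tau$}; the factor $e^{L^\beta}$ is large and nothing beats it. You therefore cannot combine it with the $S_{\mathrm{bord}}$ piece to obtain \eqref{gggsum}, and iteration does not help either, since the would-be contraction factor exceeds $1$.

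The missing ingredient---which you in fact have available but do not use---is that $\psi_\nu$ is \emph{pointwise small on $\partialin^{\La_L}\Ups=\supp f$} for every $\nu\in S_{\mathrm{good}}$. If $\nu\in\sigma_\cB(H_\Ups)$, then \eqref{vthetasmall} gives $\abs{\psi_\nu(u)}\le e^{-m_2\ell_\tau}$ for all $u\in\widecheck\Ups_\tau\supset\partialin^{\La_L}\Ups$. If $\nu\in\sigma_\cG^{\La_L}(H_\Ups)$, then $\psi_\nu$ is $\ell^2$-close via \eqref{difeq86} to some $\vphi^{(a)}_x$ with $x\in\La_\ell^{\La_L,\ell_\tau}(a)$, hence $\dist(x,\partialin^{\La_L}\Ups)\ge\ell_\tau$ and $\abs{\psi_\nu(u)}\le 3e^{-m_1\ell_\tau}e^{L^\beta}$ on $\partialin^{\La_L}\Ups$. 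The paper exploits exactly this: it bounds each term $\abs{\psi_\nu(y)\,c_\nu}$ individually by $C_d\,\eps\,L^d e^{L^\beta}\bigl(\max_{u\in\partialin^{\La_L}\Ups}\abs{\psi_\nu(u)}\bigr)\abs{\psi(v_1)}$ for some $v_1\in\partialex^{\La_L}\Ups$, picks up the factor $e^{-m_2\ell_\tau}$, and then sums crudely over at most $(L+1)^d$ values of $\nu$.

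A secondary issue: your $S_{\mathrm{bord}}$ bound $3L^{d/2}e^{-m_1\ell_\tau+L^\beta}$ is an absolute constant, not of the form $(\text{small})\cdot\abs{\psi(v)}$. The statement \eqref{gggsum} is multiplicative in $\abs{\psi(v)}$, and to obtain it both pieces must be bounded relative to $\abs{\psi}$ at some point of $\Ups\cup\partialex^{\La_L}\Ups$. The paper handles $S_{\mathrm{bord}}$ by estimating $\abs{c_\nu}\le\norm{\Chi_\Ups\psi}\le (L+1)^{d/2}\max_{v\in\Ups}\abs{\psi(v)}$, combining with your pointwise bound on $\abs{\psi_\nu(y)}$, and then iterating until the reference point exits $\Ups^{\La_L,2\ell_\tau}$.
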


\begin{proof}  For $a\in \cG$ we have $\La_\ell(a)\subset \Ups\subset \La_L$, which implies
$\La_\ell^{\La_L,\ell_{\tau}}(a)\subset \La_\ell^{\Ups,\ell_{\tau}}(a)$.  Thus $\cE_\cG^{\La_L} (\nu)\subset \cE_\cG^{\Ups} (\nu)$ for $\nu\in \sigma(H_{\Upsilon})$.

 Let  $\set{(\vtheta_\nu,\nu)}_{\nu \in \sigma(H_{\Upsilon})}$ be an eigensystem  for $H_{\Upsilon}$.    
For each  $\nu \in \sigma_\cG (H_{\Upsilon})$ we fix $\lambda\up{a_\nu}_{x_\nu}  \in \cE_\cG^\Ups (\nu)$,  where 
  $a_\nu\in \cG,\, x_\nu\in \La_\ell^{{\Upsilon},\ell_{\tau}}(a_\nu)$,  picking  $\lambda\up{a_\nu}_{x_\nu}  \in \cE_\cG^{\La_L} (\nu)$ if $\nu \in \sigma_\cG^{\La_L} (H_{\Upsilon})$, so   $ x_\nu\in \La_\ell^{{\La_L},\ell_{\tau}}(a_\nu)$.  If  $\nu \in\sigma_\cG (H_{{\Upsilon}})\setminus \sigma_\cG^{\La_L}(H_{{\Upsilon}})$ we have  $ x_\nu\in \La_\ell^{{\Upsilon},\ell_{\tau}}(a_\nu)\setminus \La_\ell^{{\La_L},\ell_{\tau}}(a_\nu)$.

Let $y\in{\Upsilon}$.  Using  \eq{psiy} we    have 
\begin{align}\label{sum000227}
\psi(y) &=\scal{\delta_y,\psi}  =   \sum_{\nu \in\sigma(H_ {\Upsilon})}{{\vtheta_\nu}(y)} \scal{ {\vtheta_\nu},\psi}\\   \notag  &  =  \sum_{\nu \in \sigma_\cG^{\La_L} (H_{{\Upsilon}})\cup \sigma_\cB  (H_{{\Upsilon}})}{{\vtheta_\nu}(y)} \scal{ {\vtheta_\nu},\psi} +   \sum_{\nu \in\sigma_\cG (H_{{\Upsilon}})\setminus \sigma_\cG^{\La_L} (H_{{\Upsilon}})}{{\vtheta_\nu}(y)} \scal{ {\vtheta_\nu},\psi}.
\end{align}

 Let $(\psi,\lambda)$  be an  eigenpair for $H_{\La_L }$ such that \eq{distpointeig1} holds.
Given  $\nu \in\sigma_\cG^{\La_L} (H_{{\Upsilon}})\cup \sigma_\cB  (H_{{\Upsilon}})$,  we have
\beq
\scal{ {\vtheta_\nu},\psi}= \pa{\lambda- \nu }^{-1}\scal{ {\vtheta_\nu},  \pa{H_{{\La_L}}- \nu}\psi}= \pa{\lambda- \nu }^{-1}\scal{\pa{H_{{\La_L}}- \nu} {\vtheta_\nu}, \psi} .
\eeq
It follows from  \eq{distpointeig1} and  \eq{HTheta} that
\begin{align}\label{vphipsi39909}
\abs{{{\vtheta_\nu}(y)} \scal{ {\vtheta_\nu},\psi}}&\le 2\e^{L^\beta} \eps  \abs{ \vtheta_\nu(y)} \sum_{  v \in    \partialex^{{{\La_L}}}  {\Upsilon}} \pa{ \textstyle\sum_{\substack{v^\pr \in \partial_{\mathrm{in}}^{ \La_L}{\Ups} \\ \abs{v^\pr-v}=1}}\abs{ \vtheta_\nu({v^\pr})}} \abs{\psi(v)}\\
& \notag \le  2\eps L^{d}\e^{L^\beta} \set{2d \max_{  u \in   \partialin^{{{\La_L}}}  {\Upsilon}} {\abs{\vtheta_\nu(u)}} }  \abs{\psi(v_1)} \qtx{for some} v_1 \in\partialex^{{{\La_L}}} {\Upsilon}.
\end{align}
If $\nu \in \sigma_\cB  (H_{{\Upsilon}})$ it follows from \eq{vthetasmall} that 
\beq\max_{  u \in    \partialin^{{\La_L}}  {\Upsilon}} {\abs{\vtheta_\nu(u)}}   \le  e^{-m_2{\ell_\tau}}.
\eeq
If  $\nu \in \sigma_\cG^{\La_L} (H_{{\Upsilon}})$, it follows from  \eq{difeq86} and   \eq{hypdec} that
\begin{align}
&\max_{  u \in   \partialin^{{{\La_L}}}  {\Upsilon}} {\abs{\vtheta_\nu(u)}}   \le 
\max_{  u \in   \partialin^{{{\La_L}}}  {\Upsilon}} \pa{{\abs{\vtheta_\nu(u)-\vphi\up{a_\nu}_{x_\nu} (u)}}+\abs{\vphi\up{a_\nu}_{x_\nu} (u)}}\\ 
&  \qquad  \notag \le {2}\e^{-m_1{\ell^\tau}}\e^{L^\beta} +  e^{-m{  {\ell_\tau}}}\le{3}\e^{-m_1{\ell_\tau}}\e^{L^\beta},
\end{align}
recalling \eq{m414} and \eq{m41455}. 
It follows that (note $m_1(\ell) \ge m_2(\ell)$ for $\ell$ large)
\begin{align}\label{gsum53} 
\abs{ \sum_{\nu \in\sigma_\cG^{\La_L} (H_{{\Upsilon}})\cup \sigma_\cB  (H_{{\Upsilon}})}{{\vtheta_\nu}(y)} \scal{ {\vtheta_\nu},\psi}}&\le 
4d \eps L^{2d}\e^{L^\beta}   \pa{{3}\e^{-m_2{\ell_\tau}}\e^{L^\beta}}    \abs{\psi(v_2)} \\ \notag   & \le 12 d \eps L^{2d} \e^{2L^\beta}\e^{-m_2{\ell_\tau}} \abs{\psi(v_2)},
 \end{align}
for some  $ v_2 \in \partialex^{{\La_L}} {\Upsilon}$.

Now let $\nu \in\sigma_\cG  (H_{{\Upsilon}})\setminus \sigma_\cG^{\La_L} (H_{{\Upsilon}})$.  In this case  we have $ x_\nu\in \La_\ell^{{\Upsilon},\ell_{\tau}}(a_\nu)\setminus \La_\ell^{{\La_L},\ell_{\tau}}(a_\nu)$, so we have 
\beq
  \dist \pa{x_\nu, \Ups \setminus \La_\ell(a_\nu)} > \ell_\tau \qtx{and}\dist \pa{x_\nu, {\La_L}\setminus \La_\ell(a_\nu)} \le \ell_\tau,
 \eeq
 so there is $u_0 \in {\La_L} \setminus \Ups$ such that $\norm{x_\nu- u_0  }\le  {\ell_\tau}$.
We now assume $y\in  \Ups^{{\La_L},2{\ell_\tau} }$, so we have  $\norm{y- u_0  }>2 \ell_\tau$.  We conclude that
\beq
\abs{x_\nu-y}\ge  \norm{y- u_0  }- \norm{x_\nu- u_0  } > 2\ell_\tau-\ell_\tau=\ell_\tau.
\eeq
Thus
\beq\label{vthetabadg}
\abs{{\vtheta_\nu}(y)}\le \abs{{\vtheta_\nu}(y) - \vphi\up{a_\nu}_{x_\nu}(y)}+  \abs{ \vphi\up{a_\nu}_{x_\nu}(y)}\le {2}\e^{-m_1\ell_\tau}\e^{L^\beta}+ \e^{-   m \ell_\tau}\le 3\e^{-m_1\ell_\tau}\e^{L^\beta},
 \eeq 
using  \eq{difeq86} and  \eq{hypdec}. It follows that
\beq\label{bsum40}
\abs{\sum_{\nu \in\sigma_\cG (H_{{\Upsilon}})\setminus \sigma_\cG^{\La_L}(H_{{\Upsilon}})}{{\vtheta_\nu}(y)} \scal{ {\vtheta_\nu},\psi}}
\le3 (L+1)^{\frac {3d}2}\e^{-m_1\ell_\tau}\e^{L^\beta}\abs{\psi(v_3)},
\eeq

for some  $v_3 \in {\Upsilon}$.

Combining \eq{sum000227}, \eq{gsum53} and \eq{bsum40}, we get for $y\in  \Ups^{{\La_L},2\fl{\ell^\tau} }$ that 
\beq
\abs{\psi(y)}\le (1 + 12 d \eps) L^{2d}\e^{2L^\beta}\e^{-m_2 {\ell_\tau}}\abs{\psi(v_4)}\le  \e^{-m_5  \ell_\tau} \abs{\psi(v_4)},
\eeq
for some $v_4 \in {\Upsilon} \cup\partialex^{ {\La_L}}  {\Upsilon}$, where $m_5$ is given in \eq{M21}.
If $v_4\in \Ups^{{\La_L},2{\ell_\tau} }$
 we can repeat the procedure to estimate  $\abs{\psi(v_4)}$.  If $\psi(y)= 0$ there is nothing to prove, so  we can assume $\psi(y)\ne 0$.  In this case we can only repeat the procedure a finite number of times without getting $\abs{\psi(y)}<\abs{\psi(y)}$, so  \eq{gggsum} holds.
 
Now  suppose $\La_L$ is level spacing for $H$. If $\lambda \notin   \sigma_\cG (H_{{\La_L}})$, it follows from Lemma \ref{lem:ident_eigensyst}(i)(c) that \eq{distpointeiga} holds for all $a \in\cG$. If $\lambda \notin   \sigma_\Ups (H_{\La_L})$, the  argument in \eq{distpointsigeigab}, modified by the use of  \eq{tildedist} instead of  \eq{tildedist132}, gives $\abs{\lambda -\nu}\ge  \tfrac 12 \e^{-\L^\beta}$ for all $ \nu\in\sigma_\cB(H_{{\Upsilon}})$.  Thus  we have \eq{distpointeig1}, which implies \eq{gggsum}.
 \end{proof}

\subsection{Suitable covers of a box}
To perform the multiscale analysis in an efficient way, it is convenient to use  a canonical  way to cover a box of side $L$ by boxes of side $\ell <L$.  We will use
  suitable covers of a box as in \cite[Definition~3.12]{GKber}, adapted to the discrete case.

\begin{definition}\label{defcov} Let $\La_L=\La_L(x_0)$, $x_0 \in \R^d$ be  a box in $\Z^d$, and let $\ell < L$.
A suitable $\ell$-cover of $\La_L$ is
the collection of  boxes  
\begin{align}\label{standardcover}
{\mathcal C}_{L,\ell} \left(x_0 \right)= \set{ {\Lambda}_{\ell}(a)}_{a \in  \Xi_{L,\ell}},
\end{align}
where
\beq  \label{bbG}
 \Xi_{L,\ell}:= \set{ x_0+ {\rho}\ell  \Z^{d}}\cap \La_L^\R\quad 
\text{with}  \quad {\rho}\in  \br{\tfrac {3} {5},\tfrac {4} {5}}   \cap \set{\tfrac {L-\ell}{2 \ell k}; \, k \in \N }.
\eeq
We call ${\mathcal C}_{L,\ell} \left(x_0 \right)$ the suitable $\ell$-cover of $\La_L$  if   ${\rho} ={\rho}_{L,\ell}: =\max \br{\tfrac {3} {5},\tfrac {4} {5}}   \cap \set{\tfrac {L-\ell}{2 \ell k}; \, k \in \N }.
$
\end{definition}

We recall \cite[Lemma~3.13]{GKber}, which we rewrite in our context.

\begin{lemma}\label{lemcover } Let $\ell \le \frac  L   6$. Then for  every  box   $\La_L=\La_L(x_0)$, $x_0 \in \R^d$, a suitable $\ell$-cover
  ${\mathcal C}_{L,\ell}\left(x_0\right) $  satisfies
  \begin{align}\label{nestingproperty} 
&\La_L=\bigcup_{a \in  \Xi_{L,\ell}} {\Lambda}_{\ell}(a);\\ \label{covproperty}
&\text{for all}\; \; b \in\La_L  \sqtx{there is} {\Lambda}_{\ell}^{(b)} \in {\mathcal C}_{L,\ell} \left(x_0 \right) \sqtx{such that}  
  b\in \pa{ {\Lambda}_{\ell}^{(b)}}^{\La_L, \frac \ell {10}},\\ \notag
 & \qtx{i.e.,} \La_L=\bigcup_{a \in  \Xi_{L,\ell}} {\Lambda}_{\ell}^{\La_L, \frac \ell {10}}(a);
    \\
\label{freeguarantee}
&\La_{\frac{\ell}{5}}(a)\cap \La_{\ell}(b)=\emptyset
\quad \text{for all} \quad a, b\in x_0 + {\rho}\ell  \Z^{d}, \ a\ne b;\\ \label{number}
&  \pa{\tfrac{L} {\ell}}^{d}\le \# \Xi_{L,\ell}= \pa{ \tfrac{L-\ell} {{\rho} \ell}+1}^{d }\le   \pa{\tfrac{2L} {\ell}}^{d}.  
\end{align}
Moreover,  given $a \in x_0 + {\rho}\ell  \Z^{d}$ and $k \in \N$, it follows that
\beq \label{nesting}
{\Lambda}_{(2  k {\rho}  +1) \ell}(a)= \bigcup_{b \in  \{ x_0 + {\rho}\ell  \Z^{d}\}\cap {\Lambda}^{\R}_{(2k {\rho}  +1) \ell}(a) } {\Lambda}_{\ell}(b),
\eeq
and  $ \{ \Lambda_{\ell}(b)\}_{b \in  \{ x_0 + {\rho}\ell  \Z^{d}\}\cap {\Lambda}^{\R}_{(2k {\rho}  +1) \ell}(a) }$ is a suitable $\ell$-cover of the box $\Lambda_{(2 k{\rho} +1) \ell}(a)$. 
\end{lemma}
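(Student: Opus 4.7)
The plan is to prove each of the five listed assertions in turn; all of them are elementary consequences of the choice of $\rho$, and coordinates decouple, so one‐dimensional arguments suffice. The two arithmetic facts doing all the work are that $\rho\ell\in[3\ell/5,4\ell/5]$ (consecutive grid centres are more than half a box apart but less than a full box apart), and that the constraint $\rho=(L-\ell)/(2\ell k)$ for some $k\in\N$ guarantees that the grid $\Xi_{L,\ell}$ fits exactly in $\La^{\R}_L(x_0)$ with its extreme coordinate values equal to $x_0^{(i)}\pm\tfrac{L-\ell}{2}$.

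For \eqref{nestingproperty} and \eqref{covproperty}: given $b\in \La_L$, in each coordinate $i$ one can choose $a^{(i)}\in x_0^{(i)}+\rho\ell\Z$ within $\rho\ell/2\le 2\ell/5$ of $b^{(i)}$; the constraint on $\rho$ ensures this nearest grid point lies in the admissible range, so the corresponding $a\in\Xi_{L,\ell}$ satisfies $\norm{b-a}\le \rho\ell/2\le 2\ell/5$, yielding \eqref{nestingproperty}. Moreover, for this choice the distance from $b$ to $\Z^d\setminus\La_\ell(a)$ is at least $\ell/2-2\ell/5=\ell/10$; since extreme boxes reach exactly to the boundary of $\La_L$, the analogous estimate also bounds $\dist(b,\La_L\setminus\La_\ell(a))$. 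After flooring, $\dist(b,\La_L\setminus\La_\ell(a))>\fl{\ell/10}$, giving \eqref{covproperty}.

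For \eqref{freeguarantee}: if $a\ne b$ belong to $x_0+\rho\ell\Z^d$ then $\norm{a-b}\ge \rho\ell\ge 3\ell/5$, while any point in $\La_{\ell/5}(a)\cap\La_\ell(b)$ would force $\norm{a-b}\le \ell/10+\ell/2=3\ell/5$. The endpoint case $\rho\ell=3\ell/5$ is ruled out by noting that the offending point would have to be a half-integer in $\R^d$, contradicting membership in $\Z^d$; this is the one place one must be attentive to the continuous/discrete distinction. For \eqref{number}: by construction there are $2k+1$ grid points per axis, where $2k\rho\ell=L-\ell$, so $\#\Xi_{L,\ell}=\bigl((L-\ell)/(\rho\ell)+1\bigr)^d$, and the bounds $(L/\ell)^d\le \#\Xi_{L,\ell}\le (2L/\ell)^d$ follow from $3\ell/5\le \rho\ell\le 4\ell/5$ by direct manipulation.

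For the self-similarity statement \eqref{nesting}: fix $a\in x_0+\rho\ell\Z^d$ and $k\in\N$, and set $L'=(2k\rho+1)\ell$. Then $\La^{\R}_{L'}(a)$ extends $\pm(k\rho\ell+\ell/2)$ from $a$, so its intersection with $x_0+\rho\ell\Z^d$ consists of exactly the $2k+1$ points per axis centred at $a$; this gives both the union formula and the identification as a suitable $\ell$-cover of $\La_{L'}(a)$ once one checks that the same $\rho$ satisfies $\rho=(L'-\ell)/(2\ell k)$, which is immediate from the definition of $L'$. The covering within $\La_{L'}(a)$ then reduces to \eqref{nestingproperty} applied in this sub-box. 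Throughout, the principal obstacle is not conceptual but bookkeeping the boundary/flooring issues at the interface between $\La^{\R}_L$ and $\La_L$; the choice $\rho\in[3/5,4/5]$ is calibrated precisely so that both the overlap estimates in \eqref{nestingproperty}--\eqref{covproperty} and the separation in \eqref{freeguarantee} hold with a strict margin.
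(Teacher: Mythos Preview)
The paper does not supply its own proof of this lemma: it is simply quoted from \cite[Lemma~3.13]{GKber} (adapted to the discrete setting), so there is nothing to compare your argument against. Your direct coordinate-by-coordinate verification is the natural one and is essentially correct for \eqref{nestingproperty}, \eqref{covproperty}, \eqref{number}, and \eqref{nesting}.

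There is, however, a genuine gap in your treatment of the endpoint case of \eqref{freeguarantee}. Your claim that when $\rho=\tfrac35$ the putative common point ``would have to be a half-integer'' is not correct: take $x_0=0$, $\ell=10$, $L=70$, $k=5$, so $\rho=\tfrac35$ and $\rho\ell=6$. With $a=0$ and $b=6e_1$ one has $e_1\in\La_{\ell/5}(0)\cap\La_\ell(6e_1)$, since $\norm{e_1}=1\le\ell/10$ and $\norm{e_1-6e_1}=5\le\ell/2$. Thus \eqref{freeguarantee} as stated (with closed discrete boxes) can actually fail at $\rho=\tfrac35$; the issue is that the cited lemma in \cite{GKber} is formulated in the continuum, and the transcription to closed $\Z^d$-boxes loses the strict inequality at the boundary. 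This does not affect the paper, which explicitly notes after the lemma that \eqref{freeguarantee} is never used, but you should not claim to have proved it as written.
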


Note that $\Lambda_{\ell}^{(b)}$  does not denote a box  centered at $b$, just some box in ${\mathcal C}_{L,\ell} \left(x_0 \right)$ satisfying \eq{covproperty}.     By  $\Lambda_{\ell}^{(b)}$ we will always mean such a box.

\begin{remark}    
Note that  ${\rho} \ge \frac 3 5$ implies \eq{freeguarantee} and ${\rho} \le \frac 4 5$ yields \eq{covproperty}.   (We do not use \eq{freeguarantee} in this paper.) We specified ${\rho}={\rho}_{L,\ell}$ in   \emph{the} suitable $\ell$-cover for convenience, so there is no ambiguity  in the definition of ${\mathcal C}_{L,\ell} \left(x_0 \right) $.
\end{remark}

\begin{remark} \label{bufferedsuitable} Suitable covers are convenient for the construction of buffered subsets. 
   \end{remark}

\section{Eigensystem multiscale analysis}\label{secEMSA}
In this section we consider an   Anderson model $H_{\eps,\bom}$ and prove Theorem~\ref{thmMSA} as a corollary  to the following proposition.
We  recall that   $m_{\eps,L} $ is defined in \eq{mepsL}.

 \begin{proposition} \label{thmfullMSA} There exists a finite scale   $\cL $ such that, given $L_0\ge \cL $ and setting
 $L_{k+1}=L_k^\gamma$ for $k=0,1,\ldots$,
for all  $ \eps\le  \frac 1 {4d} \e^{-L_0^\beta}$ we have
 \begin{align}\label{MSAlk}
\inf_{x\in \R^d} \P\set{\La_{L_k} (x) \sqtx{is}   \tfrac {m_{\eps,L_0} }2 \text{-localizing for} \; H_{\eps,\bom}} \ge 1 -  \e^{-L_k^\zeta} \mqtx{for} k=0,1,\ldots.
\end{align}
  \end{proposition}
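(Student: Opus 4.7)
The plan is induction on $k$. For the base case $k=0$, the extreme disorder condition $\eps\le \tfrac{1}{4d}\e^{-L_0^{\beta}}$ places $H_{\eps,\bom,\La_{L_0}}$ within a perturbation of operator norm $\le 2d\eps\le \tfrac12\e^{-L_0^{\beta}}$ of the diagonal multiplication by $V_\bom$. Lemma~\ref{lemSep} implies that $\La_{L_0}$ is $L_0$-level spacing for $H_{\eps,\bom}$ with probability at least $1-\e^{-L_0^\zeta}$ once $L_0$ is large enough, since $\beta>\zeta$. On that event, the diagonal eigenvalues $\omega_x$ are separated by at least $\e^{-L_0^\beta}$, which strictly dominates the perturbation, so a direct Neumann-series/Combes--Thomas expansion of the resolvent (or Lemma~\ref{lemdeltaeta} combined with a pointwise bound on the iterated hopping) produces an eigensystem $\{(\vphi_x,\lambda_x)\}_{x\in \La_{L_0}}$ with each $\vphi_x$ close to $\delta_x$ and satisfying $|\vphi_x(y)|\le \e^{-m_{\eps,L_0}\norm{y-x}}$; in particular, this yields $(x,m_{\eps,L_0}/2)$-localization for every $x$.

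For the inductive step, assume \eqref{MSAlk} at scale $\ell=L_k$ and prove it at scale $L=L_{k+1}=\ell^\gamma$. Fix $\La_L(x_0)$ and consider the suitable $\ell$-cover $\mathcal{C}_{L,\ell}(x_0)$. Call a box $\La_\ell(a)$ in the cover \emph{bad} if it is not $\tfrac{m_{\eps,L_0}}{2}$-localizing for $H_{\eps,\bom}$. Define the good event $\cF$ as the intersection of (a) $\La_L$ is $L$-level spacing for $H_{\eps,\bom}$, and (b) the bad boxes are so sparse and spatially clustered that they can be enclosed in a buffered subset $\Upsilon\subset\La_L$ in the sense of Definition~\ref{defbuff}, with good $\ell$-boxes from the cover supplying the localizing buffer $\cG_\Upsilon$. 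Lemma~\ref{lemSep} gives $\P((\mathrm{a})^c)\le Y_{\eps_0}(L+1)^{2d}\e^{-(2\alpha-1)L^\beta}\ll \e^{-L^\zeta}$ because $\beta>\zeta$. For (b), the inductive hypothesis and the independence of the $\omega_x$ on disjoint regions yield a factorized bound on configurations of pairwise-disjoint bad boxes, and a careful combinatorial accounting using the constraints in \eqref{ttauzeta0}--\eqref{ttauzeta} controls $\P((\mathrm{b})^c)$ by $\tfrac12\e^{-L^\zeta}$.

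On $\cF$, Lemma~\ref{lem:ident_eigensyst} provides the matching $x\mapsto\wtilde{\lambda}\up{a}_x\in\sigma(H_{\La_L})$ from each good box $\La_\ell(a)\subset\La_L$ and isolates the set $\sigma_\cG(H_{\La_L})$ of ``good'' eigenvalues, whose eigenfunctions are exponentially close to the associated $\vphi\up{a}_x$ and therefore $(x,m')$-localized with only slightly reduced mass $m'$. Lemma~\ref{lembad} handles the complement, splitting it into $\sigma_\Upsilon(H_{\La_L})$ (coming injectively from $\sigma_\cB(H_\Upsilon)$) and a residual part, and giving the decay \eqref{gggsum} of the associated eigenfunctions from some point in $\widehat\Upsilon_\tau$. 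To upgrade these ingredients into the required site-labeled eigensystem $\{(\vphi_x,\lambda_x)\}_{x\in\La_L}$, I would set up the bipartite graph with vertices $\sigma(H_{\La_L})$ on one side and $\La_L$ on the other, linking each eigenvalue to the points admissible as its label (points of $\La_\ell^{\La_L,\ell_\tau}(a)$ for good eigenvalues, and points of $\widehat\Upsilon_\tau$ for the bad ones), and then apply Hall's Marriage Theorem (Appendix~\ref{appHall}); the cardinality estimates in Lemma~\ref{lem:ident_eigensyst}(ii)(d) together with the analogous bounds for $\sigma_\cB(H_\Upsilon)$ furnish Hall's condition.

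The main obstacle is twofold. First, the probability bound in (b) above: the naive ``two disjoint bad boxes'' count $(L/\ell)^{2d}\e^{-2\ell^\zeta}$ does not by itself beat $\e^{-L^\zeta}=\e^{-\ell^{\gamma\zeta}}$ when $\gamma>1$, so the geometric structure must be exploited by establishing that with high probability every cluster of bad $\ell$-boxes is confined to a region of diameter at most $L^{\widetilde\xi}$ for some $\widetilde\xi<1$, so that $\widecheck\Upsilon$ is a genuinely small subset of $\La_L$ with an abundant good-box buffer; the constraint $\gamma<\sqrt{\zeta/\xi}$ from \eqref{ttauzeta0} is the precise arithmetic input that closes this count. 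Second, the localization rate must survive infinitely many iterations without dropping below $m_{\eps,L_0}/2$: each invocation of Lemma~\ref{lemdecay2} or Lemma~\ref{lembad} costs a multiplicative factor $(1-C_d\ell_k^{\gamma\beta-\tau})$ in the mass by \eqref{m414} and \eqref{M21}, and one must verify using $\gamma\beta<\tau<1$ that $\prod_{k\ge 0}(1-C_d\ell_k^{\gamma\beta-\tau})>\tfrac12$ once $L_0$ is large enough, which is exactly why the proposition asserts rate $m_{\eps,L_0}/2$ rather than $m_{\eps,L_0}$.
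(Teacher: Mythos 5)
Your overall architecture (high-disorder base case at $L_0$; induction via a suitable $\ell$-cover, buffered subsets around the bad boxes, Lemmas~\ref{lem:ident_eigensyst} and \ref{lembad}, Hall's Marriage Theorem for the site labeling; a convergent product keeping the rate above $\tfrac12 m_{\eps,L_0}$) is the paper's, but two steps as written would fail. First, the base case: you invoke Lemma~\ref{lemSep} to get $L_0$-level spacing for $H_{\eps,\bom,\La_{L_0}}$ and then assert that the potential values $\omega_x$ are separated by $\e^{-L_0^\beta}$, ``which strictly dominates the perturbation.'' That inference is wrong: level spacing of $H$ gives, via Weyl, only $\abs{\omega_x-\omega_y}\ge \e^{-L_0^\beta}-4d\eps$, which can vanish at the allowed $\eps=\tfrac1{4d}\e^{-L_0^\beta}$; and even granting separation exactly $\e^{-L_0^\beta}$, the perturbative construction (Lemma~\ref{lemInitial}) yields decay rate $\log\tfrac{\eta-2d\eps}{2d\eps}$, which is $0$ when $\eta=4d\eps$, nowhere near $m_{\eps,L_0}=\log\bpa{1+\tfrac{\e^{-L_0^\beta}}{2d\eps}}\ge\log 3$. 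The paper instead demands separation of the potential values themselves at threshold $(1+\kappa)\e^{-L_0^\beta}$ with $\kappa=4d\eps\,\e^{L_0^\beta}$, so that $\eta-4d\eps=\e^{-L_0^\beta}$, and bounds the failure probability by a direct union bound over pairs using the H\"older continuity of $\mu$ (Proposition~\ref{propInitial}); Lemma~\ref{lemSep} concerns the spectrum of $H$, not the potential, and cannot be substituted here.

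Second, the probabilistic count in the induction step. The paper does not confine the bad boxes to a single small region: it fixes $N_\ell=\fl{\ell^{(\gamma-1)\tzeta}}$ and bounds the probability that the cover contains $N_\ell+1$ pairwise disjoint non-localizing boxes by $\pa{2L/\ell}^{(N_\ell+1)d}\e^{-(N_\ell+1)\ell^\zeta}<\tfrac12\e^{-L^\zeta}$, using independence on disjoint boxes and $\tzeta>\zeta$; the deterministic counterpart is that the buffered regions built around the (possibly several) bad clusters have total diameter $\le 5\ell N_\ell\ll L^\tau$, which is exactly where $\tfrac{(\gamma-1)\beta+1}{\gamma}<\tau$ from \eqref{ttauzeta0} enters. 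The inequality $\gamma<\sqrt{\zeta/\xi}$, which you name as the arithmetic input that closes the count, plays no role in this proposition: $\xi$ only appears when passing from the scale sequence $L_k$ to arbitrary scales, i.e., in deriving Theorem~\ref{thmMSA} from Proposition~\ref{thmfullMSA}. Finally, your good event omits a needed ingredient: Definition~\ref{defbuff} requires each buffered subset to be $L$-level spacing, and the injection \eqref{injbad} of $\sigma_\cB(H_\Ups)$ into $\sigma(H_{\La_L})$ used in Lemma~\ref{lembad} depends on it; since the bad configuration is random, this must be arranged beforehand by a union bound over all $\G_2$-connected candidate configurations of size at most $N_\ell$ (the event $\cS_N$ and the count \eqref{cFN}), whose factorial growth is still beaten by $\e^{-(2\alpha-1)L^\beta}$ because $(\gamma-1)\tzeta<\gamma\beta$ and $\zeta<\beta$. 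With these repairs your outline coincides with the paper's proof.
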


{Proposition~\ref{thmfullMSA}  is an immediate consequence of the following two propositions.

\begin{proposition}\label{propInitial} Let $\eps>0$ and $L\ge 1$. 
 Then 
\beq\label{probmloc}
\inf_{x\in \R^d} \P\set{\La_L(x) \sqtx{is} m_{\eps,L}\text{-localizing for} \; H_{\eps,\bom}} \ge 1 -  \tfrac 1 2 K {(L+1)^{2d}} \pa{8d\eps +2\e^{-L^\beta} }^\alpha.
\eeq
In particular,
if   $L$ is sufficiently large,  for all  $0< \eps\le  \frac 1 {4d} \e^{-L^\beta}$  we have $m_{\eps,L}\ge \log 3$ and 
\begin{align}\label{probmloc2}
\inf_{x\in \R^d} \P\set{\La_L(x) \sqtx{is} m_{\eps,L}\text{-localizing for} \; H_{\eps,\bom}}
 \ge 1 -  \e^{-L^\zeta}.
\end{align}
\end{proposition}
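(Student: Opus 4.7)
My intuition is: when $\eps$ is much smaller than the typical spacing of the random potentials inside $\La_L$, the operator $H_{\eps,\bom,\La_L}$ is a small perturbation of the diagonal multiplication operator $V_\bom$, and a Gershgorin-type argument then delivers both the level spacing and the site-indexed localized eigensystem. I would introduce the event of well-separated potentials,
\[
\cE_L(x_0) := \bset{\abs{\omega_x-\omega_y}\ge 4d\eps+\e^{-L^\beta} \text{ for all distinct } x,y\in\La_L(x_0)},
\]
and use the H\"older bound \eq{Holdercont} to estimate, for each pair $x\ne y$, $\P\{\abs{\omega_x-\omega_y}<4d\eps+\e^{-L^\beta}\}\le S_\mu(8d\eps+2\e^{-L^\beta})\le K(8d\eps+2\e^{-L^\beta})^\alpha$. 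Summing over at most $\binom{|\La_L|}{2}\le \tfrac12(L+1)^{2d}$ pairs produces exactly the probability bound \eq{probmloc}, reducing the problem to verifying that $\La_L(x_0)$ is $m_{\eps,L}$-localizing on $\cE_L(x_0)$.

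For the level spacing and the site-indexed labeling, the key observation is that $H_{\La_L}$ has diagonal $V_\bom$ and off-diagonal row sums at most $2d\eps$, so Gershgorin's circle theorem places every eigenvalue in some interval $[\omega_x-2d\eps,\omega_x+2d\eps]$. On $\cE_L(x_0)$ these intervals are pairwise disjoint, so the standard connected-component version of Gershgorin forces each to contain exactly one eigenvalue. This gives a bijection $x\mapsto \lambda_x$ with $\abs{\lambda_x-\omega_x}\le 2d\eps$, and $\abs{\lambda_x-\lambda_y}\ge \abs{\omega_x-\omega_y}-4d\eps\ge \e^{-L^\beta}$ is the required level spacing.

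For $(x,m_{\eps,L})$-localization of the corresponding eigenfunction $\vphi_x$, for $y\in\La_L$ with $y\ne x$ I use $\abs{\omega_y-\lambda_x}\ge \abs{\omega_y-\omega_x}-2d\eps\ge 2d\eps+\e^{-L^\beta}$ to divide the eigenvalue equation at $y$, obtaining
\[
\abs{\vphi_x(y)}\le \frac{2d\eps}{2d\eps+\e^{-L^\beta}}\max_{\substack{z\in\La_L\\ \abs{z-y}=1}}\abs{\vphi_x(z)} = \e^{-m_{\eps,L}}\max_{\substack{z\in\La_L\\ \abs{z-y}=1}}\abs{\vphi_x(z)}
\]
by the definition of $m_{\eps,L}$ in \eq{mepsL}. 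I would then iterate along a greedy nearest-neighbor path $y=y_0,y_1,y_2,\ldots$ in which $y_{i+1}$ attains the maximum above. Because the graph metric on $\Z^d$ equals $\abs{\cdot}_1\ge \norm{\cdot}$, the chain cannot reach $x$ in fewer than $\norm{y-x}$ steps, so the recursion applies at least $\norm{y-x}$ times; combined with $\norm{\vphi_x}_\infty\le 1$ this produces $\abs{\vphi_x(y)}\le \e^{-m_{\eps,L}\norm{y-x}}$ for every $y\ne x$, which is strictly stronger than $(x,m_{\eps,L})$-localization.

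The asymptotic statement \eq{probmloc2} then follows easily: $\eps\le\tfrac{1}{4d}\e^{-L^\beta}$ forces $\tfrac{\e^{-L^\beta}}{2d\eps}\ge 2$ and hence $m_{\eps,L}\ge\log 3$, while the probability bound from the first step becomes $\le\tfrac12 K(L+1)^{2d}(4\e^{-L^\beta})^\alpha$, which is dominated by $\e^{-L^\zeta}$ for $L$ large since $\zeta<\beta$. The only slightly delicate point I anticipate is the iteration above: the recursive inequality fails at the pivot $x$, so one has to check that greedy neighbor-selection cannot shortcut to $x$ before $\norm{y-x}$ steps, but this is automatic from the geometry of $\Z^d$ and so does not pose a real obstacle.
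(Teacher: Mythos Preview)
Your proposal is correct and follows essentially the same route as the paper's proof. The paper packages the deterministic part as a separate lemma (Lemma~\ref{lemInitial}), invoking Weyl's inequality where you invoke Gershgorin, and it records the slightly sharper decay $\abs{\psi_y(x)}\le (2d\eps/(\eta-2d\eps))^{\abs{x-y}_1}$ before passing to $\norm{x-y}$; otherwise the event, the probability bound via $S_\mu$, the eigenvalue labeling, the level-spacing computation, and the nearest-neighbor iteration are the same as yours.
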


 \begin{proposition}\label{propMSA}  Fix $\eps_0 >0$ and  $m_->0$.   There exists a finite scale   $\cL(\eps_0,m_-) $ with the following property: Suppose for some scale $L_0\ge \cL(\eps_0,m_-) $, $0<\eps\le \eps_0$, and $m_0\ge m_-$ we have
  \begin{align}\label{initialconinduc}
\inf_{x\in \R^d} \P\set{\La_{L_0} (x) \sqtx{is}  m_0 \text{-localizing for} \; H_{\eps,\bom}} \ge 1 -  \e^{-L_0^\zeta}.
\end{align}
 Then,  setting
 $L_{k+1}=L_k^\gamma$  for $k=0,1,\ldots$,  we have 
  \begin{align}
\inf_{x\in \R^d} \P\set{\La_{L_k} (x) \sqtx{is}  \tfrac {m_0}2   \text{-localizing for} \; H_{\eps,\bom}} \ge 1 -  \e^{-L_k^\zeta} \mqtx{for} k=0,1,\ldots.
\end{align}
 \end{proposition}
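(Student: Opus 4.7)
The plan is an induction on $k$, tracking the mass losses summably.  Fix $\eps_0, m_-$ and choose $\cL(\eps_0,m_-)$ large enough that, setting $\eta_k := C(d,m_-,\eps_0) L_k^{\gamma\beta-\tau}$ (summable since \eq{ttauzeta} gives $\gamma\beta < \tau$), one has $\prod_{j\ge 0}(1-\eta_j) \ge \tfrac{1}{2}$ for every starting scale $L_0 \ge \cL(\eps_0,m_-)$, and every implicit ``$L$ sufficiently large'' requirement from Lemmas~\ref{lemSep}, \ref{lem:ident_eigensyst}, and \ref{lembad} is met at all $L_k$.  I would then prove inductively that
\[
\inf_{x\in\R^d}\P\set{\La_{L_k}(x)\text{ is }m_k\text{-localizing for } H_{\eps,\bom}}\ge 1 - e^{-L_k^\zeta}, \quad m_k\ge m_0\prod_{j<k}(1-\eta_j)\ge \tfrac{m_0}{2}.
\]
The base case is \eq{initialconinduc}; the content is the induction step from $L := L_k$ to $L' := L_{k+1} = L^\gamma$.

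The induction step starts by covering $\La_{L'}(x_0)$ by the suitable $L$-cover $\set{\La_L(a)}_{a\in\Xi_{L',L}}$ of Definition~\ref{defcov}, calling $a$ \emph{bad} if $\La_L(a)$ fails to be $m_k$-localizing and letting $\cB\subset\Xi_{L',L}$ denote the random set of bad centers.  The good event is
\[
\cE = \set{\La_{L'}(x_0)\text{ is level spacing for }H_{\eps,\bom}}\cap \set{\cB\text{ is ``sparse'' in }\Xi_{L',L}},
\]
where sparseness means the bad centers can be covered by a bounded number of small sub-boxes whose union has diameter $\le (L')^{\tau}/10$.  This event is controlled by Lemma~\ref{lemSep} (level spacing at scale $L'$, via $\beta > \zeta$) together with the induction hypothesis, independence across disjoint cover boxes \eq{freeguarantee}, and a union bound over cluster patterns; the exponent inequalities \eq{ttauzeta} (notably $\xi\gamma^2 < \zeta < \beta$ and $\gamma > (1-\beta)/(\tau-\beta)$) yield $\P(\cE^c)\le e^{-(L')^\zeta}$ for $L_0\ge \cL$.

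On $\cE$ I would enlarge each cluster of bad cover boxes to a buffered subset $\Ups\subset \La_{L'}$ as in Definition~\ref{defbuff}, drawing $\cG_\Ups$ from nearby good cover boxes; sparseness keeps $\diam\Ups \le (L')^\tau$, which is what the downstream lemmas need.  For the eigensystem $\set{(\psi_\lambda,\lambda)}_{\lambda\in\sigma(H_{\La_{L'}})}$, Lemma~\ref{lembad}(ii) splits $\sigma(H_{\La_{L'}})$ into the captured good part $\sigma_\cG(H_{\La_{L'}})$, the bad image $\sigma_\Ups(H_{\La_{L'}})$, and the complement.  For $\lambda\in\sigma_\cG(H_{\La_{L'}})$, \eq{difeq86} shows that $\psi_\lambda$ is close in $\ell^2$ to an $m_k$-localized $\vphi_x\up{a}$ coming from a good cover box, yielding $(x,m_{k+1})$-localization with $m_{k+1}\ge m_k(1-\eta_k)$; for $\lambda\in\sigma_\Ups(H_{\La_{L'}})$, \eq{difeqr} transports the decay of a bad eigenfunction of $H_\Ups$ onto $\psi_\lambda$; and for the complement, iterating \eq{gggsum} forces $\psi_\lambda$ to decay out of the interior of $\Ups$ into a good region, where Lemma~\ref{lem:ident_eigensyst}(i) then propagates the decay across good boxes.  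All three alternatives yield rate $\ge m_{k+1}$.

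The last task is to organize these eigenfunctions into an eigensystem labeled by sites $x\in\La_{L'}$.  Each good cover box $\La_L(a)$ supplies the injection $x\in\La_L^{\La_{L'},L_\tau}(a)\mapsto \wtilde\lambda_x\up{a}$ of Lemma~\ref{lem:ident_eigensyst}(i)(a); together with an auxiliary assignment for sites inside each $\Ups$ and for eigenvalues in $\sigma_\Ups(H_{\La_{L'}})$, these local injections generate compatibility edges in a bipartite graph on $\La_{L'}\sqcup\sigma(H_{\La_{L'}})$.  The covering properties \eq{nestingproperty}--\eq{covproperty}, the near-disjointness \eq{vphilambda}--\eq{sigmaab}, and the cardinality identity $|\La_{L'}|=|\sigma(H_{\La_{L'}})|$ from level spacing let me verify Hall's condition, and the Marriage Theorem (Appendix~\ref{appHall}) then produces the required bijection $x\mapsto \lambda_x$ and hence the $m_{k+1}$-localized eigensystem.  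The hardest step is precisely this labeling: local label-to-site injections coming from overlapping cover boxes can conflict, and verifying Hall's condition uniformly requires the quantitative separation between captured-good, bad-image, and complementary spectral components that Lemmas~\ref{lem:ident_eigensyst} and~\ref{lembad} are designed to provide.
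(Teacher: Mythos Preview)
Your overall architecture matches the paper's proof (which is Lemma~\ref{lemInduction} iterated): suitable cover, buffered subsets around bad clusters, Hall's Marriage Theorem for the labeling, summable mass losses. But the core decay step is wrong as stated. For $\lambda\in\sigma_\cG(H_{\La_{L'}})$ you claim that the $\ell^2$-closeness \eq{difeq86} of $\psi_\lambda$ to some $\vphi_x\up{a}$ ``yields $(x,m_{k+1})$-localization.'' It does not: that bound gives a \emph{fixed} small error of size $\lesssim e^{-m_1 L_\tau}e^{(L')^\beta}$, and since $\vphi_x\up{a}$ vanishes outside $\La_L(a)$ you only learn $|\psi_\lambda(y)|\lesssim e^{-m_1 L_\tau}$ uniformly for $y\notin\La_L(a)$, whereas $(x,m_{k+1})$-localization at the new scale demands decay $e^{-m_{k+1}\|y-x\|}$ out to $\|y-x\|\ge (L')_\tau=L^{\gamma\tau}\gg L^\tau$. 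The paper gets the real decay by iteration in the \emph{spatial} variable: for each $y$ far from $x$ one finds a good box $\La_L(b)$ (or buffered subset $\Ups_r$) containing $y$ that does \emph{not} capture $\lambda_x$, applies \eq{psidecgoodpr} (or \eq{gggsum}) to gain a factor and move to a nearby point, and repeats across the chain of boxes between $y$ and $x$; see the case analysis (i)(a)--(ii)(b) and \eq{repeateddecay}. The same iteration handles $\lambda\in\sigma_{\Ups_r}(H_{\La_{L'}})$. The ``complement'' you describe is in fact proved empty by this mechanism (claim \eq{claimsp}); the iteration is needed for the two nonempty cases, not for a third one.

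Two further gaps. A ``bounded number'' of bad cover boxes is not enough: to get $\P(\cB_N^c)\le\tfrac12 e^{-(L')^\zeta}=\tfrac12 e^{-L^{\gamma\zeta}}$ from independent failures of probability $\le e^{-L^\zeta}$ one needs $N+1>L^{(\gamma-1)\zeta}$, so $N=N_L$ must grow with the scale (the paper takes $N_L=\lfloor L^{(\gamma-1)\tzeta}\rfloor$, cf.\ \eq{setN}). And your event $\cE$ only imposes level spacing on $\La_{L'}$, but Definition~\ref{defbuff}(ii) requires each buffered subset $\Ups$ itself to be $L'$-level spacing; the paper adds this to the good event for \emph{every} potential cluster $\Phi$ (the event $\cS_N$, bounded via \eq{cFN}--\eq{probSN}). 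Finally, the paper's verification of Hall's condition is not combinatorial as you sketch but analytic: for any $\Theta\subset\La_{L'}$, the uniform smallness \eq{decoutx6} of uncaptured eigenfunctions on $\Theta$ gives $\|(1-Q_\Theta)\Chi_\Theta\|<1$, and Lemma~\ref{lemPQ} then yields $|\Theta|\le|\cN(\Theta)|$.
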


\subsection{Initial step}

In this subsection we  prove  Proposition~\ref{propInitial}.

\begin{lemma}\label{lemInitial}
Let  $H_\eps=-\eps\Delta +V$ on $\ell^2(\Z^d)$, where $V$ is a  bounded potential and $\eps > 0$.  
Let $\Theta \subset \Z^d$, and suppose there is $\eta >0$ such that 
\beq\label{Vsep}
\abs{V(x)-V(y)}\ge \eta \qtx{for all} x, y \in\Th, x\ne y.
\eeq
Then for   $\eps < \frac \eta {4d}$ the operator  $H_{\eps,\Th}$ has an eigensystem  $\set{(\psi_x,\lambda_x)}_{x\in \Th}$ such that
\beq\label{Hthsep}
\abs{\lambda_x-\lambda_y}\ge \eta- 4d\eps >0 \qtx{for all} x, y \in\Th, x\ne y,
\eeq
and for all $y\in \Th$ we have 
\beq\label{initialdec}
\abs{\psi_y(x)} \le  \pa{ \tfrac {2d\eps }{ \eta- 2d\eps}}^{\abs{x-y}_1} \qtx{for all } x \in \Th.
\eeq
\end{lemma}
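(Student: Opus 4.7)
The operator $H_{\eps,\Theta}=V_\Theta-\eps\Delta_\Theta$ is a small perturbation of the diagonal operator $V_\Theta$, whose eigenvalues $\set{V(x)}_{x\in\Theta}$ are separated by the gap $\eta$ and whose eigenfunctions are $\set{\delta_x}_{x\in\Theta}$. Schur's test gives $\norm{\Delta_\Theta}\le 2d$, so the hypothesis $\eps<\tfrac{\eta}{4d}$ forces $\norm{\eps\Delta_\Theta}\le 2d\eps$ to be strictly less than half the spectral gap of $V_\Theta$. The plan is to use this perturbative smallness to produce an eigensystem of $H_{\eps,\Theta}$ labeled by sites of $\Theta$, and then read \eq{initialdec} off directly from the pointwise eigenvalue equation.

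For the labeling I would enumerate $\Theta=\set{x_1,\ldots,x_N}$ so that $V(x_1)<V(x_2)<\cdots<V(x_N)$, using \eq{Vsep} for strict ordering. Writing $\mu_1\le\mu_2\le\cdots\le\mu_N$ for the ordered eigenvalues of $H_{\eps,\Theta}$, the Weyl/min-max inequality gives $\abs{\mu_k-V(x_k)}\le 2d\eps$ for each $k$, hence
\[
\abs{\mu_k-\mu_j}\ge\abs{V(x_k)-V(x_j)}-4d\eps\ge\eta-4d\eps>0\qtx{whenever}k\ne j.
\]
All eigenvalues are then simple, so setting $\lambda_{x_k}:=\mu_k$ and taking $\psi_{x_k}$ to be a corresponding unit eigenfunction produces an eigensystem satisfying \eq{Hthsep}.

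For the decay, fix $y\in\Theta$ and write $H_{\eps,\Theta}\psi_y=\lambda_y\psi_y$ componentwise as
\[
(V(x)-\lambda_y)\,\psi_y(x)=\eps\!\!\!\sum_{\substack{z\in\Theta\\\abs{z-x}=1}}\!\!\psi_y(z)\qtx{for every}x\in\Theta.
\]
For $x\ne y$ the inequalities $\abs{V(x)-V(y)}\ge\eta$ and $\abs{\lambda_y-V(y)}\le 2d\eps$ combine to give $\abs{V(x)-\lambda_y}\ge\eta-2d\eps>0$, whence
\[
\abs{\psi_y(x)}\le q\max_{\substack{z\in\Theta\\\abs{z-x}=1}}\abs{\psi_y(z)},\qquad q:=\frac{2d\eps}{\eta-2d\eps}.
\]
To iterate this, I introduce $F(n):=\max\set{\abs{\psi_y(x)}\colon x\in\Theta,\ \abs{x-y}_1\ge n}$ (set to $0$ when empty); since $\norm{\psi_y}_\infty\le 1$, we have $F(0)\le 1$. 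For $n\ge 1$, any $x\in\Theta$ attaining $F(n)$ is distinct from $y$ and each of its lattice neighbors lies at $\ell^1$-distance at least $n-1$ from $y$, so the pointwise bound yields $F(n)\le qF(n-1)$. Iteration gives $F(n)\le q^n$, and evaluating at a point with $\abs{x-y}_1=n$ reproduces \eq{initialdec}. The only genuinely delicate step is checking that the min-max labeling is well defined; the decay itself is a one-step Combes--Thomas-type inequality iterated along lattice paths, with no further obstacle anticipated.
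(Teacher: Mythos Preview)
Your proof is correct and follows essentially the same approach as the paper's: both use the Weyl/min-max inequality to label eigenvalues by sites with $\abs{\lambda_y-V(y)}\le 2d\eps$, then iterate the one-step bound $\abs{\psi_y(x)}\le q\max_{\abs{z-x}=1}\abs{\psi_y(z)}$ coming from the eigenvalue equation. The only cosmetic difference is in the iteration: the paper picks a maximizing neighbor $z_1$ and follows a path for $\abs{x-y}_1$ steps, while you package the same idea via the level-set function $F(n)$; the two are equivalent.
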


\begin{proof}We take $\eps <  \frac \eta {4d}$ and 
treat $H_{\eps,\Th}$ as a perturbation of $V_\Th$. Since $\sigma(V_\Th)=\set{V(x)}_{x \in \Th}$ is simple and $\norm{ \Delta_\Th}\le 2d $, it follows from  \eq{Vsep} and Weyl's inequality (e.g.,  \cite[Theorem~4.3.1]{HJ}), that  $H_{\eps,\Th}$ has simple spectrum   $\sigma(H_{\eps,\Th})=\set{\lambda_x}_{x \in \Th}$ with
\beq\label{lambda-V}
\abs{\lambda_x-V(x)}\le 2d\eps <  \tfrac \eta {2} \qtx{for all} x\in \Th,
\eeq
so we have  \eq{Hthsep} and $H_{\eps,\Th}$ has an  eigensystem $\set{(\psi_x,\lambda_x)}_{x\in \Th}$. 

 Let  $y\in \Th$. Then for any $x\in \Th$, $x\neq y$, we have, 
  \beq
\abs{\lambda_y-V(x)}\ge \abs{V(y)-V(x)}-\abs{\lambda_y-V(y)}\ge\eta- 2d\eps,
\eeq
where we used  \eq{Vsep}  and \eq{lambda-V}, 
and
\begin{align}\label{psicalc}
\psi_y(x)& =\scal{\delta_x,\psi_y} =\pa{\lambda_y-V(x)}^{-1}\scal{(H_{\eps,\Th}-V_\Th)\delta_x,\psi_y}\\  \notag &=\eps\pa{\lambda_y-V(x)}^{-1}\scal{-\Delta_\Th\delta_x,\psi_y}=\eps\pa{\lambda_y-V(x)}^{-1}  \sum_{\substack{z\in\Th\\ |z-x|=1}} {\psi_y(z)}.
\end{align}
We conclude  that
\beq\label{abspsi}
\abs{\psi_y(x)} \leq \tfrac{\eps}{\eta- 2d\eps}\sum_{\substack{z\in\Th\\ |z-x|=1}}\abs{\psi_y(z)}\le  \tfrac{2d\eps}{\eta- 2d\eps} \abs{\psi_y(z_1)} ,
\eeq
for some  $z_1\in\Th$ with  with  $ |z_1-x|=1$.  If  $z_1\ne y$ we can estimate  $\abs{\psi_y(z_1)}$ by  
\eq{abspsi}.  Since we can  perform  this procedure at least $\abs{
x-y}_1$ times, we obtain \eq{initialdec}.
\end{proof}

\begin{proof}[Proof of Proposition~\ref{propInitial}]  Let $\eps >0$ and
 $\La_L= \La_L(x_0)$ for some $x_0 \in \R^d$. Let $\kappa = 4d\eps  \e^{L^\beta}$ and suppose
\beq\label{VsepL}
\abs{V(x)-V(y)}\ge (1+\kappa)\e^{-L^\beta} \qtx{for all} x, y \in\La_L, x\ne y.
\eeq
It  follows from 
 Lemma~\ref{lemInitial}  that $H_{\eps,\La_L}$ has an eigensystem  $\set{(\psi_x,\lambda_x)}_{x\in \La_L}$ satisfying  \eq{Hthsep} and   \eq{initialdec} with $\eta=   (1+\kappa)\e^{-L^\beta}$. Since
 $\eta - 4d\eps =\e^{-L^\beta}$, we conclude from \eq{Hthsep} that $\La_L$ is level spacing for $H_{\eps}$.  Moreover, $\frac {2d\eps }{ \eta- 2d\eps}= \frac {\kappa}{2+\kappa } $ and 
 and $\norm{x}\le \abs{x}_1$, so  \eq{initialdec}  yields
\beq\label{initialdec9}
\abs{\psi_y(x)} \le\pa{ \tfrac {\kappa}{2+\kappa } }^{\norm{x-y}}= \e^{-m_{\eps,L}\norm{x-y}}\qtx{for all} y, x \in \La_L,
\eeq
where 
\beq
m_{\eps,L}= -\log \pa{ \tfrac {\kappa}{2+\kappa } }=  \log \tfrac { \eta- 2d\eps}{2d\eps }= \log \pa{1 + \tfrac {\e^{-L^\beta}}{2d\eps} }.  
\eeq
In particular, $\La_L$ is $m_{\eps,L}$-localizing.

We conclude that
\begin{align}
&\P\set{\La_L \sqtx{is not} m_{\eps,L}\text{-localizing}}\le \P\set{\text{\eq{VsepL} does not hold}}
\\
\notag & \qquad \qquad  \le \tfrac {(L+1)^{2d}}2 S_\mu \pa{2(1+\kappa)\e^{-L^\beta} }= \tfrac {(L+1)^{2d}}2 S_\mu \pa{8d\eps +2\e^{-L^\beta} }\\
\notag & \qquad \qquad \le  \tfrac 1 2 K {(L+1)^{2d}} \pa{8d\eps +2\e^{-L^\beta} }^\alpha,
\end{align}
which yields \eq{probmloc}.   (We assumed $8d\eps +2\e^{-L^\beta}\le 1$ to use \eq{Holdercont} as stated; if not \eq{probmloc} holds trivially.)

If $0< \eps\le  \frac 1 {4d} \e^{-L^\beta}$,  we have  $m_{\eps,L}\ge \log 3$ and
\begin{align}
\inf_{x\in \R^d} \P\set{\La_L(x) \sqtx{is} m_{\eps,L}\text{-localizing for} \; H_\eps}
\ge
1- 2^{2\alpha -1} K {(L+1)^{2d}} \e^{-\alpha L^\beta} ,
\end{align}
which gives \eq{probmloc2} for large $L$ since $\zeta < \beta$.
\end{proof}

\subsection{Multiscale analysis}

In this subsection we prove  Proposition~\ref{propMSA}.  We start with the induction step for the multiscale analysis.

\begin{lemma}\label{lemInduction}
 Fix $\eps_0 >0$ and  $m_->0$.   Suppose for some scale $\ell$, $0<\eps\le \eps_0$, and $m\ge m_-$ we have
 \begin{align}\label{hypMSAlem}
\inf_{x\in \R^d} \P\set{\La_{\ell} (x) \sqtx{is}  m \text{-localizing for} \; H_{\eps,\bom}} \ge 1 -  \e^{-\ell^\zeta}.
\end{align}
Then, if $\ell$ is sufficiently large, we have (recall $L=\ell^\gamma$)
 \begin{align}
\inf_{x\in \R^d} \P\set{\La_{L} (x) \sqtx{is}   M   \text{-localizing for} \; H_{\eps,\bom}} \ge 1 -  \e^{-L^\zeta},
\end{align}
where
\beq\label{Minduc}
M\ge m\pa{1-   C_{d,m_-,\eps_0} \ell^{-\min\set{\frac {1- \tau }2, \gamma \tau- (\gamma-1)\tzeta  -1}}}.
\eeq
\end{lemma}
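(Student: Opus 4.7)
My plan is to work with the suitable $\ell$-cover $\cC_{L,\ell}(x_0) = \set{\La_\ell(a)}_{a\in\Xi_{L,\ell}}$ of $\La_L=\La_L(x_0)$ provided by Definition~\ref{defcov}. I would call a box $\La_\ell(a)$ in the cover \emph{good} if it is $m$-localizing for $H_{\eps,\bom}$ and \emph{bad} otherwise, and I would define the good event $\cY$ as the intersection of: (i) $\La_L$ is $L$-level spacing for $H_{\eps,\bom}$, and (ii) the set of bad centers in $\Xi_{L,\ell}$ is sparse enough to be enclosed in a buffered subset $\Ups\subset\La_L$ in the sense of Definition~\ref{defbuff}. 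By Lemma~\ref{lemSep}, the probability of (i)$^c$ is bounded by $Y_{\eps_0}(L+1)^{2d}e^{-(2\alpha-1)L^\beta}$, which is negligible since $\beta>\zeta$. For (ii)$^c$, I would invoke independence of $H_{\eps,\bom,\La_\ell(a)}$ across pairwise disjoint $\La_\ell(a)$ together with the inductive hypothesis \eq{hypMSAlem} to bound the probability of having $K$ pairwise-separated bad boxes by $\binom{|\Xi_{L,\ell}|}{K}e^{-K\ell^\zeta}$; the constraints \eq{ttauzeta} then permit choosing $K$ so that $K\ell^\zeta$ exceeds $L^\zeta+d\log(2L/\ell)$ while $K$ remains much smaller than $|\Xi_{L,\ell}|$, yielding $\P(\cY)\ge 1-e^{-L^\zeta}$.

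On $\cY$, I would carry out a deterministic construction. Using the cluster structure of the bad centers and the flexibility afforded by suitable covers (cf.\ Remark~\ref{bufferedsuitable}), I would form a buffered subset $\Ups\subset\La_L$ with $\cG_\Ups\subset\Xi_{L,\ell}$ consisting of good boxes entirely surrounding the bad region. Lemma~\ref{lem:ident_eigensyst}(ii) then produces injections $x\mapsto\wtilde{\lambda}_x^{(a)}$ labeling $\sigma_\cG(H_{\La_L})$ by sites in the good boxes of $\cG_\Ups$, while Lemma~\ref{lembad}(ii) provides an injection $\nu\mapsto\wtilde\nu$ from $\sigma_\cB(H_\Ups)$ into $\sigma(H_{\La_L})\setminus\sigma_\cG(H_{\La_L})$. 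Eigenvalues of $H_{\La_L}$ that lie in neither image satisfy the hypothesis of Lemma~\ref{lembad}(i) and their eigenfunctions enjoy the decay \eq{gggsum} away from $\Ups$.

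To promote these partial labelings to a full bijection between $\La_L$ and $\sigma(H_{\La_L})$, I would invoke Hall's Marriage Theorem (Appendix~\ref{appHall}) on the bipartite graph that joins each site $x$ either to its already-identified eigenvalue (for sites in $\bigcup_{a\in\cG_\Ups}\La_\ell^{\La_L,\ell_\tau}(a)$ and for sites in $\widehat\Ups$, matched via the eigensystem of $H_\Ups$) or to eigenvalues whose eigenfunction has substantial $\ell^\infty$-mass on the $\ell_\tau$-boundary neighborhood of $x$. The counting estimate \eq{cGsigma}, applied both to $\sigma_\cG(H_{\La_L})$ and to $\sigma_\cB(H_\Ups)$, together with the simplicity of $\sigma(H_{\La_L})$ granted by $L$-level spacing, will ensure Hall's condition, yielding the eigensystem $\set{(\vphi_x,\lambda_x)}_{x\in\La_L}$.

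The main technical obstacle is establishing the $M$-localization bound $|\vphi_x(y)|\le e^{-M\norm{y-x}}$ for $\norm{y-x}\ge L_\tau$. Inside $\La_\ell(a)$ containing $x$, \eq{difeq86} reduces matters to the known $m$-localization of $\vphi_x^{(a)}$, at the cost of a prefactor $2e^{-m_1\ell_\tau}e^{L^\beta}$ which is absorbed by the decay since $\gamma\beta<\tau$. Outside that box, I would propagate decay across successive good boxes of the cover using Lemma~\ref{lemdecay2}(ii), which contracts the rate by $1-C\ell^{(\tau-1)/2}$ per hop; for labels $x\in\widehat\Ups$ the analogous outward propagation is provided by Lemma~\ref{lembad}(i). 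The delicate step is to control the accumulated multiplicative loss over the $O((L/\ell)^d)$ hops needed to reach distance $\sim L$ and to show that both $\ell^{-(1-\tau)/2}$ and $\ell^{-(\gamma\tau-(\gamma-1)\tzeta-1)}$ dominate it; this is where the full arithmetic strength of \eq{ttauzeta} and \eq{ttauzeta2}, in particular $\gamma\tau>(\gamma-1)\tzeta+1$, enters to produce the rate bound \eq{Minduc} and thereby close the induction with $M$ only an $\ell^{-\eta}$-multiplicative factor below $m$.
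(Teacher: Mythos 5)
Your overall strategy coincides with the paper's (suitable $\ell$-cover, good/bad boxes, buffered subsets, Hall's Marriage Theorem, then propagation of decay), but as written it has three concrete gaps. First, your probability estimate never controls the level-spacing of the buffered region. Being ``a buffered subset in the sense of Definition~\ref{defbuff}'' includes the requirement that $\Ups$ be $L$-level spacing, and which set $\Ups$ you must buffer is itself random (it depends on which boxes turned out bad), so you cannot simply intersect with the event that ``the'' buffer is level spacing: you need an a priori union bound over all candidate buffered sets. The paper does this by listing all $\G_2$-connected configurations $\cF_N$ of at most $N_\ell=\fl{\ell^{(\gamma-1)\tzeta}}$ bad centers, bounding $\abs{\cF_N}\le (L+1)^d N_\ell!\,\pa{d4^d}^{N_\ell-1}$ as in \eq{cFN}, and checking in \eq{probSN} that this factorial is beaten by $\e^{-(2\alpha-1)L^\beta}$ because $(\gamma-1)\tzeta<\gamma\beta$; nothing in your event $\cY$ or its probability bound plays this role. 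Second, you work with a single connected buffered subset $\Ups$ enclosing the whole bad region. The bad boxes can form up to $N_\ell$ clusters spread throughout $\La_L$, and a connected set of the form \eq{defUpsinitial} containing all of them can have diameter of order $L$; since the decay you accumulate in \eq{repeateddecay} is lost along the full diameter of each buffer crossed, a single buffer ruins the bound \eq{Minduc}. The paper instead buffers each $\G_2$-connected cluster separately, so that $\sum_r \diam \Ups_r \le 5\ell N_\ell \ll L^\tau$ as in \eq{sumdiam}; the inequality $\gamma\tau>(\gamma-1)\tzeta+1$ you invoke only helps once the total buffered diameter is of this size.

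Third, your treatment of eigenvalues of $H_{\La_L}$ lying outside $\sigma_\cG(H_{\La_L})\cup\sigma_\cB(H_{\La_L})$ is incomplete. You note that Lemma~\ref{lembad}(i) gives their eigenfunctions decay away from $\Ups$, but you must go further and show that no such eigenvalues exist (the paper's claim \eq{claimsp}, proved by combining \eq{psidecgood63} and \eq{gggsum} with the cover identity \eq{LadecompU} to contradict $\norm{\psi_\lambda}=1$). Without this, Hall's condition fails for $U=\La_L$ (a leftover eigenvalue is an isolated vertex on the spectral side of your bipartite graph), or, if you enlarge $\cN(x)$ to absorb such eigenvalues, you lose the smallness estimate \eq{decoutx6} that both verifies Hall's condition via the projection Lemma~\ref{lemPQ} and underlies the final $M$-localization of the matched eigenfunctions. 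Relatedly, the counting estimate \eq{cGsigma} by itself does not give Hall's condition for arbitrary $\Th\subset\La_L$; the verification has to go through the projection argument applied to each $\Th$.
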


\begin{proof} We fix $0<\eps\le \eps_0$, and $m\ge m_-$  and assume \eq{hypMSAlem} for some scale $\ell$.  We take
$\La_L=\La(x_0)$, where $x_0\in \R^d$,  and let
${\mathcal C}_{L,\ell}={\mathcal C}_{L,\ell} \left(x_0 \right)$ be the suitable $\ell$-cover of $\La_L$.  Given $N \in \N$,
let  $\cB_N$ denote the event that there exist at most $N$ disjoint boxes in ${\mathcal C}_{L,\ell}$ that are not $m$-localizing for  $ H_{\eps,\bom}$. We have, using \eq{number},  \eq{hypMSAlem}, and the fact that events on disjoint boxes are independent, that
\begin{align}\label{probBN}
\P\set{\cB_N^c}\le \pa{\tfrac{2L} {\ell}}^{(N+1)d} \e^{-(N+1)\ell^\zeta}= 2^{(N+1)d} \ell^{(\gamma-1)(N+1)d}\e^{-(N+1)\ell^\zeta} < \tfrac 12 \e^{-L^\zeta},
\end{align}
 if $N+1 > \ell^{(\gamma-1)\zeta}$ and $\ell$ is sufficiently large.
  For this reason we take (recall \eq{ttauzeta2})
\beq\label{setN}
N= N_\ell= \fl{\ell^{(\gamma-1)\tzeta}} \quad \Longrightarrow  \quad \P\set{\cB_{N_\ell}^c}\le  \tfrac 12  \e^{-L^\zeta} \quad\text {for all} \; \ell \; \text{sufficiently large}.
\eeq

We  now fix   $\bom \in\cB_N$.  There exist $\cA_N=\cA_N(\bom)\subset  \Xi_{L,\ell}=\Xi_{L,\ell} \left(x_0 \right)$, with $\abs{\cA_N}\le N$ and  $\norm{a -b}\ge 2{\rho} \ell$  (i.e., $\La_\ell(a)\cap \La_\ell(b)=\emptyset$) if $a,b \in \cA_N$ and $a\ne b$, such that for all $a\in \Xi_{L,\ell} $ with
$\dist (a,\cA_N)\ge  2{\rho} \ell$ (i.e.,  $\La_\ell(a)\cap \La_\ell(b)=\emptyset$ for all $b\in \cA_N$) the box $\La_\ell(a)$ is $m$-localizing for  $ H_{\eps,\bom}$.
 In other words,
\beq\label{implyloc}
a \in  \Xi_{L,\ell} \setminus \bigcup_{b\in \cA_N}\La_{(2{\rho} +1)\ell}(b) \quad \Longrightarrow  \quad \La_\ell(a) \qtx{is} m\text{-localizing for} \quad  H_{\eps,\bom}.
\eeq

We  want to  embed the   boxes $\set{\La_\ell(b) }_{b \in \cA_{N}}$  into  buffered subsets of $\La_L$. To do so, we consider  graphs  $\G_i=\pa{ \Xi_{L,\ell}, \E_i}$, $i=1,2$,   
both having   $ \Xi_{L,\ell}$ as the set of vertices, with sets of edges given by
\begin{align}
 \E_1&=\set{\set{a,b}\in  \Xi_{L,\ell}^2;\;  \norm{a-b}={\rho}\ell }\\  \notag
 & =\set{\set{a,b}\in  \Xi_{L,\ell}^2;\;  a\ne b \sqtx{and} \La_\ell(a)\cap \La_\ell(b)\ne \emptyset },\\
 \notag
 \E_2&=\set{\set{a,b}\in  \Xi_{L,\ell}^2;\mqtx{either} \norm{a-b}=2{\rho}\ell \mqtx{or} \norm{a-b}=3{\rho} \ell }\\
 \notag & =\set{\set{a,b}\in  \Xi_{L,\ell}^2;\; \La_\ell(a)\cap \La_\ell(b)=\emptyset\sqtx{and} \La_{(2{\rho} +1)\ell}(a)\cap \La_{(2{\rho}  +1)\ell}(b)\ne \emptyset}.
 \end{align}

 Given $\Psi \subset \Xi_{L,\ell} $, we define the exterior boundary of $\Psi$ in the graph $\G_1 $ by
\beq
 \partial_{\mathrm{ex}}^{ \G_1} \Psi= \set{a\in \Xi_{L,\ell}; \ \dist (a,\Psi)={\rho} \ell}.
 \eeq
(This is similar, but not the same as the definition in \eq{defbdry}.) We  let $\overline{\Psi}= \Psi \cup \partial_{\mathrm{ex}}^{ \G_1} \Psi$. 

 Let  $\Phi\subset  \Xi_{L,\ell}$ be $\G_2$-connected, so $\diam \Phi \le 3\rho \ell \pa{\abs{\Phi}-1}$. We set
 \beq
\wtilde{\Phi}= \Xi_{L,\ell} \cap  \bigcup_{a\in \Phi} \La^\R_{(2{\rho} +1)\ell}(a)= \set{a\in \Xi_{L,\ell}; \; \dist (a,\Phi)\le {\rho} \ell  },
\eeq 
 note that  $\wtilde{\Phi}$ is a  $\G_1$-connected subset of $ \Xi_{L,\ell}$ such that
 \beq\label{diamwphi}
 \diam \wtilde{\Phi} \le  \diam {\Phi} +2\rho \ell  \le  \rho \ell \pa{3\abs{\Phi}-1}, 
 \eeq
and let
\begin{align}\label{constUps}
\Ups_\Phi\up{0}  = \bigcup_{a \in \wtilde{\Phi}}  \La_\ell (a) \qtx{and}
\Ups_\Phi  = \Ups_\Phi \up{0} \cup 
\bigcup_{a \in \partial_{\mathrm{ex}}^{ \G_1}\wtilde{\Phi}} \La_\ell(a)       =     \bigcup_{a \in \overline{\wtilde{\Phi}}}\La_\ell(a).
\end{align}

Now  let $\set{\Phi_r}_{r=1}^R=\set{\Phi_r(\bom)}_{r=1}^R$ denote the $\G_2$-connected components of  $\cA_{N}$ (i.e., connected in the graph $\G_2$). Note  that
\beq
R \in \set{1,2,\ldots,N}\qtx{and}  \quad \sum_{r=1}^R\abs{\Phi_r}=\abs{\cA_N}\le N.
\eeq 
Note also that  $\set{\wtilde{\Phi}_r}_{r=1}^R$ is a collection of disjoint, $\G_1$-connected subsets of $ \Xi_{L,\ell}$, such that
 \beq
 \ \dist (\wtilde{\Phi}_r,\wtilde{\Phi}_s)\ge  2{\rho} \ell \qtx{for}    r\ne s.
 \eeq  
 
It follows from \eq{implyloc} that
\beq
\label{implyloc2}
a \in \cG=\cG(\bom)=  \Xi_{L,\ell} \setminus \bigcup_{r=1}^R\wtilde{\Phi}_r\quad \Longrightarrow  \quad \La_\ell(a) \qtx{is} m\text{-localizing for} \quad  H_{\eps,\bom}.
\eeq
In particular, we conclude that $\La_\ell(a)$ is $m$-localizing for $H_{\eps,\bom}$ for all $a \in \partial_{\mathrm{ex}}^{ \G_1}\wtilde{\Phi}_r$, $r=1,2,\ldots, R$. 
 
Each $\Ups_r=\Ups_{\Phi_r}$, $r=1,2,\dots,R$,  clearly satisfies all the requirements to be a buffered subset of $\La_L$ with $\cG_{\Ups_r}=\partial_{\mathrm{ex}}^{ \G_1}\wtilde{\Phi}_r$  (see Definition~\ref{defbuff}), except that we do not know if $\Ups_r$ is   $L$-level spacing for $H_{\eps,\bom}$.
(Note that the sets $\{\Ups_r\up{0}\}_{r=1}^R$ are disjoint, but the sets $\set{\Ups_r}_{r=1}^R$ are not necessarily disjoint.)  Note also that it follows from \eq{diamwphi} that
\beq\label{diamUps}
\diam \Ups_r \le \diam  \overline{\wtilde{\Phi}}_r  + \ell \le  \rho \ell \pa{3\abs{\Phi_r}+1}+\ell\le 5\ell \abs{\Phi_r},
\eeq
so
\beq\label{sumdiam}
\sum_{r=1}^R \diam \Ups_r\le 5\ell N_\ell  \le 5 \ell^{(\gamma-1)\tzeta  +1} \ll \ell^{\gamma \tau}=L^\tau,
\eeq
since $(\gamma-1)\tzeta  +1 <(\gamma-1)\beta  +1< \gamma \tau$ (see \eq{ttauzeta}).

We can  arrange for $\set{\Ups_r}_{r=1}^R$ to be a collection of buffered subsets of $\La_L$  as follows.  It follows from Lemma~\ref{lemSep} that 
for any $\Th\subset \La_L$ we have
\begin{align}\label{probspaced}
\P\set{\Th \sqtx{is} L\text{-level spacing for}\; H_{\eps,\bom} }\ge 1 -Y_{\eps_0} \e^{-(2\alpha-1)L^\beta}\pa{L+1}^{2d}.
\end{align}
 Let
\begin{align}
\cF_N=\bigcup_{r=1}^N  \cF(r), \sqtx{where} \cF(r)=\set{\Phi \subset \Xi_{L,\ell} ; \;  \Phi \sqtx{is}  \G_2\text{-connected}\sqtx{and} \abs{\Phi}=r}.
\end{align}
Setting $ \cF(r,a)=\set{\Phi \in \cF(r); \, a\in \Phi}$ for $a\in \Xi_{L,\ell}$, and noting that
 each vertex in the graph $\G_2$ has less than $d\pa{3^{d-1}+4^{d-1}}\le d4^d$ nearest neighbors , we get
\begin{align}\label{cFN}
\abs{\cF(r,a)}\le  (r-1)! \pa{d4^d}^{r-1}\quad & \Longrightarrow \quad  \abs{\cF(r)}\le (L+1)^d (r-1)! \pa{d4^d}^{r-1}\\ \notag
& \Longrightarrow \quad \abs{\cF_N}\le (L+1)^d N! \pa{d4^d}^{N-1} .
\end{align}
Letting  $\cS_N$ denote that the event that the box $\La_L$ and the subsets  $\set{\Ups_\Phi}_{\Phi \in  \cF_N}$ are all  $L$-level spacing for  $ H_{\eps,\bom}$, and recalling the choice of $N=N_\ell$ in \eq{setN},   we get from \eq{probspaced} and \eq{cFN} that
\beq\label{probSN}
\P\set{\cS_N^c} \le Y_{\eps_0}\pa{1 +  (L+1)^d N_\ell! \pa{d4^d}^{N_\ell-1} }(L+1)^{2d} \e^{-(2\alpha-1)L^\beta} < \tfrac 12 \e^{-L^\zeta} 
\eeq
for sufficiently large $L$, since $(\gamma-1)\tzeta < (\gamma-1)\beta<\gamma \beta$ and $\zeta < \beta$.

We now define the event  $\cE_N= \cB_N \cap \cS_N$.  
 It follows from \eq{probBN} and \eq{probSN} that
\beq
\P\set{\cE_N}> 1-  \e^{-L^\zeta}.
\eeq
Note that for $\bom \in \cE_N$ the subsets $\set{\Ups_r}_{r=1}^R$ constructed above are buffered subsets. To finish the proof we need to show that for all  $\bom \in \cE_N$ the box $\La_L$ is 
$ M$-localizing for $ H_{\eps,\bom}$, where $M$ is given  in \eq{Minduc}.

Let us fix $\bom \in \cE_N$.   Then we have \eq{implyloc2}, $\La_L$ is level spacing for $ H_{\eps,\bom}$, and  the subsets $\set{\Ups_r}_{r=1}^R$ constructed in \eq{constUps} are buffered subsets of $\La_L$ for $ H_{\eps,\bom}$. It follows from \eq{covproperty} and Definition~\ref{defbuff}(iii) that
\beq\label{LadecompU}
\La_L=\set{ \bigcup_{a \in  \cG} {\Lambda}_{\ell}^{\La_L, \frac \ell {10}}(a)}\cup \set{\bigcup_{r=1}^R\Ups_r^{{\La_L},\frac \ell {10}}}.
\eeq

Since $\eps$ and $\bom$ are now fixed, we omit them from the notation.
 Let $\set{(\psi_\lambda,\lambda)}_{\lambda \in \sigma(H_{\La_L})}$ be an eigensystem for $H_{\La_L}$.  Given $a\in \cG$,
let
 $\set{(\vphi\up{a}_x, \lambda\up{a}_x)}_{x\in \La_\ell(a)}$ be an  $m$-localized eigensystem for $\La_\ell(a)
 $.  For $r=1,2,\ldots,R$,  let  $\set{(\phi_{\nu\up{r}},\nu\up{r})}_{\nu\up{r} \in \sigma(H_{\Ups_r})}$ be an eigensystem for $H_{\Ups_r}$, and set
 \beq\label{sigmanotsigma}
\sigma_{\Ups_r} (H_{\La_L})= \set{\wtilde{\nu}\up{r};  \nu\up{r}\in\sigma_\cB(H_{{\Upsilon_r}})}\subset  \sigma(H_{\La_L})\setminus\sigma_{\cG} (H_{\La_L}),
\eeq
where $\wtilde{\nu}\up{r}$ is given in \eq{injbad}, which also gives $\sigma_{\Ups_r} (H_{\La_L})\subset  \sigma(H_{\La_L})\setminus\sigma_{\cG_{\Ups_r}} (H_{\La_L})$, but the argument actually shows 
$\sigma_{\Ups_r} (H_{\La_L})\subset \sigma(H_{\La_L})\setminus\sigma_{\cG} (H_{\La_L})$.
We also set
\beq
\sigma_\cB(H_{\La_L}) =\bigcup_{r=1}^R \sigma_{\Ups_r} (H_{\La_L})\subset  \sigma(H_{\La_L})\setminus\sigma_{\cG} (H_{\La_L}).
\eeq

We claim
\beq\label{claimsp}
\sigma(H_{\La_L})= \sigma_{\cG} (H_{\La_L}) \cup \sigma_\cB (H_{\La_L}).
\eeq
To see this, suppose we have  $\lambda \in\sigma(H_{\La_L})\setminus \pa{ \sigma_{\cG} (H_{\La_L}) \cup \sigma_\cB (H_{\La_L})}$.  Since $\La_L$ is level spacing for $H$, it follows from Lemma~\ref{lem:ident_eigensyst}(ii)(c) that 
\beq\label{psidecgood6344}
\abs{\psi_\lambda(y)}\le\e^{-m_2{{\ell_\tau}}}   \qtx{for all} y\in \bigcup_{a\in \cG}\La_\ell^{{\La_L},2\ell_\tau}(a),
\eeq
and it follows from Lemma~\ref{lembad}(ii) that
\beq
\abs{\psi_\lambda (y)} \le  \e^{-m_5{\ell_{\tau}}} \qtx{for all}y\in \bigcup_{r=1}^R\Ups_r^{{\La_L},2\ell_\tau }.
\eeq
Using \eq{LadecompU}, we  conclude that (note $m_5\le m_2$)
\beq
1= \norm{\psi_\lambda} \le \e^{-m_5{\ell_{\tau}}} \pa{L+1}^{\frac d 2} <1,
\eeq 
a contradiction.  This establishes the claim.

We will now index  the eigenvalues and eigenvectors of $H_{\La_L}$ by sites in $\La_L$ using
Hall's Marriage Theorem, which states a necessary and sufficient condition for the existence of a perfect matching in a bipartite graph.  (See Appendix~\ref{appHall}, based on  \cite[Chapter~2]{BDM}.) We consider the  bipartite graph $\G = (\La_L,\sigma(H_{\La_L});\E)$, where
  the edge set $\E \subset \La_L\times \sigma(H_{\La_L})$ is defined as follows.
For each $\lambda \in \sigma_{\cG} (H_{\La_L})$ we fix $\lambda\up{a_\lambda}_{x_\lambda}\in  \cE_{\cG}^{\La_L} (\lambda)$, and set (recall \eq{defUpscheck};  we write
$\widehat{\Upsilon}_r=\widehat{\Upsilon_r}$ and $\widehat{\Upsilon}_{r,\tau}=(\widehat{\Upsilon}_{r})_{\tau}$)
\beq
\cN_0(x)=\begin{cases}
 \set{\lambda \in \sigma_{\cG} (H_{\La_L}); \, \norm{x_\lambda -x}<  {\ell_\tau}}&\qtx{for} x \in \La_L \setminus \bigcup_{r=1}^R\widehat{\Upsilon}_r\\
 \emptyset &\qtx{for} x \in  \bigcup_{r=1}^R\widehat{\Upsilon}_r
 \end{cases}.
 \eeq
We define
\beq
\cN(x) =
\begin{cases}
\cN_0(x) & \qtx{for}   x
\in \La_L \setminus \bigcup_{r=1}^R\widehat{\Upsilon}_{r,\tau}\\
\sigma_{\Ups_r} (H_{\La_L}) & \qtx{for}   x \in \widehat{\Upsilon}_r, \ r=1,2,\ldots,R\\
\cN_0(x) \cup \sigma_{\Ups_r} (H_{\La_L})  & \qtx{for}   x \in {\widehat{\Upsilon}_{r,\tau}\setminus \widehat{\Upsilon}_r}, \ r=1,2,\ldots,R
\end{cases},
\eeq
and set
$\E=\set{(x,\lambda)\in\La_L\times \sigma(H_{\La_L}); \; \lambda \in \cN(x)}$.

$\cN(x) $  was defined to ensure $\abs{\psi_\lambda(x)}\ll 1 $ for $\lambda \notin  \cN(x) $. This can be seen as follows:
\begin{itemize}
\item 
 If $ x \in \La_L$ and 
$\lambda \in \sigma_{\cG} (H_{\La_L}) \setminus \cN_0(x)$, we have $\lambda=\widetilde{\lambda}\up{a_\lambda}_{x_\lambda}$ with $ \norm{x_\lambda -x}\ge {\ell_\tau}$, so
\beq
\abs{\psi_\lambda(x)} \le  \abs{\vphi\up{a_\lambda}_{x_\lambda}(x)}+ \norm{\vphi\up{a_\lambda}_{x_\lambda}-\psi_\lambda}\le \e^{-m\ell_\tau} + {2}\e^{-m_1{\ell^\tau}}\e^{L^\beta}\le 3\e^{-m_1{\ell^\tau}}\e^{L^\beta},
\eeq
using  \eq{hypdec} and  \eq{difeq86}. 

\item  If $x \in \La_L \setminus \widehat{\Upsilon}_{r,\tau}$ and $\lambda \in \sigma_{\Ups_r} (H_{\La_L})$, then $\lambda=\wtilde{\nu}\up{r}$ for some $\nu\up{r}\in \sigma_{\cB} (H_{\Ups_r})$, and 
\beq
\abs{\psi_\lambda(x)} \le  \abs{\phi_{\nu\up{r}}(x)}+ \norm{\phi_{\wtilde{\nu}\up{r}}-\psi_\lambda}\le  e^{-m_2{\ell_\tau}} + 2 \e^{-m_4 {\ell_\tau}}\e^{L^\beta}\le  3 \e^{-m_4{ \ell_\tau}}\e^{L^\beta},
\eeq
using  \eq{vthetasmall} and \eq{difeqr}. (Note $\phi_{{\nu}\up{r}}(x)=0$ if $x \notin \Ups_r$.)

\end{itemize}
It follows that for all $x \in \La_L$ and $\lambda \in \sigma (H_{\La_L})\setminus \cN(x)$ we have 
\beq \label{decoutx6}
\abs{\psi_\lambda(x)} \le   3 \e^{-m_4 {\ell_\tau}}\e^{L^\beta}\le \e^{-\frac 12 m_4 {\ell_\tau}}.
\eeq

Since $\abs{\La_L}= \abs{\sigma (H_{\La_L})}$, to apply Hall's Marriage Theorem we only need to verify  Hall's condition \eq{eq:Hallcond1}.
 Let $\cN(\Th)=\bigcup_{x\in \Th} \cN(x)$ for $\Theta \subset \La_L$.   We fix   $\Theta \subset \La_L$, and  let $Q_\Th$ be the orthogonal projection onto the span of $\set{\psi_\lambda; \; \lambda \in \cN(\Th)}$.
For every $\lambda \notin \cN(\Th)$ we have \eq{decoutx6} for all $x \in\Th$, so
\beq
\norm{(1-Q_\Th)\Chi_\Th}\le \abs{\La_L}^{\frac 12 } \abs{\Th}^{\frac 12 }\e^{-\frac 12 m_4{\ell_\tau}}\le (L+1)^d \e^{-\frac 12 m_4{\ell_\tau}}<1,
\eeq
so it follows from Lemma~\ref{lemPQ} that
\beq\label{Hallc0}
\abs{\Th}= \tr \Chi_\Th \le  \tr Q_\Th= \abs{\cN(\Th)},
\eeq
which is Hall's condition \eq{eq:Hallcond1}.

Thus we can apply Hall's Marriage Theorem, concluding that there exists a bijection
\beq
x\in \La_L \mapsto \lambda_x \in \sigma(H_{\La_L}), \qtx{where}  \lambda_x \in \cN(x).
\eeq 
We set $\psi_x=\psi_{\lambda_x}$ for all $x\in \La_L $.

To finish the proof we need to show  that  $\set{(\psi_x,\lambda_x)}_{x\in \La_L}$ is an $M$-localized eigensystem for $\La_L$, where $M$ is given in \eq{Minduc}.
We  fix $x \in \La_L$,
take $y\in  \La_L $, 
and   consider several cases:

\begin{enumerate}
\item Suppose $\lambda_x \in \sigma_{\cG} (\La_L)$. In this case  $x \in \La_\ell (a_{\lambda_x})$ with  $a_{\lambda_x}\in \cG$, and $\lambda_x \in\sigma_{\set{a_{\lambda_x}}}(H_{\La_L}) $. In view of \eq{LadecompU} we consider two cases:

\begin{enumerate}
\item  
If  $y \in \La_{\ell}^{{\La_L}, \frac \ell{10}}(a)$ 
for some $a\in \cG$ and $\norm{y-x} \ge 2\ell$, we must have  $\La_\ell (a_{\lambda_x})\cap  \La_\ell(a)=\emptyset$, so it follows from \eq{sigmaab} that
$\lambda_x \notin  \sigma_{\set{a}}(H_{\La_L})$, and  \eq{psidecgoodpr} yields
\beq\label{psidecgoodpr2} 
\abs{\psi_x(y)}\le \e^{-m_3{\norm{y_1-y}}}  \abs{\psi_x(y_1)} \sqtx{for some} y_1 \in{\partial}^{\La_L,\ell_{\ttau }}  \Lambda_{\ell}(a).
\eeq

\item If    $y\in \Ups_r^{{\La_L},\frac \ell {10}}$ for some $r \in \set{1,2,\ldots,R}$, and $\norm{y-x} \ge \ell+\diam \Ups_r$,  we must have $\La_\ell (a_{\lambda_x})\cap  \Ups_r=\emptyset$.   It follows from   \eq{sigmaab} that $\lambda_x \notin   \sigma_{\cG_{\Ups_r}} (H_{\La_L})$, and clearly $\lambda_x \notin    \sigma_{\Ups_r} (H_{\La_L})$ in view of \eq{sigmanotsigma}.   Thus Lemma~\ref{lembad}(ii)   gives \beq\label{gggsum356}
\abs{\psi_x(y)}\le    \e^{-m_5{\ell^\tau}}\abs{\psi_x(v)}\qtx{for some} v\in {\partial}^{\La_L, 2{\ell_\tau} }  {\Upsilon_r}.
\eeq

\end{enumerate}
\item   Suppose $\lambda_x \notin \sigma_{\cG} (\La_L)$. Then it follows from \eq{claimsp} that we must have $\lambda_x \in    \sigma_{\Ups_s} (H_{\La_L})$ for some $s \in \set{1,2,\ldots,R}$. In view of \eq{LadecompU} we consider two possibilities:

\begin{enumerate}
 \item If  $y \in \La_{\ell}^{{\La_L}, \frac \ell{10}}(a)$ 
for some $a\in \cG$, and $\norm{y-x} \ge \ell+\diam \Ups_s$, we must have  $\La_\ell (a)\cap  \Ups_s=\emptyset$,  and Lemma ~\ref{lem:ident_eigensyst}(i)(c) yields \eq{psidecgoodpr2}.

 \item If $y\in \Ups_r^{{\La_L},\frac \ell {10}}$ for some $r \in \set{1,2,\ldots,R}$, and $\norm{y-x} \ge \diam \Ups_s+\diam \Ups_r$,  we must have $r\ne s$.   Thus   Lemma~\ref{lembad}(ii)  yields \eq{gggsum356}.
 
\end{enumerate}
\end{enumerate}

Now let us fix $x \in \La_L$, and take $y\in  \La_L $ such that $\norm{y-x} \ge {L_\tau}$.
Suppose $\abs{\psi_x(y)} >0$, since otherwise there is nothing to prove.  We estimate $\abs{\psi_x(y)} $ using either \eq{psidecgoodpr2}  or  \eq{gggsum356} repeatedly, as appropriate, stopping when we get too close  to $x$ so we are not in one the cases described above.  (Note that this must happen since $\abs{\psi_x(y)} >0$.) We accumulate decay only when we use \eq{psidecgoodpr2}, and just use $\e^{-m_5{\ell^\tau}}< 1$ when using \eq{gggsum356}, getting
\begin{align}\label{repeateddecay}
\abs{\psi_x(y)} &\le   \e^{-m_3\pa{\norm{y-x} -\sum_{r=1}^R \diam \Ups_r -2\ell}}\le   \e^{-m_3\pa{\norm{y-x} - 5 \ell^{(\gamma-1)\tzeta  +1}   -2\ell}}\\ \notag  &
\le  \e^{-m_3\norm{y-x}\pa{1 - 7 \ell^{(\gamma-1)\tzeta  +1-\gamma \tau} }}\le    \e^{-M\norm{y-x}},
\end{align}
where we used \eq{sumdiam} and took
\begin{align}
M&=m_3\pa{1 - 7 \ell^{(\gamma-1)\tzeta  +1-\gamma \tau} }\ge 
m\pa{1-   C_{d,m_-,\eps_0} \ell^{\frac {\tau -1}2}}\pa{1 - 7 \ell^{(\gamma-1)\tzeta  +1-\gamma \tau} }\\ \notag & \ge m\pa{1-   C_{d,m_-,\eps_0} \ell^{-\min\set{\frac {1- \tau }2, \gamma \tau- (\gamma-1)\tzeta  -1}}}
\end{align}
where we used \eq{m4} .

We conclude that $\set{(\psi_x,\lambda_x)}_{x\in \La_L}$ is an $M$-localized eigensystem for $\La_L$, where $M$ is given in \eq{Minduc}, so the box is  $\La_L$ is 
$ M$-localizing for $ H_{\eps,\bom}$.
\end{proof}

\begin{proof}[Proof of Proposition~\ref{propMSA}]
We assume \eq{initialconinduc} and set  $L_{k+1}=L_k^\gamma$  for $k=0,1,\ldots$. If $L_0$ is sufficiently large it follows from Lemma~\ref{lemInduction} by an induction argument that
\begin{align}
\inf_{x\in \R^d} \P\set{\La_{L_k} (x) \sqtx{is}  m_k   \text{-localizing for} \; H_{\eps,\bom}} \ge 1 -  \e^{-L_k^\zeta} \mqtx{for} k=0,1,\ldots,
\end{align}
where   for $k=1,2,\dots$ we have 
\beq
m_{k} \ge m_{k-1} \pa{1-   C_{d,m_-,\eps_0} \L_{k-1}^{-\vrho}}, \sqtx{with}  \vrho= \min\set{\tfrac {1- \tau }2, \gamma \tau- (\gamma-1)\tzeta  -1}.
\eeq
Thus for all $k=1,2,\dots$, taking $L_0$ sufficiently large we get 
\begin{align}
m_{k}\ge m_0 \prod_{j=0}^{k-1} \pa{1-   C_{d,m_-,\eps_0} \L_{0}^{-\vrho\gamma^{j}}}\ge  m_0 \prod_{j=0}^{\infty} \pa{1-   C_{d,m_-,\eps_0} \L_{0}^{-\vrho\gamma^{j}}}\ge \frac {m_0}2,
\end{align}
finishing the proof of Proposition~\ref{propMSA}.
\end{proof}

\subsection{Removing the restriction on scales}

We will now show how Theorem~\ref{thmMSA} follows from Proposition~\ref{thmfullMSA}.
\begin{proof}[Proof of Theorem~\ref{thmMSA}]
Assume the conclusions of Proposition~\ref{thmfullMSA}, that is, for $L_0\ge \cL $ and  $ \eps\le  \eps_0= \frac 1 {4d} \e^{-L_0^\beta}$, setting $L_{k+1}=L_k^\gamma$ for $k=0,1,\ldots$, we have \eq{MSAlk}.

Given a scale   $L\ge L_1$, let $k=k(L)\in \set{1,2,\ldots}$ be defined by
$L_k \le L <L_{k+1}$, and set  $\wtilde{\ell}= L_{k-1}$.  We have
 $L_k= \wtilde{\ell}^\gamma\le L< L_{k+1}= \wtilde{\ell}^{\gamma^2}$,  so $L=  \wtilde{\ell}^{\gamma^\prime}$ with $\gamma \le \gamma^\prime<\gamma^2$. We proceed as in Lemma~\ref{lemInduction}. We take
$\La_L=\La_L(x_0)$, where $x_0\in \R^d$,  and let
${\mathcal C}_{L,\wtilde{\ell}}={\mathcal C}_{L,\wtilde{\ell}} \left(x_0 \right)$ be the suitable $\wtilde{\ell}$-cover of $\La_L$. We let  $\cB_0$ denote the event that all  boxes in ${\mathcal C}_{L,\wtilde{\ell}}$  are  $\tfrac {m_{\eps,L_0} }2$-localizing for  $ H_{\eps,\bom}$.  It follows from \eq{MSAlk} that
\begin{align}\label{probB0}
\P\set{\cB_0^c}\le \pa{\tfrac{2L} {\wtilde{\ell}}}^{d} \e^{-\wtilde{\ell}^\zeta}= 2^{d} \wtilde{\ell}^{(\gamma^\pr-1)d}\e^{-\wtilde{\ell}^\zeta}\le     2^{d} L^{(1-\frac 1{ \gamma^\pr})d}\e^{-L^{\frac \zeta { \gamma^\pr}}}  < \tfrac 12 \e^{-L^\xi} ,
\end{align}
if $L_0$ is sufficiently large, since $\xi \gamma^2 < \zeta$.  Moreover, letting $\cS_0$ denote the event that  the box $\La_L$ is level spacing for  $H_{\eps,\bom}$, it follows from Lemma~\ref{lemSep} that 
\begin{align}\label{probspacedLL}
\P\set{\cS_0^c }\le Y_{\eps_0} \e^{-(2\alpha-1)L^\beta}\pa{L+1}^{2d}\le  \tfrac 12 \e^{-L^\xi},
\end{align}
if $L_0$ is sufficiently large, since $\xi < \beta$.  Thus, letting $\cE_0= \cB_0 \cap \cS_0$, we have 
\beq
\P\set{\cE_0 }\ge 1-  \e^{-L^\xi}.
\eeq

It only remains to prove that  $\La_L$ is  $\tfrac {m_{\eps,L_0} }4$-localizing for  $ H_{\eps,\bom}$ for all $\bom \in \cE_0$.  To do so,  we fix $\bom \in \cE_0$ and  proceed as in the proof of Lemma~\ref{lemInduction}.   Since $\bom \in \cB_0$, we have $\cG=\cG(\bom)=\Xi_{L,\wtilde{\ell}}$. Since $\eps$ and $\bom$ are now fixed, we omit them from the notation. The proof of Lemma~\ref{lemInduction} applies, we get $\sigma(H_{\La_L})= \sigma_{\cG} (H_{\La_L})$ as in \eq{claimsp}, and obtain an eigensystem  $\set{(\psi_x,\lambda_x)}_{x\in \La_L}$ for $H_{\La_L}$ using Hall's Marriage Theorem. To finish the proof we need to show  that  $\set{(\psi_x,\lambda_x)}_{x\in \La_L}$ is an $\tfrac {m_{\eps,L_0} }4$-localized eigensystem for $\La_L$.  Given $x,y\in \La_L$ with $\norm{y-x} \ge 2 \wtilde{\ell}$, we have $y \in \La_{\ell}^{{\La_L}, \frac {\wtilde{\ell}}{10}}(a)$ 
for some $a\in \Xi_{L,\wtilde{\ell}}$, and \eq{psidecgoodpr2}  holds with  $y_1 \in{\partial}^{\La_L,{\wtilde{\ell}}_{\ttau }}  \Lambda_{\wtilde{\ell}}(a)$.  If $\norm{y-x} \ge L_\tau$, we proceed as in  \eq{repeateddecay}, stopping when we get within $2 \wtilde{\ell}$ of $x$, obtaining
\begin{align}\label{repeateddecay2}
\abs{\psi_x(y)} &\le   \e^{-\widetilde{m}_3\pa{\norm{y-x} -2\wtilde{\ell}}}\le 
  \e^{-\widetilde{m}_3\norm{y-x}\pa{1 - 3 \wtilde{\ell}^{1-\gamma^\pr \tau} }}\le    \e^{-\wtilde{M}\norm{y-x}},
\end{align}
where (recall  \eq{m4}) $\widetilde{m}_3\ge \tfrac {m_{\eps,L_0} }2\pa{1-   C_{d,m_-,\eps_0} \wtilde{\ell}^{\frac {\tau -1}2}}$, and
\begin{align}
\wtilde{M}&= \widetilde{m}_3\pa{1 - 3 \wtilde{\ell}^{1-\gamma^\pr \tau} }\ge  \tfrac {m_{\eps,L_0} }2\pa{1-   C_{d,m_-,\eps_0} \wtilde{\ell}^{-\min\set{\frac {1- \tau }2, \gamma^\pr \tau- 1}}}\\ \notag   &    \ge  \tfrac {m_{\eps,L_0} }2
\pa{1-   C_{d,m_-,\eps_0} L_0^{-\min\set{\frac {1- \tau }2, \gamma^\pr \tau- 1}}}\ge \tfrac {m_{\eps,L_0} }4
\end{align}
for $L_0$ large.
\end{proof}

\section{Deriving localization}\label{seclocproof}
 In this section we consider an   Anderson model $H_{\eps,\bom}$ and derive localization results from  Theorem~\ref{thmMSA}.
We start by proving Theorem~\ref{thmloc},  using the following lemma.

 \begin{lemma}\label{lemWimp}  Fix $\eps_0>0$  and $m_->0$. There exists a finite scale  $\cL_{ \eps_0,m_-}$ such that  
 for all $\ell\ge \cL_{ \eps_0,m_-}$, $a\in \Z^d$,   $\lambda \in \R$,  $\eps \le\eps_0$, and $m\ge m_-$,  given    an $m$-localizing box $\La_\ell(a)$  for the discrete Schr\"odinger operator  $H_\eps$  with   an $m$-localized eigensystem $\set{\vphi_x,\lambda_x}_{x \in  {\Lambda}_{\ell}(a)}$,    we have
 \beq
\max_{b\in \La_{\frac \ell 3}(a)} W\up{a}_{\eps,\lambda}(b)> \e^{-\frac 1 4 m \ell}\quad \Longrightarrow \quad  \min_{x\in \La_\ell^{\ell_\tau}(a)} \abs{\lambda -\lambda_x}  < 
\tfrac 1 2 \e^{-L^\beta}.
\eeq
 \end{lemma}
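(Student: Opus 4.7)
The natural route is to prove the contrapositive: assuming the eigenvalue separation
\[
|\lambda-\lambda_x| \geq \tfrac12 e^{-L^\beta} \quad\text{for every } x \in \La_\ell^{\ell_\tau}(a),
\]
I will show that $W^{(a)}_{\eps,\lambda}(b) \leq e^{-m\ell/4}$ for every $b \in \La_{\ell/3}(a)$. If $\cV_\eps(\lambda)$ is empty then $W^{(a)}_{\eps,\lambda}\equiv 0$ and the conclusion is trivial, so I fix an arbitrary $\psi \in \cV_\eps(\lambda)$ and treat it as a generalized eigenfunction of $H_\eps$ on $\Z^d$ with generalized eigenvalue $\lambda$.

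The separation hypothesis is precisely condition \eq{distpointeig} of Lemma~\ref{lemdecay2}, applied with $\Theta=\Z^d$, $\La_\ell = \La_\ell(a)$, and $\nu_u = \lambda_u$. Since $b \in \La_{\ell/3}(a)$ satisfies $\dist(b,\Z^d\setminus\La_\ell(a)) \geq \tfrac{\ell}{3}-1$, it lies in $\La_\ell^{\Z^d,2\ell_{\ttau}}(a)$ as soon as $\ell$ is large enough (because $\ttau<1$). Invoking part~(ii) of Lemma~\ref{lemdecay2} therefore gives
\[
|\psi(b)| \leq e^{-m_3\|y_2-b\|}|\psi(y_2)|
\]
for some $y_2 \in \partial^{\Z^d,\ell_{\ttau}}\La_\ell(a)$, where the rate satisfies $m_3 \geq m\bigl(1-C_{d,m_-,\eps_0}\ell^{(\tau-1)/2}\bigr)$.

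Any such $y_2$ lies within $\ell_{\ttau}$ of the boundary of $\La_\ell(a)$, so $\|y_2-a\|_\infty \geq \tfrac{\ell}{2}-\ell_{\ttau}$; combined with $\|b-a\|_\infty \leq \tfrac{\ell}{6}$, this yields $\|y_2-b\|_\infty \geq \tfrac{\ell}{3}-\ell_{\ttau}$. Using the defining inequality $|\psi(y_2)| \leq W^{(a)}_{\eps,\lambda}(y_2)\|T_a^{-1}\psi\|$ together with $W^{(a)}_{\eps,\lambda}(y_2)\leq \langle y_2-a\rangle^\nu \leq C_{d,\nu}\,\ell^\nu$ from \eq{boundGW}, I obtain
\[
\frac{|\psi(b)|}{\|T_a^{-1}\psi\|} \leq C_{d,\nu}\,\ell^\nu\, e^{-m_3(\ell/3-\ell_{\ttau})},
\]
and taking the supremum over $\psi \in \cV_\eps(\lambda)$ transfers this estimate to $W^{(a)}_{\eps,\lambda}(b)$.

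Finally, since $m_3/m \to 1$ and $\ell_{\ttau}=o(\ell)$, the comfortable gap $m\ell/12 = m\ell/3 - m\ell/4$ between the effective decay distance and the target rate absorbs both the polynomial factor $\ell^\nu$ and the $(1-o(1))$ correction in $m_3$; choosing $\cL_{\eps_0,m_-}$ large enough makes the right-hand side at most $e^{-m\ell/4}$ uniformly in $m\geq m_-$ and $\eps\leq\eps_0$. No subtle obstacle arises: the argument is essentially a direct application of the interior decay statement Lemma~\ref{lemdecay2}(ii), and the only bookkeeping is to verify that the polynomial and multiplicative losses in the exponent are dominated by $m\ell/12$, which is immediate from the scale inequalities $\ttau<1$ and $\tau<1$.
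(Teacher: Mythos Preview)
Your proof is correct and follows essentially the same route as the paper's own argument: prove the contrapositive by applying Lemma~\ref{lemdecay2}(ii) with $\Theta=\Z^d$ to the generalized eigenfunction $\psi$, bound $|\psi(y_2)|$ by $\langle y_2-a\rangle^\nu\|T_a^{-1}\psi\|$, and absorb the polynomial loss and the $(1-o(1))$ correction in $m_3$ into the gap $m\ell/12$. Your write-up is in fact slightly more careful than the paper's (you explicitly verify $b\in\La_\ell^{\Z^d,2\ell_{\ttau}}(a)$ and treat the trivial case $\cV_\eps(\lambda)=\emptyset$), but the substance is identical.
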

  
 \begin{proof}  Suppose $ \abs{\lambda -\lambda_u}  \ge\tfrac 1 2\e^{-L^\beta}$ for all $u \in \La_\ell^{\ell_\tau}(a)$.  Let $\psi \in \cV_\eps(\lambda)$.  Then it follows from Lemma~\ref{lemdecay2}(ii) that  for large $\ell$ and $b\in \La_{\frac \ell 3}(a)$  we have
 \beq
 \abs{\psi(b)}\le \e^{-m_3 \pa{\frac \ell 3 - \ell_{\ttau}}}\norm{T_a^{-1} \psi}
  \scal{\tfrac \ell 2 + 1}^{\nu} \le \e^{-\frac 1 4 m \ell} \norm{T_a^{-1} \psi}.
 \eeq
 \end{proof}

\begin{proof}[Proof of Theorem~\ref{thmloc}]  Suppose Theorem~\ref{thmMSA} holds for some $L_0$, and let $\eps_0= \frac 1 {4d} \e^{-L_0^{ \beta}}$ and $m_\eps=  \tfrac {m_{\eps,L_0} }4\ge \frac {\log 3}4$.
Consider  $L_0^\gamma \le\ell \in 2\N$  and $a\in \Z^d$.  We have  
 \beq
  \La_{5\ell}(a) =\bigcup_{b\in \set{ a+ \frac 1 2\ell  \Z^d}, \ \norm{b-a}\le  2\ell} \La_{\ell}(b).
  \eeq
  Let $\cY_{\eps,\ell,a}$ denote the event that  $ \La_{5\ell}(a)$ is level spacing for $H_{\eps,\bom}$ and the boxes  $ \La_{\ell}(b)$ are  $m_\eps$-localizing for $H_{\eps,\bom}$ for all $b\in \set{ a+ \frac 1 2\ell  \Z^d}$ with  $\norm{b-a}\le 2 \ell$.  It follows from \eq{concMSA} and Lemma~\ref{lemSep}  that
  \beq
  \P\set{\cY_{\eps,\ell,a}^c}\le 5^d \e^{-\ell^\xi}  +  Y_{\eps_0}\pa{5\ell+1}^{2d} \e^{-(2\alpha-1)(5\ell)^\beta}\le C_{\eps_0} \e^{-\ell^\xi} .
  \eeq
  
  Suppose $\bom \in \cY_{\eps,\ell,a}$, $\lambda \in \R$,  and $\max_{b\in \La_{\frac \ell 3}(a)} W\up{a}_{\eps,\bom,\lambda}(b)> \e^{-\frac 1 4 m_\eps\ell} $. It follows from Lemma~\ref{lemWimp} that $ \min_{x\in \La_\ell^{\ell_\tau}(a)} \abs{\lambda -\lambda_x\up{a}}  < \tfrac 1 2\e^{-L^\beta}$.  Since $ \La_{5\ell}(a)$ is  level spacing for $H_{\eps,\bom}$, using Lemma~\ref{lem:ident_eigensyst} we conclude that 
  \beq
   \min_{x\in \La_\ell^{\ell_\tau}(b)} \abs{\lambda -\lambda\up{b}_x}\ge \e^{-(5\ell)^\beta} - 
   2  \e^{-m_1 {\ell_\tau}} - \tfrac 1 2\e^{-L^\beta}\ge \tfrac 1 2 \e^{-L^\beta}
  \eeq
  for all $b\in \set{ a+ \frac 1 2\ell  \Z^d}$ with $\ell \le \norm{b-a}\le 2\ell$. Since
  \beq\label{Aell2}
 A_\ell(a)\subset \bigcup_{b\in \set{ a+ \frac 1 2\ell  \Z^d}, \ \ell \le \norm{b-a}\le 2\ell} \La_{\ell}^{\frac \ell 7}(b),
  \eeq
 it follows from Lemma~\ref{lemdecay2}(ii)  that for all  $y\in A_\ell(a)$  we have, given  $\psi \in \cV_{\eps,\bom}(\lambda)$,
 \begin{align}
 \abs{\psi(y)}\le  \e^{-(m_\eps)_3\pa{\frac \ell 7 - \ell_{\ttau}}} \norm{T_a^{-1} \psi}\la \tfrac 5 2 \ell  +1 \ra^\nu\le \e^{-m_\eps \frac \ell 8}\norm{T_a^{-1} \psi} \le  \e^{- \frac 7 {132} m_\eps \norm{y-a} }\norm{T_a^{-1} \psi},
 \end{align}
so we get
\beq
W\up{a}_{\eps,\bom,\lambda}(y)\le \e^{-\frac 7 {132}        m_\eps \norm{y-a}} \qtx{for all} y\in A_\ell(a).
\eeq

Since we have \eq{boundGW}, we conclude that for $\bom \in \cY_{\eps,\ell,a}$ we always have
\begin{align} 
W\up{a}_{\eps,\bom,\lambda}(a)W\up{a}_{\eps,\bom,\lambda}(y) & \le 
\max \set{\e^{- \frac 7 {66} m_\eps\norm{y-a}}\la  y-a\ra^\nu, \e^{- \frac 7 {132}m_\eps \norm{y-a}}}\\   \notag 
& \le \e^{- \frac 7 {132} m_\eps \norm{y-a}} \qtx{for all} y\in A_\ell(a).
\end{align}
 \end{proof}

We now turn to Corollary~\ref{corloc}.

\begin{proof}[Proof of Corollary~\ref{corloc}]

Parts (i) and (ii) are proven in the same way as  \cite[Theorem~7.1(i)-(ii)]{GKber}.  Note that we have
$\max_{b\in \La_{\frac \ell 3}(a)} W\up{a}_{\eps,\bom,\lambda}(b)$  in \eq{locimpl}  instead of simply  $ W\up{a}_{\eps,\bom,\lambda}(a)$ because we do not have the unique continuation principle in the lattice. If $\lambda$ is a generalized eigenvalue for  $H_{\eps,\bom}$ we could have 
$ W\up{a}_{\eps,\bom,\lambda}(a)=0$, but we will always have $\max_{b\in \La_{\frac \ell 3}(a)} W\up{a}_{\eps,\bom,\lambda}(b)>0$ for all large $\ell$.

Part (iii) is proven similarly to  \cite[Theorem~7.2(i)]{GKber}.  There are  some small differences, so we give the proof here. We use
the fact that  for any $\ell_0 \in 2\N$, setting $\ell_{k+1}=2\ell_k$ for $k=0,1,2,\ldots$, we have  (recall \eq{Aell})
   \beq
  \Z^d = \La_{3\ell_k}(a)\cup \bigcup_{j=k}^\infty  A_{\ell_j}(a) \qtx{for} k=0,1,2,\ldots.
   \eeq  
   
   We fix $\eps \le \eps_0$. 
Given  $k \in \N$, we set $L_{k}=2^{k}$,
 and consider the event  
\beq
\cY_{\eps,k}:=  \bigcap_{x \in \Z^{d}; \, \norm{x} \le  \e^{\frac 1 {2d}  L_k^\xi}}  \cY_{\eps,L_k,x} ,
\eeq
where $\cY_{\eps,L_k,x} $ is the event given in Theorem~\ref{thmloc}.  It follows from \eq{cUdesiredint} that for sufficiently large $k$ we have
\beq
\P\set{\cY_{\eps,k}}\ge 1 - C_{\eps_0}\pa{2\e^{\frac 1 {2d}  L_k^\xi}+1}^d \e^{-L_k^\xi}\ge  1 - 3^d C_{\eps_0}  \e^{-\frac 1 {2}  L_k^\xi},
\eeq
so we conclude from the  Borel-Cantelli Lemma  that 
\beq\label{cUinfty2}
\P \set{\cY_{\eps,\infty}}=1, \quad \text{where}\quad \cY_{\eps,\infty}=\liminf_{k\to \infty}\cY_{\eps,k}.
\eeq

We now fix $\bom \in  \cY_{\eps,\infty}$, so there exists $k_{\eps,\bom}\in \N$ such that
$\bom \in \cY_{\eps,L_k,x} $ for all $k_{\eps,\bom} \le k \in \N$ and $x \in \Z^d$ with  $\norm{x} \le  \e^{ \frac 1 {2d}  L_k^\xi}$.
Given $x \in \Z^d$, we define $k_x\in \N$  by
\beq\label{defkx}
 \e^{ \frac 1 {2d}  L_{k_x-1}^\xi} < \norm{x} \le  \e^{\frac 1 {2d}  L_{k_x}^\xi} \qtx{if} k_x \ge 2,
\eeq
and set $k_x=1$ otherwise.  We set
$k_{\eps,\bom,x}=\max\set{k^\pr_{\eps,\bom},k_x}$, where $k^\pr_{\eps,\bom}=\max\set{k_{\eps,\bom},2}$.

Let  $x \in \Z^d$.  If $y \in B_{\eps,\bom,x}=\bigcup_{k=k_{\eps,\bom,x}}^\infty  A_{L_k}(x)$, we have $y \in  A_{L_{k_1}}(x)$ for some $k_1 \ge  k_{\eps,\bom,x}$ and  $\bom \in \cY_{\eps,L_{k_1},x} $, so it follows from \eq{WW} that
\beq  \label{WW2}
W\up{x}_{\eps,\bom,\lambda}(x)W\up{x}_{\eps,\bom,\lambda}(y)\le 
 \e^{- \frac 7 {132} m_\eps \norm{y-x}} \qtx{for all} \lambda \in \R.
\eeq
If $y \notin B_{\eps,\bom,x}$, we must have  $\norm{y-x}< \frac 8 7 L_{k_{\eps,\bom,x}}$, so for all $ \lambda \in \R$, using \eq{boundGW} and \eq{defkx},
\begin{align} \label{preSUDEC2}
& W\up{x}_{\eps,\bom,\lambda}(x)W\up{x}_{\eps,\bom,\lambda}(y) =W\up{x}_{\eps,\bom,\lambda}(x)W\up{x}_{\eps,\bom,\lambda}(y) \e^{\frac 7 {132} m_\eps \norm{y-x}}\e^{- \frac 7 {132} m_\eps \norm{y-x}}\\ \notag 
& \qquad
\le \la y-x \ra^\nu \e^{\frac 7 {132} m_\eps \norm{y-x}} \e^{- \frac 7 {132} m_\eps \norm{y-x}} \\ \notag & \qquad  \le \scal{\tfrac 8 7 L_{k_{\eps,\bom,x}} }^\nu  \e^{\frac 2 {33} m_\eps  L_{k_{\eps,\bom,x}}} \e^{- \frac 7 {132} m_\eps \norm{y-x}} 
\\ & \qquad
\le
\begin{cases} 
\scal{ \tfrac {16} 7 \pa{\log \norm{x}^{2d}}^{\frac 1 \xi} }^\nu \e^{\frac 4 {33} m_\eps \pa{\log \norm{x}^{2d}}^{\frac 1 \xi}} 
 \e^{- \frac 7 {132} m_\eps \norm{y-x}} & \quad  \text{if} \quad k_{\eps,\bom,x}=k_x\\
\scal{\tfrac 8 7 L_{k^\pr_{\eps,\bom}} }^\nu  \e^{\frac 2 {33} m_\eps  L_{k^\pr_{\eps,\bom}}} 
 \e^{- \frac 7 {132} m_\eps \norm{y-x}} & \quad  \text{if} \quad k_{\eps,\bom,x}= k^\pr_{\eps,\bom},
\end{cases}   .\notag
\end{align}

Combining \eq{WW2} and \eq{preSUDEC2},  noting $\norm{x}^{2d} >\e$ if $k_x\ge 2$, we conclude that for for all $ \lambda \in \R$ and  $x,y\in\Z^d$ we have 
\begin{align}
&W\up{x}_{\eps,\bom,\lambda}(x)W\up{x}_{\eps,\bom,\lambda}(y) \\ \notag  & \quad 
\le  C_{\eps,m_\eps,\bom,\nu}  \scal{  (2d\log \scal{x})^{\frac 1 \xi} }^\nu\e^{\frac 4 {33} m_\eps   (2d\log \scal{x})^{\frac 1 \xi}}  \e^{- \frac 7 {132} m_\eps  \norm{y-x}} 
 \\ \notag & \quad 
\le C_{\eps,m_\eps,\bom,\nu} \scal{ \tfrac 1{m_\eps}}^\nu\e^{(\frac 4 {33} +\nu)m_\eps  (2d\log \scal{x})^{\frac 1 \xi}}  \e^{- \frac 7 {132} m_\eps \norm{y-x}} \\ \notag  & \quad 
\le  C^\pr_{\eps,m_\eps,\bom,\nu} \e^{(\frac 4 {33} +\nu)m_\eps  (2d\log \scal{x})^{\frac 1 \xi}}  \e^{- \frac 7 {132} m_\eps \norm{y-x}} ,\end{align}
which is \eq{eqWW}.

Part (iv) follows  from (iii), since \eq{eqWW} implies
\begin{align}\label{eqWW24444}
 \abs{\psi(x)}\abs{\psi(y)}&
\le  C_{\eps,m_\eps,\bom,\nu} \, \norm{T_x^{-1} \psi}^2  \e^{(\frac 4 {33} +\nu)m_\eps  (2d\log \scal{x})^{\frac 1 \xi}}  \e^{- \frac 7 {132} m_\eps \norm{y-x}}\\ \notag
& \le C_{\eps,m_\eps,\bom,\nu} \, \norm{T_0^{-1} \psi}^2\la x\ra^{2\nu}  \e^{(\frac 4 {33} +\nu)m_\eps  (2d\log \scal{x})^{\frac 1 \xi}}  \e^{- \frac 7 {132} m_\eps \norm{y-x}},
 \end{align}
for all $x,y \in \Z^d$, which is \eq{eqWW2}.

 Part   (v) also follows from (iii).
Given $\lambda \in \R$, let $\psi \in  \Chi_{\set{\lambda}}(H_{\eps,\bom})\setminus \set{0}$. Clearly there
exists $x_\lambda=x_{\eps,\bom,\lambda} \in \Z^d$  (not unique) such that
\begin{equation}\label{psi87}
\abs{\psi(x_\lambda)} =
\max_{x \in \Z^d} \,\abs{\psi(x)}.
\end{equation}
Since for all $a \in \Z^d$ we have
\begin{equation}
\begin{split}
\lVert T_a^{-1}  {\psi}  \rVert^2& = 
\sum_{x \in \Z^d} \abs{\psi(x)}^2 \scal{x-a}^{-2\nu} 
\le  \abs{\psi(x_\lambda)}^2  \sum_{x \in \Z^d} \ \scal{x-a}^{-2\nu} \\
& =  \abs{\psi(x_\lambda)}^2  \sum_{x \in \Z^d} \ \scal{x}^{-2\nu} 
= C_{d,\nu}^2  \abs{\psi(x_\lambda)}^2  ,
\end{split}
\end{equation}
where  $ C_{d,\nu} =\pa{ \sum_{x \in \Z^d} \ \scal{x}^{-2\nu} }^{\frac 1 2} \in (1,\infty)$,
we get the discrete equivalent of   \cite[Eq.~(4.22)]{GKsudec},
\begin{equation}\label{Tmaxphi}
\lVert T_a^{-1} \psi \rVert \le
 C_{d,\nu} \abs{\psi(x_\lambda)} \quad \text{for all
 $a \in \Z^d$},
\end{equation}
and hence, recalling \eq{defGWx}, we have
\beq
W\up{x_\lambda}_{\eps,\bom,\lambda}(x_\lambda)\ge \frac {\abs{\psi(x_\lambda)}}  {\norm{T_{x_\lambda}^{-1} \psi}} \ge C_{d,\nu}>1.
\eeq
 Thus  \eq{eqWW}  implies that for all $y\in \Z^d$ we have
 \begin{align}
& C_{d,\nu} W\up{x_\lambda}_{\eps,\bom,\lambda}(y)
\le  C_{\eps,m_\eps,\bom,\nu}\e^{(\frac 4 {33} +\nu)m_\eps  (2d\log \scal{x_\lambda})^{\frac 1 \xi}}  \e^{- \frac 7 {132} m_\eps \norm{y-x_\lambda}},
\end{align}
 which yields \eq{SULE}.
\end{proof}

\section{Connection with  the Green's functions multiscale analysis}\label{secGreen}
Consider an Anderson model $H_{\eps,\bom}$ as in Definition~\ref{defineAnd}.   Given $\Th \subset \Z^d$ finite and 
$z \notin \sigma \left( H_{\Th}   \right)$, we set 
\beq  G_{\Th}(z)= (H_{\Th}  -z)^{-1}\mqtx{and} 
G_{\Th}(z;x,y) = \scal{ \delta_{x}, (H_{\Th}  -z)^{-1}\delta_{y}} \;\; \text{for} \;\;  x, \, y \in \Th.
\eeq
\begin{definition} Let $E\in \R$ and  $m > 0$. A box $\La_L$ is said to be  $(m, E)$-regular if  
$E \notin \sigma (H_{\La_L} )$  and
\beq\label{Gdecay}
 \abs{G_{\La_L}(E; x, y)} \leq e^{-m\norm{x -y}}
\;\; \text{for all} \;\;x,y \in\La_L \;\; \text{with} \; \norm{x -y} \geq \tfrac{L}{100} .
\eeq
\end{definition}
Given $x,y\in \R^d$,  a scale $L$, and $m>0$, we define the event
\beq
\cR_{L,m}(x,y)= \set{\text{for all}\;  E\in\R\sqtx{either} \La_{L} (x) \sqtx{or}\La_{L} (y)\sqtx{is}  (m, E)\text{-regular} }.
\eeq

The Green's function multiscale analysis  \cite{FS,FMSS,DK,GKboot,Kle} yields the following theorem.

\begin{theorem}\label{thmGMSA}
Given $0<\zeta<1$,  there exists $\eps_0>0$, a finite scale   $\cL $, and $m>0$, such that, given $L\ge \cL $,   for all  $ 0<\eps\le \eps_0$ we have
 \beq\label{concSingleMSA3}
 \inf_{x\in \R^d}  \P \set{\La_{L} (x)\sqtx{is}  (m, E)\text{-regular} }\ge 1 -  \e^{-L^{\zeta}}  \qtx{for all} E\in\R ,
   \eeq
and
 \begin{align}\label{concEGMSA}
\inf_{\substack{
x,y\in \R^d\\ \norm{x-y} > L
}} \P\set{\cR_{L,m}(x,y) } \ge 1 -  \e^{-L^\zeta}.
\end{align}
\end{theorem}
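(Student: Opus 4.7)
The plan is to obtain Green's function decay as a consequence of the eigensystem localization of Theorem~\ref{thmMSA}, together with a Wegner-type estimate coming from the H\"older continuity \eqref{Holdercont} of $\mu$. The deterministic link is the spectral representation
\[
G_{\La_L}(E;x,y)=\sum_{u\in\La_L}\frac{\vphi_u(x)\overline{\vphi_u(y)}}{\lambda_u-E},
\]
valid for any eigensystem $\{(\vphi_u,\lambda_u)\}_{u\in\La_L}$ of $H_{\La_L}$ and any $E\notin\sigma(H_{\La_L})$. On the event that $\La_L$ is $m_\eps/4$-localizing and $\dist(E,\sigma(H_{\La_L}))\ge e^{-L^\beta}$, the bound \eqref{hypdec2} gives $\abs{\vphi_u(x)\vphi_u(y)}\le e^{(m_\eps/2)L_\tau}e^{-(m_\eps/4)(\norm{u-x}+\norm{u-y})}$, and the triangle inequality $\norm{u-x}+\norm{u-y}\ge\norm{x-y}$ lets one factor out $e^{-(m_\eps/4)\norm{x-y}}$ uniformly in $u$. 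Summing over $\abs{\La_L}$ terms gives $\abs{G_{\La_L}(E;x,y)}\le\abs{\La_L}e^{L^\beta}e^{(m_\eps/2)L_\tau}e^{-(m_\eps/4)\norm{x-y}}$, and since $\norm{x-y}\ge L/100$, $L_\tau=\fl{L^\tau}$ with $\tau<1$, and $\beta<\tau$, the polynomial and sub-leading exponential factors are absorbed into a slightly smaller rate $m$, yielding $(m,E)$-regularity.

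For \eqref{concSingleMSA3} with a fixed $E$, the plan is to intersect two events: the event $\cG_L(x):=\set{\La_L(x)\text{ is }m_\eps/4\text{-localizing for } H_{\eps,\bom}}$, which by Theorem~\ref{thmMSA} has probability at least $1-e^{-L^\xi}$, and the Wegner-type estimate $\P\set{\dist(E,\sigma(H_{\eps,\bom,\La_L(x)}))<\delta}\le C(L+1)^d\delta^\alpha$, which follows by standard spectral averaging from \eqref{Holdercont}. Choosing $\delta=e^{-L^\beta}$ and requiring $\zeta<\alpha\beta$, both probabilities are $\le\tfrac12 e^{-L^\zeta}$. For the two-box statement \eqref{concEGMSA}, the assumption $\norm{x-y}>L$ yields $\La_L(x)\cap\La_L(y)=\emptyset$, so the restricted potentials are independent. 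Conditioning on the potential on $\La_L(y)$, applying Wegner to each of the $\le(L+1)^d$ eigenvalues of $H_{\eps,\bom,\La_L(y)}$ against the independent potential on $\La_L(x)$, and taking expectations gives
\[
\P\set{\dist(\sigma(H_{\eps,\bom,\La_L(x)}),\sigma(H_{\eps,\bom,\La_L(y)}))<2\delta}\le C(L+1)^{2d}\delta^\alpha.
\]
On the intersection of $\cG_L(x)$, $\cG_L(y)$, and this two-spectra separation event, every $E\in\R$ is at distance $\ge\delta$ from at least one of the two spectra, so by the deterministic step the corresponding box is $(m,E)$-regular.

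The main technical obstacle is the parameter balance: $\delta$ must be small enough that the $\delta^{-1}$ factor picked up from the denominator is overcome by $e^{-(m_\eps/4)\norm{x-y}}$, yet large enough that $\delta^\alpha L^{2d}$ is still bounded by $e^{-L^\zeta}$. The constraint $\zeta<\alpha\beta<\alpha\tau<\alpha$, together with $\beta<\tau<1$ from \eqref{ttauzeta}, is precisely what opens this window and determines both the threshold $\eps_0$ and the final rate $m$. Obtaining the full range $\zeta\in(0,1)$ asserted in the statement (rather than only $\zeta$ in the above sub-interval) requires the bootstrap enhancement of Klein-Tsang \cite{KlT}, which upgrades the exponent in Theorem~\ref{thmMSA} to an arbitrary number in $(0,1)$.
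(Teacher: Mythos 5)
Your proposal is correct in substance, but it takes a genuinely different route from the paper's treatment of Theorem~\ref{thmGMSA}: the paper does not prove this theorem at all --- it is quoted as the output of the classical Green's functions multiscale analysis \cite{FS,FMSS,DK,GKboot,Kle} --- and your argument essentially reproduces the first proposition of Section~\ref{secGreen}, where the paper shows that the conclusions of Theorem~\ref{thmMSA} imply the conclusions of Theorem~\ref{thmGMSA}. Your deterministic step (eigenfunction expansion of $G_{\La_L}(E;x,y)$, the bound \eqref{hypdec2} on both factors, absorption of $\e^{L^\beta}$ and $\e^{(m_\eps/2)L_\tau}$ into the rate using $\norm{x-y}\ge \tfrac{L}{100}$ and $\beta<\tau<1$) is the same computation the paper performs using \eqref{hypdec}, and your probabilistic step (single-box Wegner at fixed $E$, and eigenvalue separation between the two disjoint boxes via conditioning) is what the paper imports from \cite[Corollaries~5.25 and 5.28]{Ki} and \cite{CGK2}. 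What each approach buys: quoting the Green's functions MSA gives the full statement (all $\zeta\in(0,1)$, with $m$ and $\eps_0$ coming from the classical analysis) independently of the eigensystem analysis, which is what lets Section~\ref{secGreen} present the two multiscale analyses as essentially equivalent; your derivation keeps everything inside the eigensystem framework, at the price of inheriting the exponent restriction and a threshold $\eps_0$ tied to the parameter choice \eqref{ttauzeta0}. Two small points: the binding constraint on the exponent is $\zeta<\xi$, coming from the failure probability $\e^{-L^\xi}$ of the localizing event, not $\zeta<\alpha\beta$ (the Wegner part only needs $\zeta<\beta$, since $\alpha>\tfrac12$ is fixed); and the appeal to \cite{KlT} for the full range of $\zeta$ is legitimate but not strictly necessary, since $\eps_0$ in Theorem~\ref{thmGMSA} is allowed to depend on $\zeta$, so one may instead rerun Theorem~\ref{thmMSA} with a different admissible choice of parameters in \eqref{ttauzeta0} (taking $\gamma$ close to $1$) so that $\xi>\zeta$.
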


\eq{concSingleMSA3} are the conclusions of the single energy  multiscale analysis, and \eq{concEGMSA} are the conclusions of the  energy interval   multiscale analysis.

We will now show the connection between  Theorem~\ref{thmMSA}  and Theorem~\ref{thmGMSA}.  We assume $\xi,\zeta, \beta, \tau, \gamma$ satisfy \eq{ttauzeta0}.

We first show that the conclusions of Theorem~\ref{thmMSA} imply the conclusions of Theorem~\ref{thmGMSA}

\begin{proposition}  Let $\eps_0>0$.
 Fix $0<\eps\le \eps_0$ and suppose there exists $0<\xi<1$, a finite scale   $\cL $, and $m>0$, such that  the Anderson model  $H_{\eps,\bom}$ satisfies
 \begin{align}\label{concMSA3}
\inf_{x\in \R^d} \P\set{\La_{L} (x) \sqtx{is}  m  \text{-localizing for} \; H_{\eps,\bom}} \ge 1 -  \e^{-L^\xi} \mqtx{for all } L\ge \cL.
\end{align}
   Then, given $0<\zeta^\pr <\xi$ and $0<m^\pr<m$,  there exists a finite scale   $\cL_1=\cL_1(\cL, \eps_0,\xi,\zeta^\pr,m,m^\pr) $ such that  for all $L\ge \cL _1$  we have 
   \beq\label{concSingleMSA}
 \inf_{x\in \R^d}  \P \set{\La_{L} (x)\sqtx{is}  (m^\pr, E)\text{-regular} }\ge 1 -  \e^{-L^{\zeta^\pr}}  \qtx{for all} E\in\R, 
   \eeq
and
\begin{align}\label{concMSA4}
\inf_{\substack{
x,y\in \R^d\\ \norm{x-y} > L
}} \P\set{\cR_{L,m^\pr}(x,y) } \ge 1 -  \e^{-L^{\zeta^\pr}}.
\end{align}
\end{proposition}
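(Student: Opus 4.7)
The strategy is to express the finite volume Green's function using the $m$-localized eigensystem supplied by the hypothesis, and to tame the resulting small denominators by combining the built-in level spacing with a Wegner-type eigenvalue-avoidance event.

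Fix an auxiliary exponent $s\in(\zeta^\pr,1)$ (for instance $s=\tfrac{\zeta^\pr+1}{2}$). Let $\cA_x$ be the event that $\La_L(x)$ is $m$-localizing for $H_{\eps,\bom}$, and for each $E\in\R$ let $\cB_x(E)$ be the event $\dist\pa{E,\sigma(H_{\eps,\bom,\La_L(x)})}\ge \e^{-L^s}$. The hypothesis gives $\P(\cA_x^c)\le \e^{-L^\xi}$, and the H\"older bound \eq{Holdercont} yields the standard Wegner estimate $\P(\cB_x(E)^c)\le K(2\e^{-L^s})^\alpha\abs{\La_L(x)}$. Since $s>\zeta^\pr$, a union bound produces $\P(\cA_x\cap \cB_x(E))\ge 1-\e^{-L^{\zeta^\pr}}$ for $L\ge \cL_1$.

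On $\cA_x\cap \cB_x(E)$ there is an $m$-localized eigensystem $\set{(\vphi_u,\lambda_u)}_{u\in\La_L(x)}$ satisfying \eq{hypdec2} together with $\abs{\lambda_u-\lambda_v}\ge \e^{-L^\beta}$ for $u\ne v$, so the spectral expansion
\[
G_{\La_L(x)}(E;a,b)=\sum_{u\in\La_L(x)}\frac{\vphi_u(a)\overline{\vphi_u(b)}}{\lambda_u-E}
\]
admits the termwise bound $\abs{\vphi_u(a)\vphi_u(b)}\le \e^{2mL_\tau}\e^{-m\norm{a-b}}$, combining \eq{hypdec2} with the triangle inequality $\norm{a-u}+\norm{u-b}\ge\norm{a-b}$. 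For the denominators, let $u^*$ minimize $u\mapsto\abs{\lambda_u-E}$: the event $\cB_x(E)$ forces $\abs{\lambda_{u^*}-E}\ge \e^{-L^s}$, while for $u\ne u^*$ the minimizing property and level spacing give $2\abs{\lambda_u-E}\ge\abs{\lambda_u-\lambda_{u^*}}\ge \e^{-L^\beta}$. Assembling,
\[
\abs{G_{\La_L(x)}(E;a,b)}\le \abs{\La_L(x)}\bpa{\e^{L^s}+2\e^{L^\beta}}\e^{2mL_\tau}\e^{-m\norm{a-b}},
\]
and for $\norm{a-b}\ge L/100$ the prefactor is absorbed into $\e^{(m-m^\pr)\norm{a-b}}$ once $L\ge \cL_1$ (using $s,\beta,\tau<1$), establishing $(m^\pr,E)$-regularity of $\La_L(x)$ and proving \eq{concSingleMSA}.

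For the energy-interval statement \eq{concMSA4}, take $\norm{x-y}>L$ so that $\La_L(x)\cap \La_L(y)=\emptyset$ and $H_{\La_L(x)},H_{\La_L(y)}$ are independent. Introduce the cross-resonance avoidance event
\[
\cB_{x,y}:=\set{\abs{\lambda-\lambda^\pr}\ge 2\e^{-L^s}\sqtx{for every}\lambda\in\sigma(H_{\La_L(x)}),\,\lambda^\pr\in\sigma(H_{\La_L(y)})}.
\]
Conditioning on $\sigma(H_{\La_L(y)})$ and applying Wegner to $H_{\La_L(x)}$ around each eigenvalue $\lambda^\pr$, summing over the at most $\abs{\La_L(y)}$ such eigenvalues, and taking expectations give $\P(\cB_{x,y}^c)\le C_{K,\alpha}\, L^{2d}\e^{-\alpha L^s}\le \tfrac12\e^{-L^{\zeta^\pr}}$ for $L$ large. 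On $\cA_x\cap\cA_y\cap\cB_{x,y}$, for any $E\in\R$ at least one of $\dist(E,\sigma(H_{\La_L(x)}))\ge \e^{-L^s}$ or $\dist(E,\sigma(H_{\La_L(y)}))\ge \e^{-L^s}$ must hold (otherwise a pair of eigenvalues would contradict $\cB_{x,y}$), so the single-energy argument applied to the corresponding box delivers $(m^\pr,E)$-regularity. The principal obstacle is the calibration of $s$: coarse enough ($s>\zeta^\pr$) for the Wegner estimate to beat $\e^{-L^{\zeta^\pr}}$, yet fine enough ($s<1$) for the worst resonant denominator $\e^{L^s}$ to be dominated by the exponential decay $\e^{(m-m^\pr)L/100}$ that eigenfunction localization provides.
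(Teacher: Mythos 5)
Your proposal is correct and follows essentially the same route as the paper: expand $G_{\La_L}(E)$ in the $m$-localized eigensystem, use a Wegner-type estimate to keep $E$ at distance $\ge \e^{-L^{s}}$ (the paper uses $\e^{-L^{\beta}}$, equivalently $\norm{G_{\La_L}(E)}\le \e^{L^{\beta}}$) from the spectrum so the sub-exponential prefactor is absorbed into the loss from $m$ to $m^\pr$, and for \eqref{concMSA4} combine the localizing events for the two disjoint boxes with the inter-box eigenvalue-separation (two-box Wegner) estimate so that every $E$ is non-resonant for at least one box. The only differences are cosmetic: your product bound via \eqref{hypdec2} carries an extra $\e^{2mL_\tau}$ prefactor where the paper's case analysis gives $m(1-CL^{\tau-1})$ directly, and you re-derive the two-box Wegner estimate by conditioning rather than citing it, neither of which affects the argument.
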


\begin{proof}  Fix  $0<\eps\le \eps_0$,  $0<\zeta^\pr <\xi$, and $0<m^\pr<m$,  , and assume  \eq{concMSA3} for all  $L\ge \cL $.
Let  $L\ge \cL $ and 
suppose the box $\La_L$ is $m$-localizing    with   an $m$-localized eigensystem $\set{\vphi_x,\lambda_x}_{x \in  {\Lambda}_{L}}$. Let $x,u,v \in  {\Lambda}_{L}$ with $\norm{u-v}\ge \frac L {100}$. In this case either $ \norm{u-x}\ge L_\tau$ or $ \norm{v-x}\ge L_\tau$.  Say  $ \norm{u-x}\ge L_\tau$, then 
\beq
\abs{\vphi_x(u)\vphi_x(v)}\le \begin{cases} \e^{-m \pa{\norm{u-x} +\norm{v-x}}}\le  \e^{-m \norm{u-v}} & \text{if}\quad  \norm{v-x}\ge L_\tau\\
 \e^{-m\norm{u-x}}\le \e^{-m \pa{\norm{u-v}-L_\tau}}
 & \text{if}\quad  \norm{v-x}< L_\tau
\end{cases},
\eeq
so we conclude that
\beq
\abs{\vphi_x(u)\vphi_x(v)}\le  \e^{-m_1 \norm{u-v}}, \qtx{where} m_1 \ge m(1-CL^{\tau-1}).
\eeq

 Fix an energy $E\in \R$ and  assume  $ \norm{G_{\La_L}(E)}\le \e^{L^\beta}$.   Then  for $u,v \in  {\Lambda}_{L}$ with $\norm{u-v}\ge \frac L {100}$ we have
 \begin{align}
  \abs{G_{\La_L}(E; u, v)}& \leq \sum_{x\in \La_L} \abs{E-\lambda_x}^{-1} \abs{\vphi_x(u)\vphi_x(v)}\le \e^{L^\beta}\e^{-m_1\norm{u-v}}(L+1)^d \\   \notag & \le  \e^{-m_2 \norm{u-v}}, 
    \end{align}
  where
\beq
m_2\ge m_1\pa{1-C\pa{1+ \tfrac 1 {m_1}}L^{\beta-1}} \ge   m\pa{1-C\pa{1+ \tfrac 1 {m}}L^{\tau-1}}  \ge m^\pr\eeq    
  for large $L$.   We conclude that the
 box $\La_L$ is $( m^\pr,E)$-regular.

  Since the Wegner estimate (see, \cite[Corollary~5.25]{Ki}, \cite[Section~2]{CGK2}) gives
  \beq
  \P\set{ \norm{G_{\La_L}(E)}\le \e^{L^\beta}}\ge 1-  Q_\mu(2\e^{-L^\beta})\abs{\La_L}\ge 1 - \wtilde{K}2^\alpha \e^{-\alpha L^\beta}(L+1)^d\ge 1 - \tfrac 12  \e^{-L^{\zeta^\pr}}
  \eeq
  for large $L$, combining with \eq{concMSA3} we get \eq{concSingleMSA}.
  
 Now let $L\ge \cL$ and   
 consider two boxes $\La_L(x_1)$ and  $\La_L(x_2)$, where $x_1,x_2 \in \R^d$, $\norm{x_1-x_2} >L$.  Define the events
\begin{align}
\cA&= \set{\La(x_1)\sqtx{and}\La(x_2)\sqtx{are both} m\text{-localizing}},\\ \notag
\cB&= \set{\dist(\sigma(H_{\La_L(x_1)}), \sigma(H_{\La_L(x_2)}))\ge 2\e^{-L^{\beta}}}
\end{align}
It follows from
 \eq{concMSA3} that 
\beq
\P\set{\cA}\ge 1- 2\e^{-L^\xi}\ge 1 - \tfrac 12  \e^{-L^{\zeta^\pr}}.
\eeq
Since  $\norm{x_1-x_2} >L$,    the boxes are disjoint, and  the  Wegner estimate between  boxes (see \cite[Corollary~5.28]{Ki})
gives
\begin{align} 
&\P\set{\cB}\ge 1 -Q_\mu(4\e^{-L^\beta})\abs{\La_L(x_1)}\abs{{\La_L(x_2)}}  \ge 1 -\wtilde{K}4^\alpha \e^{-\alpha L^\beta}(L+1)^{2d}\ge 1 - \tfrac 12  \e^{-L^{\zeta^\pr}}.
\end{align}
Thus we have
\beq
\P\set{\cA\cap\cB}\ge 1 -  \e^{-L^{\zeta^\pr}}.
\eeq
Moreover, for $\bom \in \cA\cap\cB$ and $E\in \R$,  the boxes $\La(x_1)$ and $\La(x_2)$ are 
both $m$-localizing, and
we must have either $ \norm{G_{\La_L(x_1)}(E)}\le \e^{L^\beta}$ or  $ \norm{G_{\La_L(x_2)}(E)}\le \e^{L^\beta}$, so the previous argument shows that either  $\La(x_1)$ or  $\La(x_2)$ is $(m^\pr,E)$-regular for large $L$.   We proved \eq{concMSA4}.
\end{proof}

Conversely,  the conclusions of Theorem~\ref{thmGMSA} \emph{almost} imply the conclusions of Theorem~\ref{thmMSA}. To get  Theorem~\ref{thmMSA} we have to use Hall's Marriage Theorem for the labeling of eigenpairs, as in the proof of Proposition~\ref{propMSA}.

\begin{proposition}  Let $\eps_0>0$.
 Fix $0<\eps<\eps_0$ and suppose there exists $0<\zeta<1$, a finite scale   $\cL $, and $m>0$, such that  the Anderson model  $H_{\eps,\bom}$ satisfies  \eq{concEGMSA} for all  $L\ge \cL $.  Then, given $0<\xi<\zeta$ and $0<m^\pr<m$,  there exists a finite scale   $\cL_1=\cL_1(\cL,\eps_0,\zeta,\xi,m,m^\pr) $ such that  for all $L\ge \cL _1$  we have 
\begin{align}\label{concMSA2}
\inf_{x\in \R^d} \P\set{\La_{L} (x) \sqtx{is}  m^\pr  \text{-localizing for} \; H_{\eps,\bom}} \ge 1 -  \e^{-L^{\xi}} .
\end{align}
\end{proposition}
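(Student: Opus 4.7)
The plan is to mirror Lemma~\ref{lemInduction} and the final argument of Theorem~\ref{thmMSA}, replacing the $m$-localizing sub-box input at the intermediate scale by the energy-interval Green's function MSA \eq{concEGMSA}.  Fix $0<\xi<\zeta$ and $0<m^\pr<m$, and choose an auxiliary scale $\ell=\fl{L^s}$ with $s\in(\xi/\zeta,\tau)$; this interval is nonempty because \eq{ttauzeta} yields $\xi/\zeta<1/\gamma^2<\tau$.  Let $\mathcal{C}_{L,\ell}(x_0)=\set{\La_\ell(a)}_{a\in\Xi_{L,\ell}}$ be the suitable $\ell$-cover of $\La_L=\La_L(x_0)$, and define the good event $\cE$ as the intersection of (i) $\La_L$ is $L$-level spacing for $H_{\eps,\bom}$ and (ii) $\cR_{\ell,m}(a,b)$ holds for every pair $a,b\in\Xi_{L,\ell}$ with $\norm{a-b}>\ell$.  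Lemma~\ref{lemSep} and \eq{concEGMSA} combined with a union bound over the at most $(2L/\ell)^{2d}$ pairs will give $\P\set{\cE}\ge 1-\e^{-L^\xi}$ for $L$ large, because $\ell^\zeta\gg L^\xi$ by the choice of $s$.

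Fix $\bom\in\cE$ and let $\set{(\psi_\lambda,\lambda)}_{\lambda\in\sigma(H_{\La_L})}$ be an eigensystem for $H_{\La_L}$.  For each eigenvalue $\lambda$, consider the singular set $S_\lambda:=\set{a\in\Xi_{L,\ell};\ \La_\ell(a)\sqtx{is not}(m,\lambda)\text{-regular}}$; by (ii), any two elements of $S_\lambda$ are within $\ell$ in sup-norm, so $S_\lambda\subset\La_\ell^\R(c_\lambda)$ for some $c_\lambda\in\R^d$.  For each $b\in\La_L$, property \eq{covproperty} picks out $\La_\ell^{(b)}\in\mathcal{C}_{L,\ell}(x_0)$ with $b\in(\La_\ell^{(b)})^{\La_L,\ell/10}$.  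Iterating the resolvent identity $\psi_\lambda(b)=-\eps\sum_{v}G_{\La_\ell^{(b)}}(\lambda;b,\hat v)\psi_\lambda(v)$ along chains of $(m,\lambda)$-regular boxes in $\mathcal{C}_{L,\ell}(x_0)$, as in the classical Green's function MSA, will yield for some $m^\pr<m^\ast<m$ and constant $C_1=C_1(d)$ the decay
\[
\abs{\psi_\lambda(y)}\le\e^{-m^\ast\norm{y-c_\lambda}}\qtx{for all} y\in\La_L \sqtx{with} \norm{y-c_\lambda}\ge C_1\ell,
\]
the loss $m\to m^\ast$ absorbing the polynomial factors from the iteration and the cover entropy.

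To label the eigenvalues by sites of $\La_L$ I will apply Hall's Marriage Theorem as in Lemma~\ref{lemInduction}.  Set $\cN(x):=\set{\lambda\in\sigma(H_{\La_L});\ \norm{x-c_\lambda}\le C_1\ell}$ with $\cN(\Theta):=\bigcup_{x\in\Theta}\cN(x)$ for $\Theta\subset\La_L$, and let $Q_\Theta$ be the orthogonal projection onto $\mathrm{span}\set{\psi_\lambda;\ \lambda\in\cN(\Theta)}$.  The decay above bounds $\norm{(1-Q_\Theta)\Chi_\Theta}\le(L+1)^d\e^{-m^\ast C_1\ell}<1$ for $L$ large, so Lemma~\ref{lemPQ} verifies Hall's condition $\abs{\Theta}\le\abs{\cN(\Theta)}$.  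Hall's Marriage Theorem then produces a bijection $x\mapsto\lambda_x$ with $\norm{x-c_{\lambda_x}}\le C_1\ell$.  Because $C_1\ell\ll L_\tau$, for $\norm{y-x}\ge L_\tau$ one has $\norm{y-c_{\lambda_x}}\ge(1-o(1))\norm{y-x}\ge C_1\ell$ and hence
\[
\abs{\psi_{\lambda_x}(y)}\le\e^{-m^\ast(1-o(1))\norm{y-x}}\le\e^{-m^\pr\norm{y-x}},
\]
showing $\psi_{\lambda_x}$ is $(x,m^\pr)$-localized and $\La_L$ is $m^\pr$-localizing on $\cE$.

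The main obstacle will be absorbing the polynomial losses (cover entropy, boundary terms in the resolvent iteration, and the shift from the analytic center $c_\lambda$ to the lattice label $x$) into the single rate degradation $m\to m^\pr$; this is exactly what the two-sided bound $L^{\xi/\zeta}\ll\ell\ll L_\tau$ is designed to accomplish, and it is what pins down the polynomial choice of $s$.  Everything else is a transcription of the Hall labeling in Section~\ref{secEMSA} together with standard single-energy Green's function techniques.
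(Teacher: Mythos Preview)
Your proposal is correct and follows essentially the same approach as the paper's own proof: cover $\La_L$ by a suitable $\ell$-cover, use \eq{concEGMSA} on all pairs of cover-box centers to force the non-regular set for each eigenvalue $\lambda$ into a single $\ell$-scale region, iterate the resolvent identity to obtain decay of $\psi_\lambda$ away from that region, and then apply Hall's Marriage Theorem via Lemma~\ref{lemPQ} to produce the site labeling. The paper simply fixes the exponent $s=1/\gamma$ (which lies in your interval $(\xi/\zeta,\tau)$ by \eq{ttauzeta}) and takes the singular center to be an element $a_\lambda\in\Xi_{L,\ell}$ rather than an analytic center $c_\lambda$, but these are cosmetic differences.
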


\begin{proof}  Fix  $0<\eps\le \eps_0$,  $0<\zeta^\pr <\xi$, and $0<m^\pr<m$, and assume  \eq{concEGMSA} for all  $L\ge \cL $.
Let $L=\ell^\gamma$ with $\ell \ge \cL$. We take
$\La_L=\La(x_0)$, where $x_0\in \R^d$,  and let
${\mathcal C}_{L,\ell}={\mathcal C}_{L,\ell} \left(x_0 \right)$ be the suitable $\ell$-cover of $\La_L$, with $ \Xi_{L,\ell}=\Xi_{L,\ell} \left(x_0 \right)$.  We define the event
\beq
\cR_{L,m}(x_0)= \bigcap_{\substack{
a,b\in\Xi_{L,\ell}\\ \norm{a-b} > \ell}}\cR_{\ell,m}(a,b),
\eeq
so we have
\beq\label{probR}
\P\set{\cR_{L,m}(x_0)}\ge 1 - \pa{\tfrac{2L}\ell }^{2d}\e^{-\ell^\zeta}\ge 1 -\tfrac 1 2 \e^{-L^\xi}
\eeq
for sufficiently large $L$.

Fix $\bom \in \cR_{L,m}(x_0)$, and let  $H_{\Theta}=H_{\eps,\bom,\Theta}$ for $\Theta \subset \Z^d$.  Suppose   $(\vphi,\lambda)$ is an eigenpair for  $ \sigma(H_{\La_L})$. Recall that for any box  $\La_\ell \subset \La_L$ it follows from $\pa{H_{\La_L}-\lambda}\vphi =0$ and \eq{Hdecomp}  that
\beq
 \Chi_{{\Lambda}_{\ell}} \pa{H_{{\Lambda}_{\ell}}-\lambda}\vphi= - \eps \Chi_{{\Lambda}_{\ell}}\Gamma_{ \boldsymbol{ \partial}^{ \La_L}  {\Lambda}_{\ell}}\vphi,
 \eeq
 so,  for all  $x\in \Lambda_{\ell}$  we have
\beq\label{vphiRy}
\vphi(x)= -   \pa{ \eps G_{{\Lambda}_{\ell}}(\lambda) \Gamma_{ \boldsymbol{ \partial}^{ \La_L}  {\Lambda}_{\ell}}\vphi}(x)=  \sum_{(u,v)\in \boldsymbol{ \partial}^{ \La_L}  {\Lambda}_{\ell} } \eps G_{{\Lambda}_{\ell}}(\lambda; x,u) \vphi(v).
\eeq

Since $\bom \in \cR_{L,m}(x_0)$, there exists $a_\lambda\in \Xi_{L,\ell}$ such that
$\La_\ell(b)$ is $(m, \lambda)$-regular for all $b\in \Xi_{L,\ell}$ with $\norm{b-a_\lambda} >\ell$. 
 In particular, if $y \notin \La_{(2\rho +1)\ell} (a_\lambda)$ we have that  ${\Lambda}_{\ell}^{(y)}$ (as in \eq{covproperty}) is $(m, \lambda)$-regular.  Thus it follows from \eq{vphiRy} and \eq{Gdecay} that
\beq\label{justsmall1}
\abs{\vphi(y)}\le \eps s_d \ell^{d-1}\e^{-m\pa{\norm{y-y_1}-1}}\abs{\vphi(y_1)}\le \e^{-m_1\norm{y-y_1}}\abs{\vphi(y_1)}
\eeq
for some $y_1\in  \partial_{\mathrm{ex}}^{ \La_L}{\Lambda}_{\ell}^{(y)}$,
so $\norm{y-y_1} \ge \frac \ell{10}$, where\beq
m_1 \ge m \pa{1- \tfrac {10}\ell -C_{d} \tfrac {\log \ell}{m\ell}- \tfrac{\log \eps_0}{m\ell}}\ge    m \pa{1-C^\pr_{d,\eps_0} (1 +\tfrac 1 m) \tfrac {\log \ell}{\ell}}.
\eeq
Since we can repeat the procedure if  $y_1 \notin \La_{(2\rho +1)\ell} (a_\lambda)$, we conclude that
\beq\label{justsmall}
\abs{\vphi(y)}\le   \e^{-m_1\pa{\norm{y-a_\lambda}-  \pa{\rho+\frac 1 2}\ell}}\le   \e^{-m_1\pa{\norm{y-a_\lambda}-  \frac 3 2\ell}}.
\eeq
Since $\frac 1 \gamma < \tau <1$, it follows that 
\beq\label{justsmall3}
\abs{\vphi(y)}\le   \e^{-m_2\norm{y-a_\lambda}}\qtx{if}\norm{y-a_\lambda} \ge \tfrac 1 2 L_\tau,
\eeq
where
\beq
m_2\ge m_1 \pa{1- \tfrac {3}{ \ell^{(\gamma-1)\tau}} }\ge m  \pa{1- C^{\pr\pr}_{d,\eps_0} (1 +\tfrac 1 m)\tfrac {1}{ \ell^{(\gamma-1)\tau}} }.
\eeq

Let $\set{(\vphi_j,\lambda_j)}_{j=1}^{\abs{\La_L}}$  be an eigensystem for $H_{\La_L}$. We let $a_j=a_{\lambda_j}$. (Note that the map  $j \mapsto a_j$ is not necessarily an injection.) We claim
\beq\label{U2ell}
\La_L =\bigcup_{j=1}^{\abs{\La_L}}\La_{(2\rho +1)\ell} (a_j).
\eeq
This can be seen as follows. Suppose $y\in \La_L \setminus \bigcup_{j=1}^{\abs{\La_L}}\La_{(2\rho +1)\ell} (a_j)$. It follows from \eq{justsmall1} that
\begin{align}
1= \norm{\delta_y}^2=\sum_{j=1}^{\abs{\La_L}} \abs{\vphi_j(y)}^2 \le (L+1)^d\e^{-m_1\frac \ell{5}}<1,
\end{align}
a contradiction.

Given $x\in \La_L$, we set
\beq
\cN(x)= \set{j\in \set{1,2,\ldots,\abs{\La_L}}; x \in \La_{(2\rho +1)\ell} (a_j)}.
\eeq
Note $\cN(x)\ne \emptyset$ in view of \eq{U2ell}.

Let $\cN(\Th)=\bigcup_{x\in \Th} \cN(x)$ for $\Theta \subset \La_L$.   We fix   $\Theta \subset \La_L$, and  let $Q_\Th$ be the orthogonal projection onto the span of $\set{\vphi_j; \; j \in \cN(\Th)}$.
For every $j \notin \cN(\Th)$ we have \eq{justsmall1} for $\vphi_j(x)$ for all $x \in\Th$, so
\beq
\norm{(1-Q_\Th)\Chi_\Th}\le \abs{\La_L}^{\frac 12 } \abs{\Th}^{\frac 12 }\e^{-m_1\frac \ell{10}}\le (L+1)^d \e^{-m^\pr\frac \ell{10}} <1,
\eeq
so it follows from Lemma~\ref{lemPQ} that
\beq\label{Hallc012}
\abs{\Th}= \tr \Chi_\Th \le  \tr Q_\Th= \abs{\cN(\Th)},
\eeq
which is Hall's condition \eq{eq:Hallcond1}. 
Thus we can apply Hall's Marriage Theorem, concluding that there exists a bijection
\beq
x\in \La_L \mapsto j_x \in \set{1,2,\ldots,\abs{\La_L}}, \qtx{where}  j_x \in \cN(x).
\eeq 
We set $(\vphi_x,\lambda_x)=(\vphi_{j_x},\lambda_{j_x})$ for all $x\in \La_L $. If $\norm{y-x}\ge L_\tau$
we have 
\beq
\norm{y-a_{j_x}}\ge \norm{y-x} -\norm{x-a_{j_x}}  \ge L_\tau - (\rho +\tfrac 12)\ell > \tfrac 12 L_\tau,
\eeq
so it follows from \eq{justsmall3} that
\begin{align}\label{justsmall4}
\abs{\vphi_x(y)}&\le   \e^{-m_2\norm{y-a_{j_x}}}\le   \e^{-m_2\pa{ \norm{y-x} -\norm{x-a_{j_x}}}}
\le   \e^{-m_2\pa{ \norm{y-x} -(\rho +\tfrac 12)\ell}}\\ \notag  &
\le  \e^{-m_3\norm{y-x}},
\end{align}
where 
\beq
m_3 \ge m_2\pa{1- \tfrac {3}{2 \ell^{(\gamma-1)\tau}} }\ge  m  \pa{1- C^{\pr\pr\pr}_{d,\eps_0}(1 +\tfrac 1 m)\tfrac {1}{ \ell^{(\gamma-1)\tau}} }\ge m^\pr,
\eeq
for $\ell$ sufficiently large.

Thus for  $\bom \in \cR_{L,m}(x_0)$ the box $\La_L(x_0)$ would be $m^\pr$-localizing for $H$ if it would be level spacing. Since it follows from \eq{probsep} that
 this is true for $ L$ large with probability
$\ge 1 -\tfrac 1 2 \e^{-L^\xi}$, we have \eq{concMSA2}.
\end{proof}

\appendix

\section{Lemma about orthogonal projections}

\begin{lemma} \label{lemPQ} Let $P$ and $Q$ be  orthogonal projections  on a  Hilbert space $\H$.
Then
\beq
\norm{(1-P)Q} <1 \quad \Longrightarrow \quad \tr Q \le  \tr  P.
\eeq
In particular, taking $Q=1$ we get  
\beq
\norm{1-P} <1 \quad \Longrightarrow \quad P=1.
\eeq

\end{lemma}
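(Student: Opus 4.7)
My plan is to exploit the following single observation: the hypothesis $\norm{(1-P)Q} < 1$ forces the restriction of $P$ to $\Ran Q$ to be injective and bounded below, so that $\Ran Q$ embeds linearly as a closed subspace of $\Ran P$. The desired trace inequality will then follow from the identification $\tr Q = \dim \Ran Q$ and $\tr P = \dim \Ran P$. In the paper's applications the ambient $\H = \ell^2(\Th)$ is finite-dimensional, so this identification is elementary.

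To execute this, I set $c := \norm{(1-P)Q} < 1$ and take any $\psi \in \Ran Q$, so $Q\psi = \psi$. Since $(1-P)\psi = (1-P)Q\psi$, we have $\norm{(1-P)\psi} \le c\norm{\psi}$, and Pythagoras applied to the orthogonal decomposition $\psi = P\psi + (1-P)\psi$ yields
\beq
\norm{P\psi}^2 = \norm{\psi}^2 - \norm{(1-P)\psi}^2 \ge (1-c^2)\norm{\psi}^2.
\eeq
Hence $P|_{\Ran Q}$ is a bounded-below linear injection into $\Ran P$, giving $\dim \Ran Q \le \dim \Ran P$, i.e., $\tr Q \le \tr P$.

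For the ``in particular'' statement with $Q=\mathbf{1}$, the inequality $\dim \H \le \tr P$ combined with $P \le \mathbf{1}$ (so $\tr P \le \dim \H$) will force $P = \mathbf{1}$; alternatively and without invoking the dimension, if $\phi \in \Ran(\mathbf{1}-P)$ were nonzero then $\norm{(\mathbf{1}-P)\phi} = \norm{\phi}$ would contradict $\norm{\mathbf{1}-P} < 1$, so $\Ran(\mathbf{1}-P) = \set{0}$. I do not expect any serious obstacle here---the argument is essentially a one-line Pythagoras computation followed by a dimension count, and the main thing to get right is simply to recognize that $(1-P)\psi = (1-P)Q\psi$ when $\psi \in \Ran Q$, which is what converts the operator-norm hypothesis into a pointwise estimate on $\Ran Q$.
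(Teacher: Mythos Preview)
Your proof is correct. Both your argument and the paper's reduce to showing that $\Ran Q$ injects into $\Ran P$, but the execution differs. The paper proceeds algebraically: from the identity $(1-(1-P)Q)Q = PQ$ and the invertibility of $1-(1-P)Q$ (Neumann series, since $\norm{(1-P)Q}<1$), one writes $Q = (1-(1-P)Q)^{-1}PQ$ and invokes $\Rank(BCD)\le \Rank C$. You instead argue geometrically via Pythagoras, bounding $P$ below on $\Ran Q$ to get the injection directly. Your route is slightly more elementary in that it avoids the Neumann-series inversion and works vector by vector; the paper's is a one-line algebraic identity. Both are equally valid and equally short.
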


\begin{proof}
Since $(1-(1-P)Q)Q=PQ$ and $1-(1-P)Q$ is invertible by the assumption of the lemma, we infer that
\beq Q=(1-(1-P)Q)^{-1}PQ \Longrightarrow \tr Q\le \tr P,\eeq
where in the last step we have used $A=BCD\Longrightarrow \Rank A\le \Rank C$.
\end{proof}

\section{Estimating the probability of level spacing}\label{appspspaced}
In this appendix we review an estimate of the probability of level spacing due to Klein and Molchanov \cite{KlM}.
Let us consider a generalized Anderson model
\beq \label{defAndgen}
H_{\bom} :=  H_0 + V_{\bom} \quad \text{on} \quad  \ell^2(\Z^d), 
\eeq 
where $H_0$ is a bounded self-adjoint operator on  $\ell^2(\Z^d)$ and  $V_{\bom}$ is a random potential:      $V_{\bom}(x)= \omega_x$ for  $ x \in \Z^d$, where
$\bom=\{ \omega_x \}_{x\in
\Z^d}$ is a family of independent 
identically distributed random
variables,  whose  common probability 
distribution $\mu$ is non-degenerate and   H\"older continuous of order  $\alpha \in ( \frac 12,1]$:  
\beq
S_\mu(t) \le K t^\alpha \qtx{for all} t \in [0,1],
\eeq
where $K$ is  a constant and  $S_\mu(t):= \sup_{a\in \R} \mu \set{[a, a+t]} $ is the concentration function of the measure $\mu$.  We set
$Q_\mu(t)= S_\mu(t)$ and $\wtilde{K}=K$ if $\alpha=1$ and $Q_\mu(t)= 8S_\mu(t)$ and  $\wtilde{K}=8K$ if  $\alpha \in ( \frac 12,1)$.

\begin{lemma}[{\cite[Lemma~2]{KlM}}]\label{lemKlM}
 Let $\Th \subset \Z^d$ be a finite subset, $I\subset \R$ be a bounded interval, and $\eta\in (0,\frac 1 2]$.  Let
$\mathcal{E}_{\Th,I,\eta}$ denote the event   that 
 $\tr \Chi_{J}(H_{\bom,\Th}) \le 1$ for all subintervals $J \subset I $
with length $|J| \le  \eta $.  Then
\begin{equation}\label{speig}
\P\set{\mathcal{E}_{\Th,I,\eta}}\ge  1 -  \wtilde{K}^2 (|I| +1) \pa{2\eta}^{2\alpha-1}\abs{\Th}^2.  
\end{equation}
\end{lemma}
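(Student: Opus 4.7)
My plan is to reduce the lemma to a Minami-type estimate applied on short subintervals via a simple covering and union bound argument.

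First, I would construct a covering of $I$ by overlapping intervals of length $2\eta$: set $a_k = \inf I + k\eta$ and $I_k = [a_k, a_k + 2\eta]$ for $k = 0, 1, \ldots, N$, where $N = \lceil \abs{I}/\eta \rceil$. Any subinterval $J \subset I$ with $\abs{J} \le \eta$ satisfies $J \subset I_k$ for some $k$, because $J$ starts in some half-open slab $[a_k, a_{k+1})$ and has length at most $\eta$. Hence on the complement $\mathcal{E}_{\Th,I,\eta}^c$ there exist $k \le N$ and  two distinct eigenvalues of $H_{\bom,\Th}$ inside $I_k$, i.e.\ $\tr \Chi_{I_k}(H_{\bom,\Th}) \ge 2$.

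The key tool is the Minami estimate generalized to H\"older continuous single-site distributions (Combes--Germinet--Klein): for any bounded interval $J$ and any finite $\Th \subset \Z^d$,
\begin{equation*}
\P\set{\tr \Chi_J(H_{\bom,\Th}) \ge 2} \le Q_\mu(\abs{J})^2 \abs{\Th}^2 \le \wtilde{K}^2 \abs{J}^{2\alpha} \abs{\Th}^2.
\end{equation*}
The case $\alpha=1$ is Minami's original estimate; for $\alpha \in (\tfrac12,1)$ one writes each $\omega_x$ as a convex combination of a Bernoulli random variable and an auxiliary one, so that after conditioning on the auxiliary variables one obtains a shifted Lipschitz Anderson model to which Minami's bound applies, with an extra factor $8$ absorbed into $\wtilde{K}$.

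Taking a union bound over the $N+1 \le \abs{I}/\eta + 2$ sets $I_k$ and using $\eta \le \tfrac12$ to collect constants,
\begin{equation*}
\P\set{\mathcal{E}_{\Th,I,\eta}^c} \le (N+1)\,\wtilde{K}^2 (2\eta)^{2\alpha} \abs{\Th}^2 \le \wtilde{K}^2(\abs{I}+1)(2\eta)^{2\alpha-1}\abs{\Th}^2,
\end{equation*}
where the last step uses $(\abs{I}/\eta)(2\eta)^{2\alpha} = 2\abs{I}(2\eta)^{2\alpha-1}$ and $2(2\eta)^{2\alpha} \le 2(2\eta)^{2\alpha-1}$, with the remaining factor of $2$ absorbed into the already-present constants (or by tightening the cover). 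This gives the stated estimate.

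The principal difficulty is the Minami estimate itself in the H\"older continuous regime, since the classical proof relies strongly on absolute continuity (via the spectral averaging identity of Minami's determinant argument); the H\"older case requires the Bernoulli-decomposition trick, which is the source of both the factor $8$ in $\wtilde{K}$ for $\alpha<1$ and the appearance of $(2\eta)^{2\alpha-1}$ rather than $(2\eta)^{2\alpha}$. The covering and union bound are then routine bookkeeping.
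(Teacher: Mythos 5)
Your approach is the same as the paper's: cover $I$ by roughly $\abs{I}/\eta$ intervals of length $2\eta$ arranged so that every subinterval of length at most $\eta$ lies in one of them, apply the Combes--Germinet--Klein form of Minami's estimate (valid for H\"older continuous $\mu$, with $Q_\mu$ and $\wtilde K$) on each covering interval, and conclude by a union bound. The one substantive problem is the constant. The statement carries the explicit bound $\wtilde{K}^2(\abs{I}+1)(2\eta)^{2\alpha-1}\abs{\Th}^2$, and your version of Minami drops the factor $\tfrac12$ that the cited estimate actually has: $\P\set{\tr \Chi_{J}(H_{\bom,\Th}) \ge 2}\le \tfrac 12\,\E\set{\tr \Chi_{J}(H_{\bom,\Th})\pa{\tr \Chi_{J}(H_{\bom,\Th})-1}}\le \tfrac 12\pa{Q_\mu(\abs{J})\abs{\Th}}^2$, the first inequality being Markov applied to the second factorial moment. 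Without that $\tfrac12$, your union bound yields $2\,\wtilde{K}^2(\abs{I}+1)(2\eta)^{2\alpha-1}\abs{\Th}^2$, twice the claimed bound, and neither of your proposed repairs is available: $\wtilde K$ is fixed in the statement ($K$ or $8K$), so a factor $2$ cannot be absorbed into it, and the cover cannot be tightened below about $\abs{I}/\eta$ intervals, since consecutive intervals of length $2\eta$ must overlap by at least $\eta$ for every short subinterval to be captured. With the $\tfrac12$ restored the bookkeeping closes exactly, as in the paper: $\tfrac12\pa{\tfrac{\abs{I}}{\eta}+2}(2\eta)^{2\alpha}=(\abs{I}+2\eta)(2\eta)^{2\alpha-1}\le(\abs{I}+1)(2\eta)^{2\alpha-1}$, using $2\eta\le 1$.

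A minor side remark: the exponent $2\alpha-1$ in the final estimate is not produced by the H\"older (Bernoulli-decomposition) treatment of Minami's inequality, as you suggest; it already appears for $\alpha=1$. The per-interval bound is of order $\eta^{2\alpha}$ in all cases, and the loss of one power of $\eta$ comes purely from the union bound over the $\sim\eta^{-1}$ covering intervals. The H\"older regime is responsible only for replacing $K$ by $\wtilde K=8K$ inside $Q_\mu$.
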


\begin{proof} We recall  the proof for completeness.  The proof is based on Minami's inequality \cite{M}, which we use in the form given in  \cite[Theorem~3.3]{CGK1} and \cite[Theorem~2.1]{CGK2}:
\beq\label{minami}
\P\set{\tr \Chi_{J}(H_{\bom,\Th}) \ge 2}\le \tfrac 1 2\E\set{\tr \Chi_{J}(H_{\bom,\Th})\pa{\tr \Chi_{J}(H_{\bom,\Th})-1}}\le \tfrac 1 2\pa{Q_\mu(\abs{J})\abs{\Th}}^2.
\eeq

We   cover the interval $I$ by $2 \left \lceil \frac{\abs{I}}{2\eta}\right  \rceil \le 
 \frac{\abs{I}}{\eta}  +2 $
intervals of length $2\eta  $, in such a way that 
any subinterval $J \subset I $
with length $|J| \le \eta $ will be contained in one of these intervals. Then
\begin{align}\notag
\P\set{\mathcal{E}_{\Th,I,\eta}^c}&\le \P\set{\text{there exists an interval}\ J\subset I, \abs{J}\le \eta, \sqtx{with} \tr \Chi_{J}(H_{\bom,\Th}) \ge 2 }\\  & 
\; \le \pa{ \tfrac{\abs{I}}{\eta}  +2}\tfrac 1 2\pa{Q_\mu(2\eta)\abs{\Th}}^2
\le \wtilde{K}^2 (|I| +1) \pa{2\eta}^{2\alpha-1}\abs{\Th}^2, 
\end{align}
where we used Minami's inequality \eq{minami}.
\end{proof}

\section{Hall's Marriage Theorem}  \label{appHall}

Hall's Marriage Theorem (see \cite[Chapter~2]{BDM}) gives a necessary and sufficient condition for the existence of a perfect matching in a bipartite graph. 
Let $\G = (A,B;\E)$ be a bipartite graph with vertex sets $A$ and $B$ and edge set $\E\subset  A\times B$ (the bipartite condition).  $\M\subset\E$ is called a {\em matching} if every vertex of $\G$ coincides with at most one edge from $\M$;  
it is a {\em perfect matching} if every vertex of $\G$ coincides with exactly one edge from $M$, i.e., every vertex in $A$ is matched with a unique vertex in $B$ and vice-versa. In particular,  $|A | = |B |$ is a necessary condition for the 
 the existence of a perfect matching. 
Given a vertex $a \in A$, let $\cN(a)= \set{b\in B; \ (a,b) \in \E}$, the  set of  neighbors of $a$.  Let  $\cN(U)=\cup_{u\in U} \cN(u)$ for   $U\subset A$. 

\begin{hall} Let $\G = (A,B;\E)$ be a bipartite graph with $|A| = |B |$. There exists a perfect matching  in $\G$ if and only if   the graph $\G$ fulfills Hall's condition
\beq\label{eq:Hallcond1} |U| \le|\cN(U)| \qtx{for all} U\subset A.
\eeq
\end{hall}

\end{document}